\ifdefined\pdfobjcompresslevel
\fi

\documentclass{article}

\usepackage{microtype}
\usepackage{graphicx}
\usepackage{subcaption}
\usepackage{booktabs} 

\usepackage{hyperref}

\usepackage[preprint]{icml2026}

\usepackage{amsmath}
\usepackage{amssymb}
\usepackage{mathtools}
\usepackage{amsthm}

\usepackage[capitalize,noabbrev]{cleveref}

\theoremstyle{plain}
\newtheorem{theorem}{Theorem}[section]
\newtheorem{proposition}[theorem]{Proposition}
\newtheorem{lemma}[theorem]{Lemma}

\theoremstyle{definition}

\theoremstyle{remark}

\usepackage{array}
\usepackage{enumitem}
\usepackage[table]{xcolor}
\usepackage{tcolorbox}
\usepackage[T1]{fontenc}
\usepackage{inconsolata}
\usepackage{etoc}
\usepackage{xurl}

\def\MethodName{\textsc{MMGuard}}

\tcbuselibrary{skins}
\tcbset{
    promptstyle/.style={
        colback=gray!5,
        colframe=black,
        fonttitle=\bfseries,
        boxrule=0.5mm,
        sharp corners,
        enhanced,
        width=\linewidth
    }
}

\usepackage[textsize=tiny]{todonotes}

\icmltitlerunning{To See is Not to Learn: Protecting Multimodal Data from Unauthorized Fine-Tuning of Large Vision-Language Model}

\begin{document}

\twocolumn[
  \icmltitle{To See is Not to Learn: Protecting Multimodal Data from Unauthorized Fine-Tuning of Large Vision-Language Model}

  \begin{icmlauthorlist}
    \icmlauthor{Chengshuai Zhao}{asu}
    \icmlauthor{Zhen Tan}{asu}
    \icmlauthor{Dawei Li}{asu}
    \icmlauthor{Zhiyuan Yu}{tamu}
    \icmlauthor{Huan Liu}{asu}
  \end{icmlauthorlist}

  \icmlaffiliation{asu}{School of Computing and Augmented Intelligence, Arizona State University, Tempe, AZ, USA}
  \icmlaffiliation{tamu}{Department of Computer Science and Engineering, Texas A\&M University, College Station, TX, USA}

  \icmlcorrespondingauthor{Chengshuai Zhao}{czhao93@asu.edu}
  \icmlcorrespondingauthor{Zhen Tan}{ztan36@asu.edu}
  \icmlcorrespondingauthor{Dawei Li}{daweili5@asu.edu}
  \icmlcorrespondingauthor{Zhiyuan Yu}{zhiyuanyu@tamu.edu}
  \icmlcorrespondingauthor{Huan Liu}{huanliu@asu.edu}

  \icmlkeywords{Large Vision-Language Models, Multimodal Data Protection, Unlearnable Examples, Fine-Tuning Security, Adversarial Perturbation, Data Privacy}
  \vskip 0.3in
]



\printAffiliationsAndNotice{}  

\begin{abstract}
The rapid advancement of Large Vision-Language Models (LVLMs) is increasingly accompanied by unauthorized scraping and training on multimodal web data, posing severe copyright and privacy risks to data owners. Existing countermeasures, such as machine unlearning and watermarks, are inherent post-hoc approaches that act only after intellectual property infringement has already occurred. In this work, we propose \MethodName{} to empower data owners to proactively protect their multimodal data against unauthorized LVLM fine-tuning. \MethodName{} generates unlearnable examples by injecting human-imperceptible perturbations that actively exploit the learning dynamics of LVLMs. By minimizing the training loss, the perturbation creates an optimization shortcut, causing the model to overfit to the noise and thereby degrading downstream performance when the perturbation is absent during inference. To further strengthen this defense, \MethodName{} introduces a cross-modal binding disruption, strategically shifting LVLM attention to enforce a spurious correlation between the noise and the training target with theoretical guarantees. Enhanced by an ensemble learning strategy for cross-model transferability, \MethodName{} is evaluated against nine open-source LVLMs across six datasets. Our comprehensive results demonstrate effective, stealthy, and robust protection under white-box, gray-box, and black-box threat models, establishing a mechanistic advantage in proactively defending against aggressive fine-tuning exploitation. Our code is available at GitHub:~\href{https://github.com/ChengshuaiZhao0/MMGuard}{\nolinkurl{https://github.com/ChengshuaiZhao0/MMGuard}}.
\end{abstract}

\section{Introduction}
\label{sec:introduction}

Large Vision-Language Models (LVLMs)~\cite{singh2025openai,comanici2025gemini} have rapidly become a central component of modern artificial intelligence systems. By aligning visual content with natural-language instructions and responses, LVLMs support a wide range of multimodal tasks, including visual question answering~\cite{bai2025qwen3}, image captioning~\cite{grattafiori2024llama}, document understanding~\cite{zhu2025internvl3}, and multimodal instruction following~\cite{liu2023visual}. This success is largely fueled by an ever-growing appetite for image-text data: contemporary LVLMs are pretrained and fine-tuned on hundreds of millions of multimodal pairs scraped indiscriminately from the open web~\cite{schuhmann2022laion,radford2021learning}, which has created an urgent threat for the original owners of multimodal content. OpenAI was reported to have transcribed and ingested more than one million hours of YouTube videos without the consent of creators or the platform to obtain multimodal training material for {GPT-4}, with similar practices reported at Google and Meta and a class action subsequently filed by content creators~\cite{nyt2024aiharvest,youtuber2024classaction}. The resulting harms are concrete: copyright infringement, leakage of personal and commercial information embedded in images and captions, and the loss of control over how one's creative work is repurposed by downstream models.

\begin{figure}[t]
\centering
\includegraphics[width=0.85\linewidth]{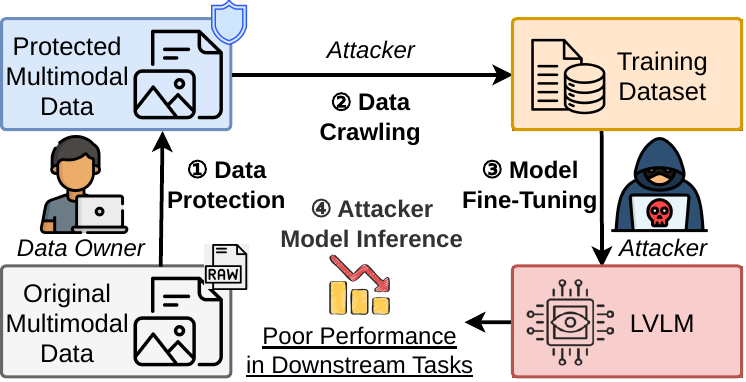}
\caption{Protect multimodal data from unauthorized fine-tuning of LVLM.}
\label{fig:overview}
\end{figure}

Existing protection mechanisms are insufficient for this setting. Legal takedowns, opt-out policies, watermarking~\cite{zhu2018hidden}, and model-output detection can help identify or respond to misuse, but they are inherently post-hoc: they act only after the data has already been scraped or after LVLMs have already been trained. Model-level remedies such as machine unlearning~\cite{bourtoule2021machine} require cooperation from the model provider and are difficult to supervise externally. Recent efforts have attempted to generate \emph{unlearnable examples}~\cite{huang2021unlearnable,fu2022robust,ren2022transferable} for unauthorized model training by injecting bounded, imperceptible perturbations. By minimizing the training loss, the perturbation creates an optimization shortcut, causing the model to overfit to the noise and thereby degrading downstream performance when the perturbation is absent during inference. Despite their promise, most methods are designed for single modality (e.g., image or audio) or specific tasks (e.g., classification, verification, or retrieval), while unauthorized LVLM fine-tuning presents a substantially different attack surface.

Protecting multimodal data against LVLM fine-tuning introduces several key challenges. First, image-text samples contain redundant sources of supervision. Protecting only one modality leaves enough residual information in other modalities for the model to learn generalizable knowledge~\cite{yao2024data}. Second, LVLM fine-tuning optimizes an autoregressive generative objective, in which the model captures information from each part of the input. Moreover, LVLMs have prior knowledge of image-text-response semantics from pretraining. LVLMs can still rely on the same effective associations to exploit the protected data that undermines the protection. Third, the defender usually does not know the attacker's exact LVLM architecture, data preprocessing pipeline, and fine-tuning recipe. The protection must consider practical black-box settings. Finally, multimodal data protection is a multi-objective constrained optimization problem; each perturbation has distinct constraints and optimization dynamics, and the attacker and defender have different objectives, with the overall objective a complex balance.

Our key insight is that effective multimodal data protection should not merely perturb all modalities. \emph{It should also shape how LVLMs capture visual evidence, textual context, and generate target responses during fine-tuning.} If the protected samples encourage the LVLMs to rely on a non-semantic spurious association between the defense signal (i.e., perturbation) and the response, then standard fine-tuning can achieve low loss on the protected training data while learning behavior that does not transfer to clean evaluation data.

In this paper, we propose \MethodName{}, a proactive data-centeric framework that protects multimodal image-text data from unauthorized LVLM fine-tuning. Given a clean image-text-response sample, \MethodName{} releases a perturbed image together with a text input containing a short inserted trigger, while keeping the human-facing target response unchanged. The framework comprises five key designs: (\emph{i}) It constructs an image-unlearnable perturbation via projected gradient descent (PGD)~\cite{madry2018towards} over a differentiable approximation of the LVLM pipeline. (\emph{ii})~It formulates text unlearnable protection through gradient-guided search under discrete readability constraints, guaranteed with a smoothness-based update optimality. (\emph{iii})~It introduces \emph{cross-modal binding disruption}, which steers the model away from genuine semantic bindings and toward protection-specific multimodal shortcuts with theoretical guarantees. (\emph{iv})~It optimizes the perturbation over an ensemble of surrogate LVLMs to improve transferability across unknown LVLM models and processing pipelines. (\emph{v})~It leverages a multi-objective alternative optimization strategy to balance the effectiveness, stealthiness, and robustness of the protection.

We conduct comprehensive experiments across six publicly available multimodal datasets and nine open-source LVLMs. Our evaluation covers white-box, gray-box, and black-box threat models. The results show that \MethodName{} consistently degrades downstream performance of LVLMs fine-tuned on protected data while preserving the perceptual quality of the released samples. We further evaluate robustness against aggressive LVLM attackers, including diverse data preprocessing, LVLM fine-tuning strategies, and data mixing with clean public samples. Moreover, analyses of each component, hyperparameters, and attention behavior confirm that \MethodName{} provides a mechanistic advantage in protecting against unauthorized fine-tuning of LVLM.

Our study makes the following contributions:
\begin{itemize}[leftmargin=*]
    \item We identify and formalize unauthorized LVLM fine-tuning on scraped multimodal data as a data protection problem, considering a practical threat model where defenders can only modify their own public image-text data before release.
    \item  To the best of our knowledge, \MethodName{} is the \emph{first data-centric protection framework} that proactively defends multimodal data against unauthorized LVLM fine-tuning. It perturbs both the image and text as a coupled multimodal protection tailored to the LVLM autoregressive objective.
    \item We introduce cross-modal binding disruption, a mechanism that shifts LVLM learning dynamics toward planted perturbations and enforces spurious associations between protection and target responses. We further provide a theoretical analysis explaining why this mechanism degrades generalization on clean downstream tasks.
    \item We design an ensemble-based perturbation optimization strategy to improve transferability across unknown LVLM attackers, enabling effective and robust protection under white-box, gray-box, and black-box scenarios.
    \item We evaluate \MethodName{} across six datasets and nine open-source LVLMs, demonstrating its effectiveness, transferability, and practicality. Further analysis confirms its robustness against adaptive attacks and provides mechanistic advantages for disrupting unauthorized LVLM fine-tuning.
\end{itemize}

\section{Related Work}
\label{sec:related}

\noindent\textbf{Defenses Against Generative Models.} Existing defenses against unauthorized generative models fall into model-centric and data-centric approaches. Model-side methods include AI-generated content detection~\cite{nguyen2024passive} and watermarking for images~\cite{zhu2018hidden} and language outputs~\cite{kirchenbauer2023watermark}, machine unlearning~\cite{bourtoule2021machine,liu2025rethinking}, and membership inference for post-hoc auditing~\cite{carlini2022membership}; all are inherently \emph{reactive}, acting only after data have been scraped or models trained, and typically require cooperation from the model provider. Data-side methods instead \textit{proactively} modify data before release. They add perturbations to counteract a wide range of threats across modalities such as style mimicry~\cite{shan2023glaze,shan2024nightshade}, voice cloning~\cite{yu2023antifake}, and personalized fine-tuning~\cite{van2023anti,liu2024metacloak}. While these methods raise the cost of misuse, they often degrade perceptual quality or remain fragile under purification and adversarial training~\cite{foerster2025lightshed,wang2025bridgepure}.

\noindent\textbf{Unlearnable Example Methods.} Unlearnable examples (UEs) offer a proactive, data-centric alternative by injecting bounded perturbations that act as ``shortcuts,'' causing models trained on protected data to fail on clean test data. Pioneered for image classification via error-minimizing noise~\cite{huang2021unlearnable}, UEs have been refined for robustness against adversarial training~\cite{fu2022robust,liu2024stable}, transferability across architectures~\cite{ren2022transferable,li2025versatile}, and label-agnostic settings via CLIP surrogates~\cite{zhang2023unlearnable}, and broadened to text~\cite{li2023make}, graphs~\cite{liu2023unlearnable}, audio/speech~\cite{gokul2024poscuda,zhang2023mitigating}, image segmentation~\cite{sun2024unseg}, and diffusion-based generation~\cite{zhao2023eudp,li2025styleguard}. For multimodal contrastive pretraining, MEM~\cite{liu2024multimodal} extends error-minimizing noise to image-caption pairs to mislead CLIP-style models. Recent studies caution that UEs can be partially circumvented through relearning~\cite{dang2023unlearnable}, pretrained backbones~\cite{li2026priors}, or diffusion-based purification~\cite{wang2025bridgepure}, motivating the need for stronger designs. Despite this progress, no prior work targets unauthorized fine-tuning of LVLMs, which differs fundamentally in objective, input space, and learning dynamics. \MethodName{} fills this gap with the first proactive, data-centric protection tailored to LVLM fine-tuning.

\section{Background and Preliminaries}
\label{sec:preliminaries}
This section presents the notation, problem formulation, and technical background used throughout the paper. We first formulate multimodal data protection as an optimization problem over image-text datasets in Sec.~\ref{sec:problem_formulation}. We then review the standard formulation of unlearnable examples in Sec.~\ref{sec:unlearnable_examples}, which serves as the foundation for our defense. Finally, we introduce the common architecture and fine-tuning paradigm of Large Vision-Language Models in Sec.~\ref {sec:lvlm}, highlighting the mechanism that motivates our design.

\subsection{Problem Formulation}
\label{sec:problem_formulation}

Let $\mathcal{X}$ denote the image space and $\mathcal{T}$ denote natural-language space. A multimodal dataset is given by 

\begin{equation}
    \mathcal{D}
    =
    \{(x_i,t_i,y_i)\}_{i=1}^{n},
\end{equation}
where $x_i\in\mathcal{X}$ is an image, $t_i\in\mathcal{T}$ is a textual input (e.g., question or instruction), and $y_i\in\mathcal{Y}$ is a target output (e.g., answer or response). 

A large vision language model is denoted by
\begin{equation}
    f_{\theta}:\mathcal{X}\times\mathcal{T}\rightarrow\mathcal{Y},
\end{equation}
where $\theta=(\theta_{\mathrm{frz}}, \theta_{\mathrm{tr}})$ denotes the model parameters, consisting of frozen pretrained parameters~$\theta_{\mathrm{frz}}$ from the base LVLM checkpoint and trainable parameters~$\theta_{\mathrm{tr}}$ that the attacker may select for fine-tuning. Given a multimodal training set $\mathcal{D}$, an unauthorized attacker seeks to obtain:
\begin{equation}
    \theta^{\star}(\mathcal{D})
    \in
    \arg\min_{\theta\in\Theta}
    \mathcal{L}_{\mathrm{train}}(\theta;\mathcal{D}),
\end{equation}
where $\mathcal{L}_{\mathrm{train}}$ is the empirical training objective. For simplicity, we use 'fine-tune' and 'train' interchangeably in the paper.

The goal of multimodal data protection is to construct a protected dataset
\begin{equation}
    \widetilde{\mathcal{D}}_{\phi}
    =
    \mathcal{G}_{\phi}(\mathcal{D})
    =
    \{(\widetilde{x}_i,\widetilde{t}_i,y_i)\}_{i=1}^{n},
\end{equation}
where $\mathcal{G}_{\phi}$ is a protection map parameterized by $\phi$. The protection must remain close to each original sample under modality-specific perceptual and semantic budgets:
\begin{equation}
\label{eq:protection-budget}
    \mathcal{B}(\mathcal{D})
    =
    \Bigl\{
    \widetilde{\mathcal{D}}_{\phi}:\;
    d_x(\widetilde{x}_i,x_i)<\epsilon_x,
    d_t(\widetilde{t}_i,t_i)<\epsilon_t,\;
    \forall i\in[n]
    \Bigr\},
\end{equation}
where $d_x$ and $d_t$ measure visual and textual change, respectively, with budgets $\epsilon_x$ and $\epsilon_t$.

The defender aims to ensure that models trained on $\widetilde{\mathcal{D}}$ perform poorly on clean downstream data. Let $\mathcal{D}_{\mathrm{eval}}$ be a clean downstream evaluation dataset, and let $\mathcal{L}_{\mathrm{eval}}$ measure the corresponding evaluation loss. We formalize the defender's objective as the following bilevel optimization problem:
\begin{equation}
\label{eq:problem-formulation}
\begin{aligned}
    \max_{\phi\in\Phi}
    \quad&
    \mathcal{L}_{\mathrm{eval}}
    \bigl(
        \theta^{\star}(\widetilde{\mathcal{D}}_{\phi});
        \mathcal{D}_{\mathrm{eval}}
    \bigr)\\
    \mathrm{s.t.}
    \quad&
    \widetilde{\mathcal{D}}_{\phi}
    =
    \mathcal{G}_{\phi}(\mathcal{D}),
    \quad
    \widetilde{\mathcal{D}}_{\phi}
    \in
    \mathcal{B}(\mathcal{D}),\\
    &
    \theta^{\star}(\widetilde{\mathcal{D}}_{\phi})
    \in
    \arg\min_{\theta\in\Theta}
    \mathcal{L}_{\mathrm{train}}(\theta;\widetilde{\mathcal{D}}_{\phi}).
\end{aligned}
\end{equation}
Equivalently, the defender seeks a small, human-imperceptible transformation of the dataset that degrades the utility of models fine-tuned by an unauthorized trainer.

\subsection{Unlearnable Examples}
\label{sec:unlearnable_examples}
Unlearnable examples are training samples designed to remain semantically useful to human users while inhibiting unauthorized models from learning meaningful representations from them~\cite{huang2021unlearnable}. In contrast to conventional adversarial examples, which perturb inputs to cause mistakes at inference time, unlearnable examples intervene prior to the training stage by modifying the data. Their core mechanism is to introduce a subtle, hard-to-detect perturbation that induces a spurious shortcut the model can exploit to reduce the training loss, thereby discouraging it from learning the genuine input-output relationship. Consequently, models trained on such data fit the protected samples but generalize poorly to clean inputs.

The standard unlearnable-example objective is often written as a bilevel \emph{min-min} problem. Let $\delta=\{\delta_i\}_{i=1}^{n}$ denote bounded perturbations applied to the original data before release, and let $\mathcal{D}_{\delta}=\mathcal{G}_{\delta}(\mathcal{D})$ be the protected dataset. For instance, $\mathcal{D}_{\delta}=\{(x_i+\delta_i,t_i,y_i)\}_{i=1}^{n}$ in the common image-only case. The defender constructs $\delta$ by solving:
\begin{equation}
\label{eq:unlearnable-minmin}
\begin{aligned}
    \min_{\delta}
    \quad&
    \mathcal{L}_{\mathrm{train}}
    \bigl(
        \theta^{\star}(\mathcal{D}_{\delta});
        \mathcal{D}_{\delta}
    \bigr)\\
    \mathrm{s.t.}
    \quad&
    \mathcal{D}_{\delta}
    =
    \mathcal{G}_{\delta}(\mathcal{D}),
    \quad
    \mathcal{D}_{\delta}\in\mathcal{B}(\mathcal{D}),\\
    &
    \theta^{\star}(\mathcal{D}_{\delta})
    \in
    \arg\min_{\theta\in\Theta}
    \mathcal{L}_{\mathrm{train}}(\theta;\mathcal{D}_{\delta}).
\end{aligned}
\end{equation}
The inner minimization describes the attacker's training process: given the released protected data, the attacker optimizes the model parameters to fit that data. The outer minimization describes how the defender chooses the perturbation to ensure that the training process overfits the protected distribution.

\subsection{Large Vision-Language Models}
\label{sec:lvlm}
Modern LVLMs extend pretrained language models with a visual front end that converts images into tokens compatible with the language-model embedding space. Given an image-text input $(x_i,t_i)$, the image is first standardized (e.g., resizing and pixel quantization) by an image processor $g_x$, split into image patches by a patching function $\pi$, and encoded into visual features $H_i^x$ by a visual encoder $E_x$:
\begin{equation}
\label{eq:lvlm-image-features}
    H_i^x = E_x(\pi(g_x(x_i))),
\end{equation}
The visual features are then aligned with the language-model hidden space through a modality projector $P_{x}$:
\begin{equation}
    Z_i^x = P_{x}(H_i^x).
\end{equation}
In parallel, the textual input is tokenized by $\operatorname{Tok}$ and mapped to embeddings by $E_t$. 
\begin{equation}
\label{eq:lvlm-token-construction}
    Z_i^t = E_t(\operatorname{Tok}(t_i))
\end{equation}
The visual and textual tokens are then interleaved into a joint multimodal sequence $S_i$ according to a predefined template:
\begin{equation}
    S_i = \operatorname{Template}(Z_i^x,Z_i^t).
\end{equation}
which will be processed by the downstream language model.

During supervised fine-tuning, LVLM is optimized by minimizing the autoregressive negative log-likelihood of the target response conditioned on the image and textual input:
\begin{equation}
\label{eq:lvlm-finetuning}
    \mathcal{L}_{\mathrm{train}}(\theta;\mathcal{D})
    =
    \frac{1}{n}
    \sum_{i=1}^{n}
    \sum_{j=1}^{|y_i|}
    -\log
    p_{\theta}
    \bigl(
        y_{i,j}
        \mid
        y_{i,<j}, x_i,t_i
    \bigr).
\end{equation}

\section{Threat Model}
\label{sec:threat-model}
We consider two parties: an \emph{attacker}, who collects multimodal web data to fine-tune LVLMs, and a \emph{defender}, who owns image-text data and wishes to publish it online while preventing its unauthorized use as effective fine-tuning data.

\noindent\textbf{Attacker Objectives.}
The attacker starts from an existing pretrained LVLM checkpoint and fine-tunes it on multimodal data scraped from the web. Their objective is to maximize the model utility of downstream task performance while disregarding potential copyright infringement and privacy risks.

\noindent\textbf{Attacker Capabilities.}
The attacker possesses the knowledge and expertise required to optimize LVLMs using a broad range of fine-tuning techniques and to evaluate their performance. In our study, we consider two types of attackers: (\emph{i}) a naive attacker who follows standard fine-tuning pipelines, such as LoRA, and (\emph{ii}) an adaptive attacker who is aware of \MethodName{} and seeks to circumvent its protection. The adaptive attacker may employ diverse strategies, including data transformations and data mixing. We discuss these strategies and our robustness evaluation in detail in Sec.~\ref{sec:adaptive}.

\noindent\textbf{Defense Objectives.}
The defender may be an individual artist, photographer, news outlet, private dataset curator, or any data owner who seeks to prevent their content from being effectively exploited by LVLMs. The defender has two primary objectives. First, protected samples should disrupt unauthorized fine-tuning: when incorporated into the attacker's training set, they should degrade the model's downstream performance. Second, the protected content should preserve high perceptual and semantic fidelity, ensuring that it remains useful for legitimate online publication and human consumption.

\noindent\textbf{Defense Assumptions.}
The defender has full access to their own image-text pairs before publication and can modify them by adding imperceptible protective perturbations. However, the defender has no access to the attacker's full training corpus, fine-tuning recipe, hyperparameters, or model weights, and cannot directly interfere with the attacker's training process. We consider three levels of defender knowledge about the attacker's target model. In the \emph{white-box} setting, the defender knows the exact LVLM that the attacker will fine-tune and optimizes the protection directly against it. This setting provides an upper bound on defense effectiveness, although it is rarely realistic in practice. In the \emph{gray-box} setting, the defender knows only partial information, such as the model family or architecture, but not the exact checkpoint. In the \emph{black-box} setting, the defender has no concrete knowledge of the attacker's model and must rely on architecture-agnostic protection and cross-family transferability. This setting is the most realistic and challenging scenario.

\section{Our Approach \MethodName{}}
\label{sec:method}

This section presents \MethodName{}, our proposed framework for proactive multimodal data protection against unauthorized LVLM fine-tuning. We first construct a continuous image perturbation in Sec.~\ref{sec:image-protection} and a discrete textual trigger in Sec.~\ref{sec:text-protection} to form a coupled multimodal protection that eliminates residual learnable signal in either modality. We then introduce cross-modal binding disruption in Sec.~\ref{sec:binding-disruption} to redirect LVLM learning toward protection-specific shortcuts, preventing the autoregressive objective from bypassing the protection through pretrained knowledge. We further optimize the protection over an ensemble of surrogate LVLMs in Sec.~\ref {sec:ensemble-protection} to improve the transferability of protection to black box scenarios. We finally cast the framework as a constrained multi-objective bilevel optimization in Sec.~\ref{sec:joint-optimization} to reconcile heterogeneous continuous and discrete variables under modality-specific budgets and the asymmetric objectives of defender and attacker.

\begin{figure*}[t]
\centering
\includegraphics[width=1\linewidth]{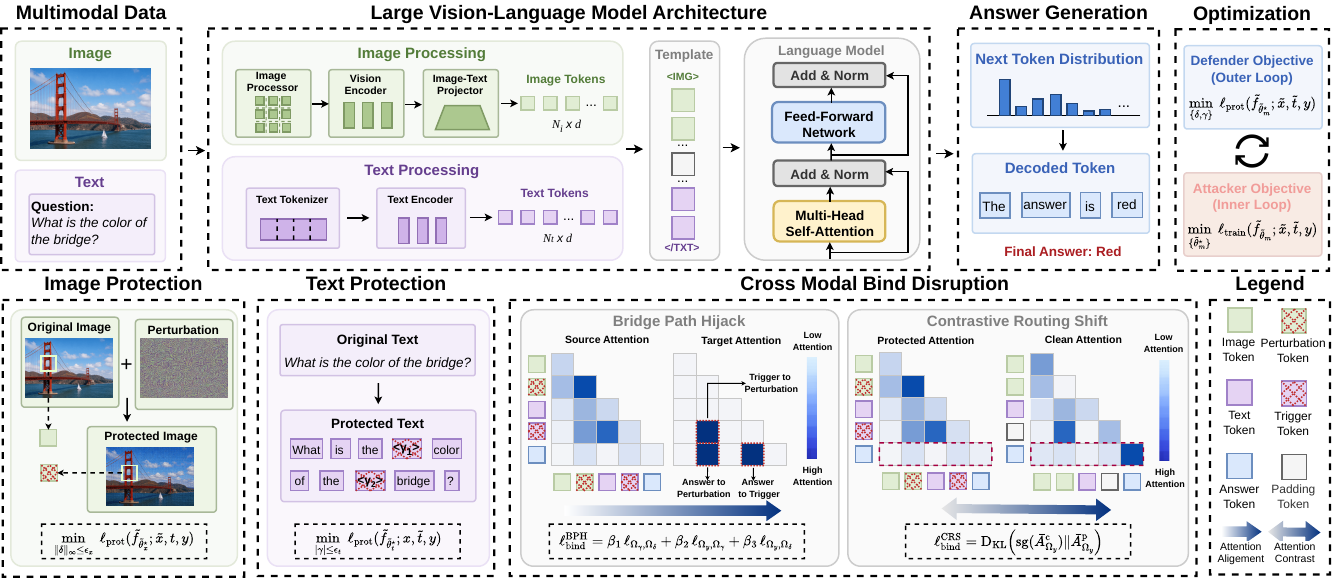}
\caption{Overview of \MethodName{}. The defender generates protected multimodal examples by coupling image-unlearnable perturbations with text-unlearnable triggers and using cross-modal binding disruption to steer LVLM attention toward protection-specific shortcuts that fail to transfer to the downstream task.}
\label{fig:framework}
\end{figure*}

\subsection{Unlearnable Image Protection}
\label{sec:image-protection}
We formulate the unlearnable image protection as two key components: a differentiable image processor that allows gradient flow from the surrogate LVLM back to pixel-space perturbations, and projected gradient descent optimization that updates perturbations to protect the image.

\noindent\textbf{Differentiable Image Processor.} As described in the Sec.~\ref{sec:lvlm}, the image processor $g_x$ contains non-differentiable operations that create an obstacle for directly optimizing the image perturbation $\delta_i$ in the raw pixel space. This obstacle undermines the defender's ability to create effective, stealthy, unlearnable examples. To address it, we construct a differentiable surrogate processor $\tilde{g}_x$ that approximates the practical processor $g_x$: $\tilde{g}_x\approx g_x$. Specifically, we handle the non-differentiable functions by three strategies. (\emph{i})~For discretization steps (e.g., pixel quantization), we keep the forward value used by the practical processor but copy gradients through a continuous proxy. (\emph{ii})~For geometric transformations (e.g., resizing), we use differentiable substitutes that approximate the same transformation during backpropagation. (\emph{iii})~For layout decisions (e.g., resolution selection), we follow the same deterministic rules as $g_x$ and treat the resulting layout as fixed metadata for the current image. Operations that are already differentiable are kept unchanged. Together, these choices give the defender a differentiable path from the LVLM loss back to pixel-space perturbations while keeping the forward representation close to that in the surrogate model.

\noindent\textbf{Projected Gradient Descent Optimization.}
For each clean image $x_i$, we construct the protected image $\tilde{x}_i$ by adding a human-imperceptible perturbation $\delta_i$: $\tilde{x}_i = x_i+\delta_i$, where $\delta_i$ is constrained to lie in the image-side feasible set $\mathcal{C}^{\delta}_{i}$:
\begin{equation}
\label{eq:image_constrain}
    \mathcal{C}^{\delta}_{i}
    =
    \{\delta:\; x_i+\delta\in\mathcal{X},\ \|\delta\|_{\infty}\le\epsilon_x\}.
\end{equation}

Formally, the unlearnable image protection solves the following bilevel optimization problem:
\begin{align}
\min_{\{\delta_i\}}
&\quad
\frac{1}{n}
\sum_{i=1}^{n}
\ell_{\mathrm{prot}}
\bigl(
\tilde{f}_{\tilde{\theta}_x^\star};
\tilde{x}_i,t_i,
y_i
\bigr)
\nonumber\\
\mathrm{s.t.}
&\quad
\tilde{x}_i = x_i + \delta_i,\quad \delta_i\in\mathcal{C}^{\delta}_{i},
\nonumber\\
&\quad
\tilde{\theta}_x^\star
\in
\arg\min_{\tilde{\theta}}
\sum_{i=1}^{n}
\ell_{\mathrm{train}}
\bigl(
\tilde{f}_{\tilde{\theta}};
\tilde{x}_i,t_i,
y_i
\bigr).
\label{eq:image-only-objective}
\end{align}
where $\ell_{\mathrm{prot}}$ is the defender's protection objective (e.g., the LVLM training loss), later defined in Eq.~\eqref{eq:binding-loss}, and $\ell_{\mathrm{train}}$ denotes the sample-wise LVLM training loss.

We optimize $\delta_i$ by projected gradient descent:
\begin{equation}
\delta_i \leftarrow
\Pi_{\mathcal{C}^{\delta}_{i}}
\!\left(
\delta_i - \alpha_x\, \mathrm{sign}
\bigl(\nabla_{\delta_i}\ell_{\mathrm{prot}}\bigr)
\right).
\label{eq:image-pgd}
\end{equation}
where $\Pi_{\mathcal{C}^{\delta}_{i}}$ denotes projection back onto the per-image feasible set, and $\alpha_x$ is the step size. The gradient is taken through the differentiable processor $\tilde{g}_x$, the visual encoder $E_x$, the projector $P_x$, and the language model.

\subsection{Unlearnable Text Protection}
\label{sec:text-protection}

Given a clean text input $t_i$, we construct the protected text $\tilde{t}_i$ by inserting a bounded-length trigger $\gamma_i=(\gamma_{i,1},\ldots,\gamma_{i,|\gamma_i|})$: $\tilde{t}_i = \operatorname{Insert}(t_i,\gamma_i)$, where insertion is restricted to textual input positions and does not replace tokens in $t_i$. Each trigger token is selected from an admissible vocabulary $\mathcal{V}_{\mathrm{adm}}\subseteq\mathcal{V}$ (e.g., excluding special tokens, control tokens, and non-linguistic tokens). The trigger length is bounded by the text budget length $\epsilon_t$, so the text-side feasible set for $\gamma_i$ is:
\begin{equation}
\label{eq:text_constrain}
    \mathcal{C}^{\gamma}_{i}
    =
    \{\gamma:\; |\gamma|\le \epsilon_t,\ \gamma_j\in\mathcal{V}_{\mathrm{adm}}\}.
\end{equation}

Analogous to image-side protection, the unlearnable text protection solves a discrete min-min problem:
\begin{align}
\min_{\{\gamma_i\}}
&\quad
\frac{1}{n}
\sum_{i=1}^{n}
\ell_{\mathrm{prot}}
\bigl(
\tilde{f}_{\tilde{\theta}_t^\star};
x_i,\tilde{t}_i,
y_i
\bigr)
\nonumber\\
\mathrm{s.t.}
&\quad
\tilde{t}_i=\operatorname{Insert}(t_i,\gamma_i),
\quad
\gamma_i\in\mathcal{C}^{\gamma}_{i},
\nonumber\\
&\quad
\tilde{\theta}_t^\star
\in
\arg\min_{\tilde{\theta}}
\sum_{i=1}^{n}
\ell_{\mathrm{train}}
\bigl(
\tilde{f}_{\tilde{\theta}};
x_i,\tilde{t}_i,
y_i
\bigr).
\label{eq:text-only-objective}
\end{align}
Thus, insertion defines where the protected text differs from the clean input, while the discrete optimization below changes only the identities of tokens inside the inserted trigger.

\noindent\textbf{Gradient-Based Candidate Insertion.}
We approximate the outer update in Eq.~\eqref{eq:text-only-objective} with a HotFlip-style~\cite{ebrahimi2018hotflip} first-order search over the inserted trigger tokens. Let $E_t$ map vocabulary tokens to embeddings and denote the embedding of token $v$ by $e_v=E_t(v)$. For a trigger position $j$, let $\gamma_i^{[j=v]}$ denote the trigger obtained by setting that inserted position to candidate token $v$. A first-order expansion of the protection loss gives
\begin{equation}
\ell_{\mathrm{prot}}\!\bigl(\gamma_i^{[j=v]}\bigr)
\approx
\ell_{\mathrm{prot}}(\gamma_i)
+
\bigl(e_v-e_{\gamma_{i,j}}\bigr)^{\top}
\nabla_{e_{\gamma_{i,j}}}\ell_{\mathrm{prot}} .
\end{equation}
The corresponding linear score is
\begin{equation}
s_{i,j}(v)
=
\bigl(e_v-e_{\gamma_{i,j}}\bigr)^{\top}
\nabla_{e_{\gamma_{i,j}}}\ell_{\mathrm{prot}} .
\label{eq:hotflip-score}
\end{equation}

\noindent\textbf{Candidate Verification.}
The token with the best first-order score may still be suboptimal after deployment because the HotFlip score is computed in the token-embedding space, whereas the released trigger is ultimately a surface string processed by the LVLM tokenizer. Under byte-pair encoding (BPE) tokenization, decoding a candidate token and inserting it into the text can change neighboring token boundaries through merge or split operations, so the actual re-tokenized sequence may differ from the assumed single-position substitution. To account for this tokenization mismatch, we verify the shortlisted candidates and computing the exact protection loss:

\begin{equation}
v_{i,j}^{\star}
=
\arg\min_{v\in\mathcal{V}^{\mathrm{cand}}_{i,j}}
\ell_{\mathrm{prot}}\!\bigl(\gamma_i^{[j=v]}\bigr).
\label{eq:hotflip-verify}
\end{equation}

This screen-and-verify procedure is provably near-optimal: under standard smoothness, the selected token matches the best admissible substitution up to a small quadratic remainder, which is justified by Lemma~\ref{lem:hotflip-screening} in Appendix~\ref{app:hotflip-screening-proof}.

\subsection{Cross-Modal Binding Disruption}
\label{sec:binding-disruption}
Image and text unlearnable protections discourage the LVLM from learning robust features from either modality, but they do not guarantee a genuine unlearnable shortcut for fine-tuning. In LVLMs, answer tokens are generated by decoder attention over the full multimodal context. Even when both the image and the text are protected, this attention mechanism can still route generation through semantically meaningful evidence that remains available in either modality. Moreover, LVLMs have prior knowledge of image-text-response semantics from pretraining. LVLMs can rely on the effective associations to exploit the protected data that undermines the protection. Therefore, potent protection should not be limited to surface-level perturbation for both modalities; it also shapes how LVLMs capture visual evidence and textual context, and how they generate target responses during fine-tuning. To enable this, \MethodName{} introduces cross-modal binding disruption. By disrupting the normal semantic binding and steering optimization toward protection-specific spurious attention paths, the method encourages fine-tuning to rely on non-transferable perturbation shortcuts.

\noindent\textbf{Attention Mass Distribution.}
For a protected sample $(\tilde x_i,\tilde t_i,y_i)$, the LVLM template assembles a joint token sequence in the language model backbone:
\begin{equation}
    \Omega_i
    =
    \Omega_{x,i}
    \uplus
    \Omega_{t,i}
    \uplus
    \Omega_{\gamma,i}
    \uplus
    \Omega_{y,i},
\label{eq:token-union}
\end{equation}
where $\Omega_{x,i}$, $\Omega_{t,i}$, $\Omega_{\gamma,i}$, and $\Omega_{y,i}$ denote the image tokens, original text tokens, inserted trigger tokens, and answer tokens, respectively. For simplicity, we omit template auxiliary tokens.

Among the image tokens, we further identify the subset most affected by the perturbation. Let $\mathcal{P}_{i,b}$ denote the preprocessed pixel region that corresponds to image token $b\in\Omega_{x,i}$, and define its average perturbation magnitude as
\begin{equation}
    \rho_{i,b}
    =
    \frac{1}{|\mathcal{P}_{i,b}|}
    \sum_{u\in\mathcal{P}_{i,b}}
    |\delta_i(u)|.
\end{equation}
We select perturbation tokens $\Omega_{\delta,i}$ that are most affected in the image tokens based on a ratio $\tau_{\delta}\in(0,1]$:
\begin{equation}
    \Omega_{\delta,i}
    =
    \left\{
    b\in\Omega_{x,i}:\;
    \operatorname{rank}_{\Omega_{x,i}}(\rho_{i,b})
    \le
    \left\lceil
    \tau_{\delta}|\Omega_{x,i}|
    \right\rceil
    \right\},
\label{eq:perturbation-token-set}
\end{equation}
where $\operatorname{rank}_{\Omega_{x,i}}(\cdot)$ orders perturbation scores in descending order, with ties broken deterministically.

Let $A^{(k,h)}\!\in\![0,1]^{|\Omega|\times|\Omega|}$ denote the attention matrix of head $h$ in layer $k$, where $h\in\mathcal{H}$ and each row is a distribution over key tokens. For a source token set $\mathcal{S}\subseteq\Omega$, we define its head-averaged attention-mass distribution as
\begin{equation}
    \bar{A}_{k}(\mathcal{S})
    =
    \frac{1}{|\mathcal{H}|\,|\mathcal{S}|}
    \sum_{h\in\mathcal{H}}
    \sum_{a\in\mathcal{S}}
    A^{(k,h)}_{a,:}
    \in \Delta^{|\Omega|-1}.
\label{eq:attention-mass}
\end{equation}
This distribution summarizes how much attention mass the queries in $\mathcal{S}$ collectively assign to every key token. For a nonempty target token set $\mathcal{R}\subseteq\Omega$, define the uniform distribution on $\mathcal{R}$, embedded in the simplex over $\Omega$, as
\begin{equation}
    U_{\mathcal{R}}
    =
    \frac{1}{|\mathcal{R}|}\mathbf{1}_{\mathcal{R}}
    \in\Delta^{|\Omega|-1},
    \qquad
    \mathbf{1}_{\mathcal{R}}(b)=
    \begin{cases}
        1, & b\in\mathcal{R},\\
        0, & b\notin\mathcal{R}.
    \end{cases}
\label{eq:uniform-token-distribution}
\end{equation}
That is, $U_{\mathcal{R}}$ assigns equal probability to tokens in the target set and zero probability to all other tokens in the full sequence. We then measure how far the source attention mass is from this reference distribution, averaged over a chosen layer set $\mathcal{K}$:
\begin{equation}
    \ell_{\mathrm{mass}}(\mathcal{S},\mathcal{R})
    =
    \frac{1}{|\mathcal{K}|}
    \sum_{k\in\mathcal{K}}
    \mathrm{D}_{\mathrm{KL}}\!\bigl(U_{\mathcal{R}}\,\|\,\bar{A}_{k}(\mathcal{S})\bigr).
\label{eq:routing-cost}
\end{equation}

Equivalently, $\ell_{\mathrm{mass}}$ is a layer-averaged contrastive loss that raises the attention logits of tokens in $\mathcal{R}$ and suppresses those of other tokens, as justified by Proposition~\ref{prop:routing-contrastive} in Appendix~\ref{app:routing-contrastive-proof}.

\noindent\textbf{Variant 1: Bridge Path Hijack (BPH).}
\label{sec:bph}
Based on the attention-mass loss, we can design various cross-modal binding objectives that create different protection-specific attention shortcuts. Specifically, we consider three distinct paths:

\begin{itemize}[leftmargin=*]
    \item \emph{Trigger-to-perturbation binding} $\ell_{\mathrm{mass}}(\Omega_{\gamma},\Omega_{\delta})$: it ties the text trigger to the perturbation, so that the image and text protections are coupled across modalities.
    \item \emph{Answer-to-trigger shortcut} $\ell_{\mathrm{mass}}(\Omega_y,\Omega_{\gamma})$: it creates a direct attention path from answer tokens to the inserted trigger, so that the answer loss can be reduced by attending to the trigger rather than to the original semantic evidence.
    \item \emph{Answer-to-perturbation shortcut} $\ell_{\mathrm{mass}}(\Omega_y,\Omega_{\delta})$: it preserves a spurious attention shortcut from answer tokens to perturbation that degrades clean downstream generalization.
\end{itemize}

We combine these three paths to form the BPH loss:

\begin{align}
    \ell_{\mathrm{bind}}^{\mathrm{BPH}}
    &=
    \beta_{1}\,\ell_{\mathrm{mass}}(\Omega_{\gamma},\Omega_{\delta})
    +\beta_{2}\,\ell_{\mathrm{mass}}(\Omega_y,\Omega_{\gamma})
    \nonumber\\
    &\quad
    +\beta_{3}\,\ell_{\mathrm{mass}}(\Omega_y,\Omega_{\delta}),
\label{eq:bph}
\end{align}
where $\beta_{1},\beta_{2},\beta_{3}\ge 0$ are hyperparameters that control the relative importance of the three paths.

\begin{theorem}[Effectiveness of BPH]
\label{thm:effectivenss-BPH}
Let $\mathcal{R}_{i}\in\{\Omega_{\gamma,i},\,\Omega_{\delta,i}\}$ be a protection-induced target set that reduces to the empty set under the clean evaluation input $(x_{i},t_{i},y_{i})$. Suppose a model $\theta^{\star}$ achieves answer-token binding $\ell_{\mathrm{mass}}(\Omega_{y,i},\mathcal{R}_{i})\le\eta$ on the protected sample $(\tilde{x}_{i},\tilde{t}_{i},y_{i})$. Denote by $\bar{A}^{\mathrm{p}}_{k,i}(\Omega_y)$ and $\bar{A}^{\mathrm{c}}_{k,i}(\Omega_y)$ the head-averaged answer-token attention-mass distributions of $\theta^{\star}$ on the protected and clean inputs at layer $k$, both extended to a common token universe by zero-padding the positions in $\mathcal{R}_{i}$ on the clean side. Then
\begin{equation}
    \frac{1}{|\mathcal{K}|}\sum_{k\in\mathcal{K}}
    \mathrm{TV}\!\bigl(
        \bar{A}^{\mathrm{p}}_{k,i}(\Omega_y),\,
        \bar{A}^{\mathrm{c}}_{k,i}(\Omega_y)
    \bigr)
    \;\ge\; 1-\sqrt{\eta/2}.
\label{eq:tv-shift-bound}
\end{equation}
The proof is given in Appendix~\ref{app:shortcut-gap-proof}.
\end{theorem}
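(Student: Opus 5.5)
The plan is to compare both attention distributions to the reference $U_{\mathcal{R}_i}$. The protected distribution is close to it, because the binding loss is small and Pinsker's inequality applies. The clean distribution is maximally far from it, because the supports are disjoint. The triangle inequality for total variation then gives the bound. Throughout, fix the sample $i$ and write $P_k=\bar{A}^{\mathrm{p}}_{k,i}(\Omega_y)$, $Q_k=\bar{A}^{\mathrm{c}}_{k,i}(\Omega_y)$ and $U=U_{\mathcal{R}_i}$. All three are viewed as distributions on the common token universe obtained by zero-padding the positions of $\mathcal{R}_i$ on the clean side, as in the statement. Write $\kappa_k=\mathrm{D}_{\mathrm{KL}}(U\,\|\,P_k)$, so the hypothesis reads $\frac{1}{|\mathcal{K}|}\sum_{k\in\mathcal{K}}\kappa_k\le\eta$ by the definition of $\ell_{\mathrm{mass}}$ in Eq.~\eqref{eq:routing-cost}.

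\textbf{Step 1 (clean side is orthogonal to $U$).} Under the clean input, the target set $\mathcal{R}_i$ is empty. Hence after zero-padding, $Q_k(b)=0$ for every $b\in\mathcal{R}_i$, while $U$ is supported exactly on $\mathcal{R}_i$. Two distributions with disjoint supports have total variation $1$, so $\mathrm{TV}(U,Q_k)=1$ for every $k$. I expect this to be the main point needing care. One must argue that the padded $Q_k$ is still a probability vector on the common universe. This holds because the clean answer rows are distributions over the clean tokens, which form the complement of $\mathcal{R}_i$ in the universe. One must also argue that the answer-token index set $\Omega_y$ is shared by both inputs, since $y_i$ is unchanged by the protection.

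\textbf{Step 2 (protected side is close to $U$).} Pinsker's inequality gives $\mathrm{TV}(U,P_k)\le\sqrt{\kappa_k/2}$. This holds trivially if $\kappa_k=\infty$, because $\mathrm{TV}\le 1$.

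\textbf{Step 3 (triangle inequality and averaging).} By the triangle inequality,
\[
\mathrm{TV}(P_k,Q_k)\ge\mathrm{TV}(U,Q_k)-\mathrm{TV}(U,P_k)\ge 1-\sqrt{\kappa_k/2}.
\]
Averaging over $k\in\mathcal{K}$ and using concavity of the square root (Jensen's inequality), we get
\[
\frac{1}{|\mathcal{K}|}\sum_{k\in\mathcal{K}}\sqrt{\kappa_k/2}\le\sqrt{\tfrac{1}{2|\mathcal{K}|}\textstyle\sum_{k\in\mathcal{K}}\kappa_k}\le\sqrt{\eta/2}.
\]
This yields Eq.~\eqref{eq:tv-shift-bound}.

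As a sanity check, there is an alternative route. Since $\log$ is concave, Jensen gives $\kappa_k\ge-\log P_k(\mathcal{R}_i)$, so $P_k(\mathcal{R}_i)\ge e^{-\kappa_k}$. Together with $\mathrm{TV}(P_k,Q_k)\ge P_k(\mathcal{R}_i)-Q_k(\mathcal{R}_i)=P_k(\mathcal{R}_i)$, this gives a lower bound that is often stronger. I would present the Pinsker argument because it matches the stated form exactly.
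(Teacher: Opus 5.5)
Your proof is correct and is essentially the paper's argument. Both reduce to the per-layer inequality $\mathrm{TV}(P_k,Q_k)\ge 1-\mathrm{TV}(U_{\mathcal{R}_i},P_k)$, using $U_{\mathcal{R}_i}(\mathcal{R}_i)=1$ and $Q_k(\mathcal{R}_i)=0$, and then finish with Pinsker and Jensen; the only cosmetic difference is that you get that inequality from the triangle inequality through $U_{\mathcal{R}_i}$, while the paper passes through the intermediate target-mass bound $\frac{1}{|\mathcal{K}|}\sum_{k}P_k(\mathcal{R}_i)\ge 1-\sqrt{\eta/2}$. Your alternative route is also valid: after averaging with Jensen for the convex map $x\mapsto e^{-x}$ it gives $e^{-\eta}$, which is never weaker than the stated bound, since $e^{-\eta}\ge 1-\sqrt{\eta/2}$ for all $\eta\ge 0$.
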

Theorem~\ref{thm:effectivenss-BPH} shows that every protection-induced binding term in Eq.~\eqref{eq:bph} that is driven to a small $\eta$ enforces an attention-pattern shift of at least $1-\sqrt{\eta/2}$ in mean total-variation distance between protected training and clean evaluation, providing the mechanistic effectiveness by which BPH's protection signal fails to transfer to clean downstream data. The complementary trigger-to-perturbation term ensures that the trigger and perturbation pathways are coupled: severing either at evaluation collapses the joint route the surrogate adopted during fine-tuning, thus strengthening protection.

\noindent\textbf{Variant 2: Contrastive Routing Shift (CRS).}
\label{sec:crs}
A fixed bridge is effective when the surrogate and the attacker share a similar fusion mechanism, but its prescriptive form may over-constrain the path and limit transfer to unseen architectures. CRS relaxes the prescription: instead of dictating \emph{where} the answer should attend, it only requires that the answer's attention-mass pattern \emph{differs} between the clean and protected forward passes. Let $\bar{A}_{k}^{\mathrm{c}}(\Omega_y)$ and $\bar{A}_{k}^{\mathrm{p}}(\Omega_y)$ denote the answer-token attention-mass distributions obtained under the clean and protected inputs, respectively. CRS minimizes
\begin{equation}
    \ell_{\mathrm{bind}}^{\mathrm{CRS}}
    =
    -\frac{1}{|\mathcal{K}|}
    \sum_{k\in\mathcal{K}}
    \mathrm{D}_{\mathrm{KL}}\!\bigl(
        \mathrm{sg}(\bar{A}_{k}^{\mathrm{c}}(\Omega_y))
        \,\|\,
        \bar{A}_{k}^{\mathrm{p}}(\Omega_y)
    \bigr),
\label{eq:crs}
\end{equation}
where the stop-gradient $\mathrm{sg}(\cdot)$ freezes the clean reference so that updates only reshape the protected-side route. CRS does not assume a specific target that the surrogate must attend to, thereby creating another effective protection variant.

\noindent\textbf{Joint Protection Objective.} For each cross-modal binding disruption loss, we combine it with the standard training loss to form a joint protection objective:
\begin{equation}
    \ell_{\mathrm{joint}}
    =
    \lambda_{\mathrm{train}}\,\ell_{\mathrm{train}}
    +
    \lambda_{\mathrm{bind}}\,\ell_{\mathrm{bind}}.
\label{eq:binding-loss}
\end{equation}
The first term retains the standard min-min unlearnable-example signal: the protected sample must remain easy to fit during unauthorized fine-tuning. The second term governs \emph{how} that fitting is achieved, biasing the surrogate toward attention routes that exist only when the trigger is inserted and the image perturbation is present. The two terms are complementary, thus forming effective protection against unauthorized LVLMs.

\subsection{Ensemble Protection for Black-box Robustness}
\label{sec:ensemble-protection}
The joint objective in Eq.~\eqref{eq:binding-loss} optimizes protection with respect to a single surrogate LVLM, which may provide limited robustness against unseen attackers with different model architectures. To improve black-box transferability, we extend the protection objective to an ensemble of surrogate models:

\begin{equation}
    \ell_{\mathrm{prot}} =
    \sum_{m=1}^{M}
    \omega_m \ell_{\mathrm{joint}}^{(m)},
    \qquad
    \sum_{m=1}^{M}\omega_m=1,
\label{eq:ensemble-loss}
\end{equation}
where $\ell_{\mathrm{train}}^{(m)}$, $\ell_{\mathrm{bind}}^{(m)}$ are the training loss and the binding disruption loss computed on surrogate $m$, and $\omega_m\ge 0$ is the weight for surrogate $m$. The ensemble objective encourages protected data to remain unlearnable across multiple surrogates, thereby providing more generalizable protection.

\subsection{Adversarial Objective as an Unlearnable Variant}
\label{sec:adversarial-objective}
Optimizing the protection loss in Eq.~\eqref{eq:ensemble-loss} encourages unlearnable protection through the standard min-min objective. The same framework also supports an adversarial protection objective $\ell_{\mathrm{prot}}^{\mathrm{AD}}$ by reversing the outer training-loss term while preserving the binding-disruption term:
\begin{equation}
\begin{aligned}
    \ell_{\mathrm{prot}}^{\mathrm{AD}}
    &=
    \sum_{m=1}^{M}
    \omega_m\,\ell_{\mathrm{joint}}^{\mathrm{AD} (m)},
    \\
    \ell_{\mathrm{joint}}^{\mathrm{AD} (m)}
    &=
    -\lambda_{\mathrm{train}}\,
    \ell_{\mathrm{train}}^{(m)}
    +
    \lambda_{\mathrm{bind}}\,
    \ell_{\mathrm{bind}}^{(m)} .
\end{aligned}
\label{eq:adversarial-loss}
\end{equation}
This adversarial variant targets training disruption by making protected samples hard to fit, thus directly degrading fine-tuning performance rather than relying on shortcut memorization. The binding term is preserved with the same sign because it shifts attention routing patterns that are shared across LVLMs to amplify the protection effect.

\subsection{Constrained Multi-objective Optimization}
\label{sec:joint-optimization}
Given a clean multimodal sample $(x_i,t_i,y_i)$, the unlearnable image and text protections create two specialized protection signals, and the cross-modal binding disruption creates a complementary attention-level mechanism guidance. We combine them into a single constrained bilevel optimization over the released dataset. Formally, the protected input is
\begin{equation}
    \tilde{x}_i=x_i+\delta_i,
    \qquad
    \tilde{t}_i=\operatorname{Insert}(t_i,\gamma_i),
\end{equation}
with $\delta_i\in\mathcal{C}^{\delta}_{i}$ and $\gamma_i\in\mathcal{C}^{\gamma}_{i}$. Given a surrogate set $\{\tilde f^{(m)}\}_{m=1}^{M}$, \MethodName{} solves
\begin{align}
\min_{\{\delta_i,\gamma_i\}}
&\quad
\frac{1}{n}
\sum_{i=1}^{n}
\sum_{m=1}^{M}
\omega_m
\ell_{\mathrm{joint}}^{(m)}
\bigl(
\tilde f_{\tilde{\theta}^{(m)\star}};
\tilde{x}_i,\tilde{t}_i,y_i
\bigr)
\nonumber\\
\mathrm{s.t.}
&\quad
\delta_i\in\mathcal{C}^{\delta}_{i},
\quad
\gamma_i\in\mathcal{C}^{\gamma}_{i},
\quad \forall i,
\quad
\sum_{m=1}^{M}\omega_m=1,
\nonumber\\
&\quad
\tilde{\theta}^{(m)\star}
\in
\arg\min_{\tilde{\theta}^{(m)}}
\sum_{i=1}^{n}
\ell_{\mathrm{train}}
\bigl(
\tilde f_{\tilde{\theta}^{(m)}};
\tilde{x}_i,\tilde{t}_i,y_i
\bigr),
\quad
\forall m .
\label{eq:joint-constrained-objective}
\end{align}
Here $\ell_{\mathrm{joint}}^{(m)}$ is instantiated by Eq.~\eqref{eq:binding-loss} for the min-min unlearnable objective, or by Eq.~\eqref{eq:adversarial-loss} for the adversarial variant. This formulation is multi-objective in two senses. First, the defender optimizes both the task-level fitting signal and the attention-level binding signal through $\lambda_{\mathrm{train}}$ and $\lambda_{\mathrm{bind}}$. Second, the ensemble weights $\omega_m$ approximate transfer to possible attackers by optimizing the same released perturbations against multiple surrogates. The modality budgets remain hard constraints rather than soft penalties, ensuring that the optimized data stays within the protection set in Eq.~\eqref{eq:image_constrain} and Eq.~\eqref{eq:text_constrain}.

\begin{algorithm}[tb]
\caption{\MethodName{}: Multimodal Data Protection via Constrained Multi-Objective Optimization.}
\label{alg:mmguard-optimization}
\begin{algorithmic}[1]
    \REQUIRE Clean data $\mathcal{D}=\{(x_i,t_i,y_i)\}_{i=1}^{n}$; surrogate LVLMs $\{\tilde f^{(m)}\}_{m=1}^{M}$; budgets $(\epsilon_x,\epsilon_t)$; outer rounds $R$; inner steps $Q$.
    \ENSURE Protected dataset $\widetilde{\mathcal{D}}=\{(\tilde{x}_i,\tilde{t}_i,y_i)\}_{i=1}^{n}$.
    \STATE Initialize feasible $\delta_i\in\mathcal{C}^{\delta}_{i}$ and $\gamma_i\in\mathcal{C}^{\gamma}_{i}$ for all $i$
    \FOR{$r=1$ \textbf{to} $R$}
        \STATE Form $\tilde{x}_i=x_i+\delta_i$ and $\tilde{t}_i=\operatorname{Insert}(t_i,\gamma_i)$
        \FOR{each surrogate $m=1,\ldots,M$}
            \STATE Approximate $\tilde{\theta}^{(m)\star}$ by $Q$ gradient steps on $\sum_i\ell_{\mathrm{train}}\bigl(\tilde f_{\tilde{\theta}^{(m)}};\tilde{x}_i,\tilde{t}_i,y_i\bigr)$
        \ENDFOR
        \STATE Compute the ensemble protection loss in Eq.~\eqref{eq:joint-constrained-objective}
        \STATE Update each $\delta_i$ by PGD and project onto $\mathcal{C}^{\delta}_{i}$
        \STATE Update each $\gamma_i$ by gradient-guided candidate screening and exact candidate verification over $\mathcal{V}_{\mathrm{adm}}$
    \ENDFOR
    \STATE \textbf{return} $\widetilde{\mathcal{D}}=\bigl\{\bigl(\tilde{x}_i,\tilde{t}_i,y_i\bigr)\bigr\}_{i=1}^{n}$
\end{algorithmic}
\end{algorithm}

\noindent\textbf{Optimization Algorithm.}
Directly solving Eq.~\eqref{eq:joint-constrained-objective} is impractical because it would require repeatedly solving the attacker's inner fine-tuning problem to convergence, while jointly optimizing continuous image perturbations and discrete text triggers under different feasibility constraints. We therefore use an alternating approximation summarized in Algorithm~\ref{alg:mmguard-optimization}. At each outer round, the current protected samples are used to approximate the inner solution by a small number of adaptation steps for all surrogate models. The outer loss is then evaluated on the adapted surrogate. Image perturbations are updated by projected gradient descent as in Eq.~\eqref{eq:image-pgd}, while trigger tokens are updated by the HotFlip screening and verification procedure in Eqs.~\eqref{eq:hotflip-score}-\eqref{eq:hotflip-verify}. These two updates share the same joint loss, so the image perturbation and text trigger are optimized to support the same protection purpose. The algorithm preserves feasibility after every update: projection clips image perturbations to the $\ell_{\infty}$ budget and the valid pixel range, while text updates only select admissible tokens within the fixed trigger-length budget. This algorithm balances overall feasibility with the effectiveness, stealthiness, and robustness of the protection.

\section{Experiments}
We empirically validate \MethodName{} along six axes. We first detail the datasets, target LVLMs, attack scenarios, and evaluation metrics in Sec.~\ref{sec:experimental-setup}. We then quantify protection effectiveness under white-box and gray-box attackers in Sec.~\ref{sec:effectiveness}, and assess transferability to black-box LVLMs and diverse fine-tuning techniques in Sec.~\ref{sec:transferability}. We further evaluate robustness against adaptive attackers that apply adaptive attacks by data transformations and data mixing in Sec.~\ref{sec:adaptive}, and measure practicality in terms of stealthiness and computational cost in Sec.~\ref{sec:practicality}. We finally analyze the protection mechanism in Sec.~\ref{sec:mechanism} through ablations, parameter sensitivity, and attention-level visualization.

\subsection{Experimental Setup}
\label{sec:experimental-setup}
\noindent\textbf{Datasets and Tasks.}
We evaluate \MethodName{} across six public multimodal question-answering benchmarks spanning general visual reasoning, domain-specific reasoning, text-centric perception, and document understanding. RealWorldQA~\cite{realworldqa2024} tests understanding of everyday physical scenes, including spatial relations, object affordances, and commonsense visual cues. MMStar~\cite{chen2024we} evaluates fine-grained multimodal reasoning across six capability categories and eighteen task axes. ScienceQA~\cite{lu2022learn} covers curriculum-grounded science reasoning that combines visual evidence with textual context. VQA-RAD~\cite{lau2018dataset} evaluates radiology-oriented medical VQA over clinical images. TextVQA~\cite{singh2019towards} measures scene-text reading and reasoning in natural images, while DocVQA~\cite{mathew2021docvqa} targets document image understanding over structured and semi-structured text. Table~\ref{tab:dataset-statistics} summarizes the dataset statistics.

\begin{table}[t]
\centering
\caption{Statistics of the evaluation datasets.}
\label{tab:dataset-statistics}
\resizebox{\linewidth}{!}{%
\begin{tabular}{@{}llll@{}}
\toprule
Dataset & Domain & Task & Size \\
\midrule
RealWorldQA & General & Visual Understanding & 765 \\
MMStar & General & Multimodal Reasoning & 1,500 \\
ScienceQA & Science & Science Reasoning & 21,208 \\
VQA-RAD & Medical & Medical VQA & 2,244 \\
TextVQA & Scene text & OCR Reasoning & 45,336 \\
DocVQA & Document & Document Understanding & 50,000 \\
\bottomrule
\end{tabular}%
}
\end{table}

\noindent\textbf{Large Vision-Language Models and Training Configuration.}
We consider and evalute state-of-the-art open-source LVLMs with various architectures and parameter scales including Qwen3-VL-2B-Instruct, Qwen3-VL-4B-Instruct, and Qwen3-VL-8B-Instruct~\cite{bai2025qwen3}; Qwen2.5-VL-7B-Instruct~\cite{bai2025qwen25vl}; Llama-3.2-11B-Vision-Instruct~\cite{grattafiori2024llama}; LLaVA-v1.5-7B~\cite{liu2024improved}; InternVL3.5-8B~\cite{wang2025internvl35}; Gemma-3-4B-IT~\cite{gemmateam2025gemma3}; and GLM-4.1V-9B-Base~\cite{hong2025glm}. We consider three attack scenarios with different levels of surrogate knowledge: (\emph{i}) white-box: we use Qwen3-VL-4B-Instruct as the surrogate model and evaluate it as the white-box attacker; (\emph{ii}) gray-box: we use Qwen3-VL-4B-Instruct as the surrogate and evaluate the same Qwen-family variants with different scales; and (\emph{iii}) black-box: we leverage Qwen3-VL-4B-Instruct~\cite{bai2025qwen3} and MiniCPM-V-4~\cite{yao2025efficient} as surrogate models to generate the protected data and evaluate the other six LVLMs as black-box attackers.

\noindent\textbf{Evaluation Metrics.}
Our primary utility metric is task accuracy on the clean held-out test set after unauthorized fine-tuning. To quantify protection effectiveness, we further report the accuracy drop relative to clean fine-tuning:
\begin{equation}
\Delta_{\mathrm{acc}} =
\mathrm{Acc}\bigl(f_{\theta_{\mathrm{clean}}};\mathcal{D}_{\mathrm{test}}\bigr)
-
\mathrm{Acc}\bigl(f_{\theta_{\mathrm{prot}}};\mathcal{D}_{\mathrm{test}}\bigr),
\end{equation}
Larger $\Delta_{\mathrm{acc}}$ indicates stronger protection, while clean fine-tuning accuracy serves as the task-specific upper reference. We provide complete dataset descriptions, fine-tuning and protection-optimization configurations, and the per-dataset evaluation protocol in Appendices~\ref{app:dataset-details} and~\ref{app:training-details}.

\subsection{Effectiveness Evaluation}
\label{sec:effectiveness}
We assess whether \MethodName{} produces a reliable protection signal when the attacker fine-tunes on the released data. We examine two aspects: (\emph{i}) effectiveness under white-box and gray-box scenarios; and (\emph{ii}) attacker training dynamics.

\begin{figure}[t]
\centering
\includegraphics[width=\linewidth]{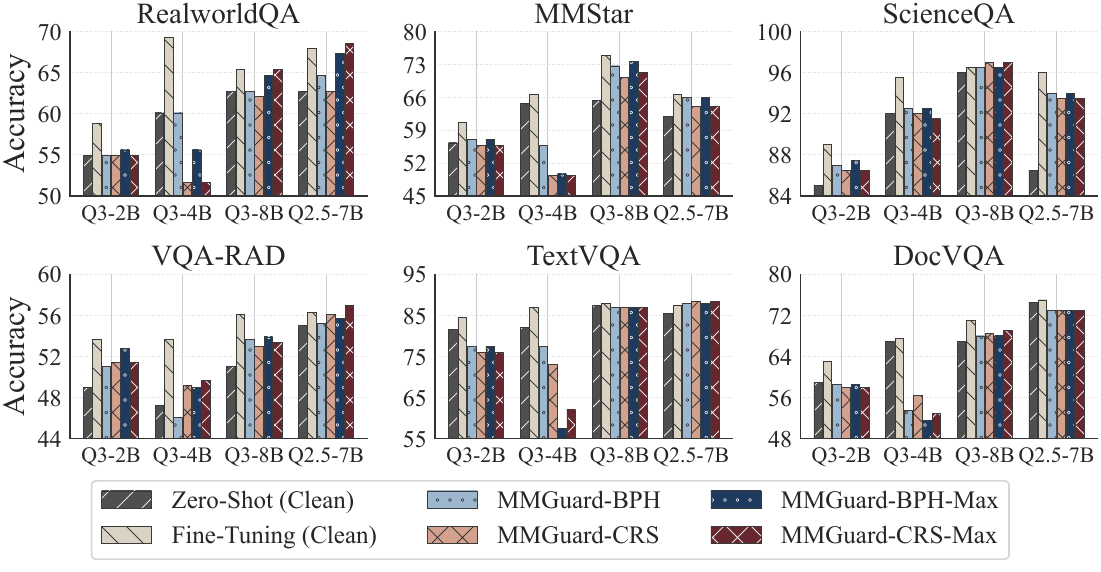}
\caption{Protection effectiveness across \MethodName{} variants under white-box and gray-box scenarios. Lower bars indicate stronger protection.}
\label{fig:effectiveness}
\end{figure}

\noindent\textbf{Protection Effectiveness Under White-Box and Gray-Box Scenarios.}
Fig.~\ref{fig:effectiveness} shows that \MethodName{} consistently lowers post-fine-tuning accuracy compared with the Clean Fine-Tuning reference across the evaluated LVLMs and datasets, confirming that protected samples induce a reliable degradation signal during unauthorized training. The effect is most pronounced in the white-box setting on the surrogate Qwen3-VL-4B, where protected models often approach or fall below the Zero-Shot reference, largely eliminating the utility gain of clean fine-tuning. The degradation is especially clear on visually grounded benchmarks such as TextVQA, DocVQA, MMStar, and RealworldQA, where performance depends heavily on image-text alignment. The protection also transfers to gray-box target models. Although the reduction varies with model similarity and task type, both \MethodName{}-BPH and \MethodName{}-CRS remain below the Clean Fine-Tuning reference in nearly all settings, indicating that the shortcut does not overfit to a single checkpoint. The effect is milder on ScienceQA and VQA-RAD, where language priors and domain knowledge can partially compensate for disrupted multimodal learning. The \textsc{Max} variants further strengthen protection in several cases, suggesting improved cross-model transfer. Overall, \MethodName{} is effective under both white-box and more realistic gray-box scenarios.

\begin{figure}[t]
\centering
\includegraphics[width=\linewidth]{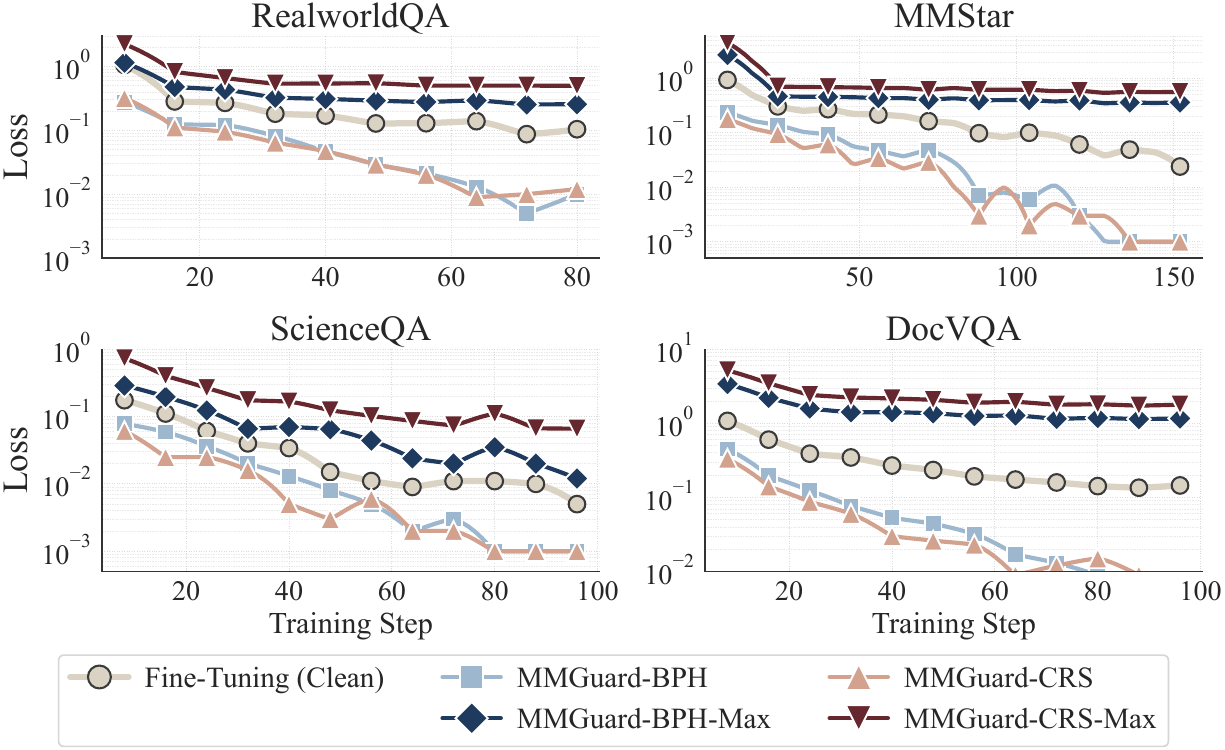}
\caption{Attacker training loss on protected data (log scale). Min-min variants converge to lower loss than \textsc{Clean FT}, indicating shortcut fitting, while \textsc{Max} variants plateau at higher loss, indicating training disruption.}
\label{fig:train}
\end{figure}

\begin{table*}[t]
\centering
\caption{Protection transferability under the black-box scenario. Each cell reports clean-test accuracy with the drop in parentheses.}
\label{tab:blackbox-transfer}
\scriptsize
\setlength{\tabcolsep}{4pt}
\renewcommand{\arraystretch}{1.15}
\definecolor{drpos}{HTML}{66272E}
\definecolor{drneg}{HTML}{1E3A5F}
\newcommand{\bbdrop}[1]{%
  \ifdim #1pt>0pt
    {\color{drpos}\tiny\,(+#1)}%
  \else\ifdim #1pt<0pt
    {\color{drneg}\tiny\,(#1)}%
  \else
    {\color{black!50}\tiny\,(0.0)}%
  \fi\fi}
\newcommand{\dshead}[1]{\multicolumn{5}{c}{\textbf{#1}}}
\newcommand{\baseline}[1]{\textit{#1}}

\resizebox{\textwidth}{!}{%
\begin{tabular}{@{}l @{\hskip 8pt} *{5}{r} @{\hskip 6pt}!{\color{black!25}\vrule}@{\hskip 6pt} *{5}{r} @{\hskip 6pt}!{\color{black!25}\vrule}@{\hskip 6pt} *{5}{r}@{}}
\toprule
 & \dshead{RealWorldQA} & \dshead{ScienceQA} & \dshead{TextVQA} \\
\cmidrule(lr){2-6}\cmidrule(lr){7-11}\cmidrule(lr){12-16}
\textbf{Method} & Llama & LLaVA & InternVL & Gemma & GLM & Llama & LLaVA & InternVL & Gemma & GLM & Llama & LLaVA & InternVL & Gemma & GLM \\
\midrule
\baseline{Zero-Shot (Clean)} & 49.7\bbdrop{6.9} & 53.6\bbdrop{3.9} & 54.9\bbdrop{5.9} & 36.6\bbdrop{19.6} & 27.5\bbdrop{39.8} & 84.5\bbdrop{11.0} & 63.0\bbdrop{16.5} & 97.5\bbdrop{1.5} & 76.5\bbdrop{8.0} & 88.5\bbdrop{9.5} & 14.5\bbdrop{67.5} & 51.5\bbdrop{4.0} & 75.0\bbdrop{1.5} & 54.5\bbdrop{28.5} & 0.0\bbdrop{85.0} \\
\baseline{Fine-Tuning (Clean)}  & 56.6\bbdrop{0.0} & 57.5\bbdrop{0.0} & 60.8\bbdrop{0.0} & 56.2\bbdrop{0.0} & 67.3\bbdrop{0.0} & 95.5\bbdrop{0.0} & 79.5\bbdrop{0.0} & 99.0\bbdrop{0.0} & 84.5\bbdrop{0.0} & 98.0\bbdrop{0.0} & 82.0\bbdrop{0.0} & 55.5\bbdrop{0.0} & 76.5\bbdrop{0.0} & 83.0\bbdrop{0.0} & 85.0\bbdrop{0.0} \\
\midrule
Image      & 61.4\bbdrop{-4.8} & 55.6\bbdrop{1.9}  & 60.8\bbdrop{0.0} & 51.6\bbdrop{4.6} & 62.7\bbdrop{4.6} & 93.5\bbdrop{2.0} & 77.5\bbdrop{2.0} & 98.5\bbdrop{0.5} & 82.0\bbdrop{2.5} & 99.0\bbdrop{-1.0} & 77.5\bbdrop{4.5}  & 52.0\bbdrop{3.5}  & 76.0\bbdrop{0.5} & 79.0\bbdrop{4.0} & 83.5\bbdrop{1.5} \\
Text       & 57.5\bbdrop{-0.9} & 55.6\bbdrop{1.9}  & 60.1\bbdrop{0.7} & 52.9\bbdrop{3.3} & 60.1\bbdrop{7.2} & 95.0\bbdrop{0.5} & 76.5\bbdrop{3.0} & 99.0\bbdrop{0.0} & 80.5\bbdrop{4.0} & 98.5\bbdrop{-0.5} & 80.5\bbdrop{1.5} & 53.0\bbdrop{2.5}  & 74.5\bbdrop{2.0}  & 81.5\bbdrop{1.5} & 83.5\bbdrop{1.5} \\
Multimodal & 56.9\bbdrop{-0.3} & 56.2\bbdrop{1.3} & 58.8\bbdrop{2.0} & 51.6\bbdrop{4.6} & 62.1\bbdrop{5.2} & 93.5\bbdrop{2.0} & 75.5\bbdrop{4.0} & 98.0\bbdrop{1.0} & 82.0\bbdrop{2.5} & 97.5\bbdrop{0.5}  & 79.5\bbdrop{2.5} & 52.0\bbdrop{3.5}  & 76.5\bbdrop{0.0} & 80.0\bbdrop{3.0}  & 82.5\bbdrop{2.5} \\
\addlinespace[2pt]
\MethodName{}-BPH     & 52.9\bbdrop{3.7} & 55.6\bbdrop{1.9} & 54.9\bbdrop{5.9} & 50.3\bbdrop{5.9} & 60.8\bbdrop{6.5} & 90.0\bbdrop{5.5} & 73.5\bbdrop{6.0} & 97.5\bbdrop{1.5} & 78.5\bbdrop{6.0} & 95.0\bbdrop{3.0}  & 76.0\bbdrop{6.0}  & 51.0\bbdrop{4.5}  & 74.0\bbdrop{2.5} & 75.5\bbdrop{7.5} & 82.0\bbdrop{3.0} \\
\MethodName{}-CRS     & 51.7\bbdrop{4.9} & 54.4\bbdrop{3.1}  & 56.2\bbdrop{4.6} & 51.0\bbdrop{5.2} & 61.4\bbdrop{5.9} & 92.0\bbdrop{3.5} & 74.5\bbdrop{5.0} & 97.5\bbdrop{1.5} & 79.5\bbdrop{5.0} & 93.5\bbdrop{4.5}  & 78.0\bbdrop{4.0} & 50.5\bbdrop{5.0} & 74.5\bbdrop{2.0} & 77.0\bbdrop{6.0} & 82.0\bbdrop{3.0} \\
\MethodName{}-BPH-Max & 54.4\bbdrop{2.2} & 56.0\bbdrop{1.5} & 55.4\bbdrop{5.4} & 49.5\bbdrop{6.7} & 62.6\bbdrop{4.7} & 92.0\bbdrop{3.5} & 72.0\bbdrop{7.5} & 96.5\bbdrop{2.5} & 79.5\bbdrop{5.0} & 96.5\bbdrop{1.5}  & 77.0\bbdrop{5.0} & 49.5\bbdrop{6.0}  & 74.0\bbdrop{2.5} & 76.5\bbdrop{6.5} & 81.0\bbdrop{4.0} \\
\MethodName{}-CRS-Max & 53.6\bbdrop{3.0} & 53.4\bbdrop{4.1}  & 57.3\bbdrop{3.5} & 51.4\bbdrop{4.8} & 63.2\bbdrop{4.1} & 91.5\bbdrop{4.0} & 76.0\bbdrop{3.5} & 96.5\bbdrop{2.5} & 78.5\bbdrop{6.0} & 95.0\bbdrop{3.0} & 78.5\bbdrop{3.5} & 50.0\bbdrop{5.5}  & 73.0\bbdrop{3.5}  & 78.5\bbdrop{4.5} & 80.5\bbdrop{4.5} \\
\midrule[\heavyrulewidth]
 & \dshead{MMStar} & \dshead{VQA-RAD} & \dshead{DocVQA} \\
\cmidrule(lr){2-6}\cmidrule(lr){7-11}\cmidrule(lr){12-16}
\textbf{Method} & Llama & LLaVA & InternVL & Gemma & GLM & Llama & LLaVA & InternVL & Gemma & GLM & Llama & LLaVA & InternVL & Gemma & GLM \\
\midrule
\baseline{Zero-Shot (Clean)} & 49.7\bbdrop{11.0} & 34.3\bbdrop{9.2}  & 63.0\bbdrop{5.0}  & 43.3\bbdrop{8.4} & 48.3\bbdrop{22.7} & 0.0\bbdrop{58.8}  & 37.5\bbdrop{14.6} & 53.9\bbdrop{5.3} & 7.3\bbdrop{44.7} & 0.0\bbdrop{58.9} & 2.5\bbdrop{66.5} & 11.0\bbdrop{9.5} & 36.5\bbdrop{8.5}  & 51.0\bbdrop{16.0} & 0.0\bbdrop{76.5} \\
\baseline{Fine-Tuning (Clean)}  & 60.7\bbdrop{0.0}  & 43.5\bbdrop{0.0}  & 68.0\bbdrop{0.0}  & 51.7\bbdrop{0.0} & 71.0\bbdrop{0.0}  & 58.8\bbdrop{0.0}  & 52.1\bbdrop{0.0}  & 59.2\bbdrop{0.0} & 52.0\bbdrop{0.0} & 58.9\bbdrop{0.0} & 69.0\bbdrop{0.0} & 20.5\bbdrop{0.0} & 45.0\bbdrop{0.0}  & 67.0\bbdrop{0.0}  & 76.5\bbdrop{0.0} \\
\midrule
Image      & 60.3\bbdrop{0.4}  & 43.3\bbdrop{0.2} & 67.7\bbdrop{0.3} & 48.7\bbdrop{3.0} & 67.7\bbdrop{3.3} & 55.0\bbdrop{3.8} & 49.5\bbdrop{2.6}  & 57.4\bbdrop{1.8} & 50.6\bbdrop{1.4} & 52.3\bbdrop{6.6}  & 65.5\bbdrop{3.5} & 16.0\bbdrop{4.5} & 41.5\bbdrop{3.5} & 63.0\bbdrop{4.0} & 71.0\bbdrop{5.5}  \\
Text       & 59.7\bbdrop{1.0}  & 41.7\bbdrop{1.8}  & 66.3\bbdrop{1.7}  & 51.3\bbdrop{0.4} & 68.7\bbdrop{2.3} & 53.9\bbdrop{4.9}  & 49.7\bbdrop{2.4}  & 56.8\bbdrop{2.4} & 49.2\bbdrop{2.8} & 53.7\bbdrop{5.2}  & 66.0\bbdrop{3.0} & 16.5\bbdrop{4.0} & 41.5\bbdrop{3.5} & 60.5\bbdrop{6.5}  & 71.0\bbdrop{5.5}  \\
Multimodal & 58.0\bbdrop{2.7}  & 41.0\bbdrop{2.5}  & 67.7\bbdrop{0.3} & 50.3\bbdrop{1.4} & 67.7\bbdrop{3.3} & 55.4\bbdrop{3.4} & 48.8\bbdrop{3.3}  & 55.9\bbdrop{3.3}  & 50.1\bbdrop{1.9} & 52.3\bbdrop{6.6}  & 66.0\bbdrop{3.0} & 16.0\bbdrop{4.5} & 40.5\bbdrop{4.5}  & 62.0\bbdrop{5.0}  & 73.5\bbdrop{3.0} \\
\addlinespace[2pt]
\MethodName{}-BPH     & 56.3\bbdrop{4.4}  & 39.0\bbdrop{4.5} & 65.0\bbdrop{3.0} & 45.7\bbdrop{6.0} & 65.7\bbdrop{5.3} & 52.1\bbdrop{6.7}  & 46.2\bbdrop{5.9}  & 55.4\bbdrop{3.8} & 47.5\bbdrop{4.5} & 52.1\bbdrop{6.8}  & 63.5\bbdrop{5.5}  & 13.5\bbdrop{7.0} & 41.0\bbdrop{4.0}  & 60.5\bbdrop{6.5} & 68.5\bbdrop{8.0} \\
\MethodName{}-CRS     & 54.0\bbdrop{6.7}  & 38.7\bbdrop{4.8} & 65.3\bbdrop{2.7} & 45.3\bbdrop{6.4} & 66.3\bbdrop{4.7} & 51.2\bbdrop{7.6} & 45.8\bbdrop{6.3}  & 56.1\bbdrop{3.1} & 47.0\bbdrop{5.0} & 52.1\bbdrop{6.8} & 63.0\bbdrop{6.0} & 14.5\bbdrop{6.0} & 40.5\bbdrop{4.5}  & 61.0\bbdrop{6.0} & 68.0\bbdrop{8.5} \\
\MethodName{}-BPH-Max & 57.2\bbdrop{3.5}  & 39.2\bbdrop{4.3}  & 64.5\bbdrop{3.5}  & 47.5\bbdrop{4.2} & 65.8\bbdrop{5.2} & 53.2\bbdrop{5.6} & 47.2\bbdrop{4.9}  & 55.0\bbdrop{4.2} & 47.5\bbdrop{4.5} & 51.3\bbdrop{7.6}  & 64.0\bbdrop{5.0} & 13.0\bbdrop{7.5} & 41.0\bbdrop{4.0} & 59.5\bbdrop{7.5}  & 68.0\bbdrop{8.5}  \\
\MethodName{}-CRS-Max & 56.2\bbdrop{4.5}  & 39.2\bbdrop{4.3}  & 65.2\bbdrop{2.8}  & 48.8\bbdrop{2.9} & 65.5\bbdrop{5.5} & 52.3\bbdrop{6.5} & 46.7\bbdrop{5.4} & 54.6\bbdrop{4.6} & 49.9\bbdrop{2.1} & 50.8\bbdrop{8.1} & 64.5\bbdrop{4.5} & 14.0\bbdrop{6.5} & 40.0\bbdrop{5.0} & 59.5\bbdrop{7.5}  & 70.5\bbdrop{6.0} \\
\bottomrule
\end{tabular}}
\end{table*}

\noindent\textbf{Training Dynamics on Protected Data.}
Fig.~\ref{fig:train} further explains the accuracy degradation by showing how protected data changes the attacker's optimization trajectory. The min-min variants, \MethodName{}-BPH and \MethodName{}-CRS, consistently drive the training loss below Clean Fine-Tuning, often by one to two orders of magnitude. This indicates that the attacker can fit the protected training set more easily than the clean one by exploiting a perturbation-induced shortcut. As a result, the low training loss does not transfer to clean-test accuracy. In contrast, the adversarial \textsc{Max} variants follow the opposite pattern: their losses remain substantially above Clean Fine-Tuning and plateau during training. This suggests a direct training-disruption effect, where the protected samples prevent effective empirical risk minimization rather than merely offering an easier shortcut. The attacker, therefore, cannot recover by simply training longer. Together, these two regimes reveal complementary protection mechanisms: min-min variants induce shortcut overfitting, while \textsc{Max} variants obstruct optimization. These dynamics support the design intuition in Sec.~\ref{sec:adversarial-objective} and explain the consistent effectiveness observed in Fig.~\ref{fig:effectiveness}. Additional per-dataset, per-backbone, and per-variant results are deferred to Appendix~\ref{app:effectiveness-details}.

\subsection{Transferability Evaluation}
\label{sec:transferability}
In practice, a defender has no prior knowledge of the LVLM or fine-tuning recipe an attacker may employ. We evaluate transferability along two axes: (\emph{i}) across LVLMs in the black-box setting, and (\emph{ii}) across attacker fine-tuning strategies.

\noindent\textbf{Transferability Across LVLMs.}
We evaluate five black-box LVLMs with different architectures. Table~\ref{tab:blackbox-transfer} shows a clear gap between binding-aware and binding-agnostic protections. Single-modality Image and Text baselines, as well as the naive Multimodal combination, provide limited and unstable protection: their drops are often small and occasionally negative, meaning that protected fine-tuning can even improve clean-test accuracy over Clean Fine-Tuning. This suggests that input-space perturbation alone is insufficient for black-box transfer. When one modality remains clean, the attacker can still learn from the unprotected channel, causing the perturbation to behave more like regularization than a reliable protection signal. In contrast, all \MethodName{} variants produce positive accuracy drops across the evaluations. The gains are modest on saturated tasks such as ScienceQA, where Clean Fine-Tuning is near the ceiling, but are more pronounced on visually grounded tasks such as TextVQA, DocVQA, MMStar, and VQA-RAD. These results identify cross-modal binding as the key transferable component: \MethodName{} disrupts the attention route linking visual evidence, textual triggers, and answers, a structure broadly shared by LVLMs. Overall, \MethodName{} transfers consistently to unseen architectures and provides more reliable black-box protection than binding-agnostic baselines.

\begin{figure}[t]
\centering
\includegraphics[width=\linewidth]{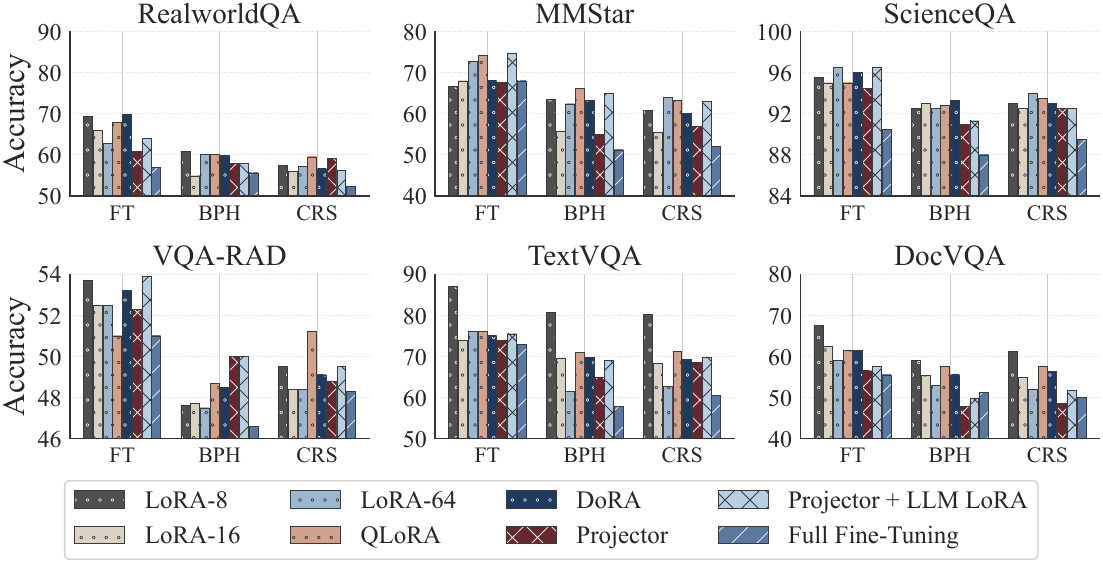}
\caption{Protection transferability of \MethodName{}-BPH and \MethodName{}-CRS across attacker fine-tuning strategies. Lower bars indicate stronger protection.}
\label{fig:finetuning}
\end{figure}

\noindent\textbf{Transferability Across Fine-Tuning Methods.}
A defense-aware attacker may change the fine-tuning recipe to bypass the protection signal. We fix the protected data and let the attacker use eight representative strategies: LoRA with ranks $r{=}8/16/64$, QLoRA, DoRA, projector-only tuning, projector\,+\, LLM LoRA, and full fine-tuning. Fig.~\ref{fig:finetuning} shows that both \MethodName{}-BPH and \MethodName{}-CRS consistently reduce accuracy compared with the corresponding Clean Fine-Tuning reference across datasets and recipes, indicating that the protection is not tied to a specific tuning method. The degradation is most evident on visually grounded datasets such as TextVQA and DocVQA. Low-capacity recipes, such as projector-only tuning, are particularly affected because the limited trainable parameters encourage reliance on the planted shortcut. Importantly, the effect persists under full fine-tuning, in which all model parameters are updated, indicating that increasing adaptation capacity alone cannot remove the protection signal. Overall, \MethodName{} transfers across diverse fine-tuning strategies and limits recipe switching as an adaptive countermeasure. Detailed analyses are deferred to Appendix~\ref{app:transferability-details}.

\begin{table}[t]
\centering
\caption{Multimodal stealthiness and semantic coherence of protected data. Arrows indicate whether higher or lower values are better. Human-study metrics are evaluated by three human experts and three SOTA LLM judges on a 1--3 scale.}
\label{tab:stealthiness-coherence}
\scriptsize
\setlength{\tabcolsep}{3.5pt}
\renewcommand{\arraystretch}{1.12}
\resizebox{\columnwidth}{!}{%
\begin{tabular}{@{}lrrrrrr@{}}
\toprule
\textbf{Metric} & \textbf{Clean} & \textbf{Image} & \textbf{Text} & \textbf{Multi.} & \textbf{BPH} & \textbf{CRS} \\
\midrule
\multicolumn{7}{@{}l}{\textit{Computational Metrics}} \\
\midrule
PSNR $\uparrow$ & $+\infty$ & 34.69 & $+\infty$ & 34.63 & 34.48 & 34.68 \\
SSIM $\uparrow$ & 1.00 & 0.88 & 1.00 & 0.88 & 0.87 & 0.88 \\
LPIPS $\downarrow$ & 0.00 & 0.11 & 0.00 & 0.11 & 0.11 & 0.11 \\
PPL $\downarrow$ & 39.12 & 39.12 & 89.72 & 89.24 & 91.91 & 88.28 \\
Edit Distance $\downarrow$ & 0.00 & 0.00 & 5.57 & 29.48 & 36.29 & 29.70 \\
BLEU $\uparrow$ & 1.00 & 1.00 & 0.98 & 0.93 & 0.93 & 0.93 \\
\midrule
\multicolumn{7}{@{}l}{\textit{Human Study / LLM-as-a-Judge}} \\
\midrule
Image Naturalness $\uparrow$ & 3.00 & 2.52 & 3.00 & 2.52 & 2.52 & 2.52 \\
Text Naturalness $\uparrow$ & 2.86 & 2.86 & 1.33 & 1.36 & 1.64 & 1.62 \\
Image-Text Coherence $\uparrow$ & 2.92 & 2.92 & 2.00 & 2.14 & 2.14 & 2.14 \\
Human Answerability $\uparrow$ & 2.73 & 2.60 & 1.99 & 2.33 & 2.33 & 2.33 \\
\bottomrule
\end{tabular}}
\end{table}

\subsection{Practicality Evaluation}
\label{sec:practicality}
Beyond effectiveness, a deployable defense must remain unobtrusive to legitimate users and feasible to apply, so we evaluate \MethodName{} on two practicality axes: (\emph{i}) multimodal stealthiness; and (\emph{ii}) algorithmic efficiency.

\begin{figure*}[t]
\centering
\includegraphics[width=\linewidth]{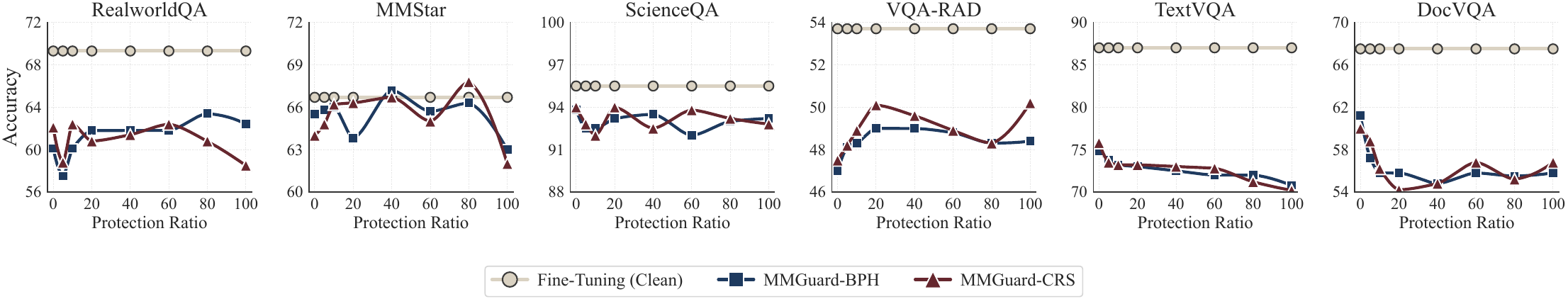}
\caption{Protection effectiveness under attacker-side data mixing. Lower curves indicate stronger protection.}
\label{fig:ratio}
\end{figure*}

\noindent\textbf{Multimodal Stealthiness and Semantic Coherence.}
A practical data-side defense should preserve the utility of released data for human users. We evaluate image stealthiness using PSNR, SSIM, and LPIPS; text stealthiness using perplexity, edit distance, and BLEU; and semantic coherence using a 1--3 rubric covering image naturalness, text naturalness, image-text coherence, and human answerability. The rubric aggregates ratings from three human experts and three SOTA LLM judges. Table~\ref{tab:stealthiness-coherence} shows that the image-side distortion is minor: \MethodName{}-BPH and \MethodName{}-CRS achieve PSNR above $34$~dB, SSIM around $0.88$, and LPIPS $0.11$, matching the Image-only and naive Multimodal baselines. The main perceptual cost stems from text-trigger insertion, which increases perplexity and lowers text naturalness. Nevertheless, BLEU remains high at $0.93$, image-text coherence remains $2.14/3$, and human answerability remains $2.33/3$, indicating that the protected samples are still understandable and answerable to human users. Importantly, BPH and CRS introduce no additional observable stealthiness cost beyond the underlying image perturbation and text trigger, showing that the binding-disruption objective improves protection without further degrading multimodal usability.

\noindent\textbf{Algorithmic Efficiency.}
The cost of \MethodName{} is dominated by surrogate LVLM forward/backward passes. For $n$ samples, $M$ surrogate models, $R$ outer rounds, and $Q$ inner adaptation steps, the overall time complexity is approximately$O(RnM(Q+1)C_{\mathrm{LVLM}})$, where $C_{\mathrm{LVLM}}$ denotes one LVLM forward/backward update. The text-trigger search adds only a small overhead of $O(RnL_\gamma(|\mathcal{V}_{\mathrm{adm}}|d+cC_{\mathrm{fw}}))$, since the trigger length $L_\gamma$ and verified candidate size $c$ are small constants. The attention-based binding loss reuses attention maps already computed by the LVLM and adds $O(|\mathcal{K}||\mathcal{H}|L^2)$ per pass, matching the standard attention order. The additional memory is mainly for image perturbations and trigger tokens, i.e., $O(nHWC+nL_\gamma)$, plus negligible online attention aggregation. Since \MethodName{} is an offline defender-side preprocessing step, it introduces no inference-time overhead after the protected dataset is generated. More detailed analysis of stealthiness trade-offs and a complete complexity breakdown are deferred to Appendix~\ref{app:practicality-details}.

\begin{figure}[t]
\centering
\includegraphics[width=\linewidth]{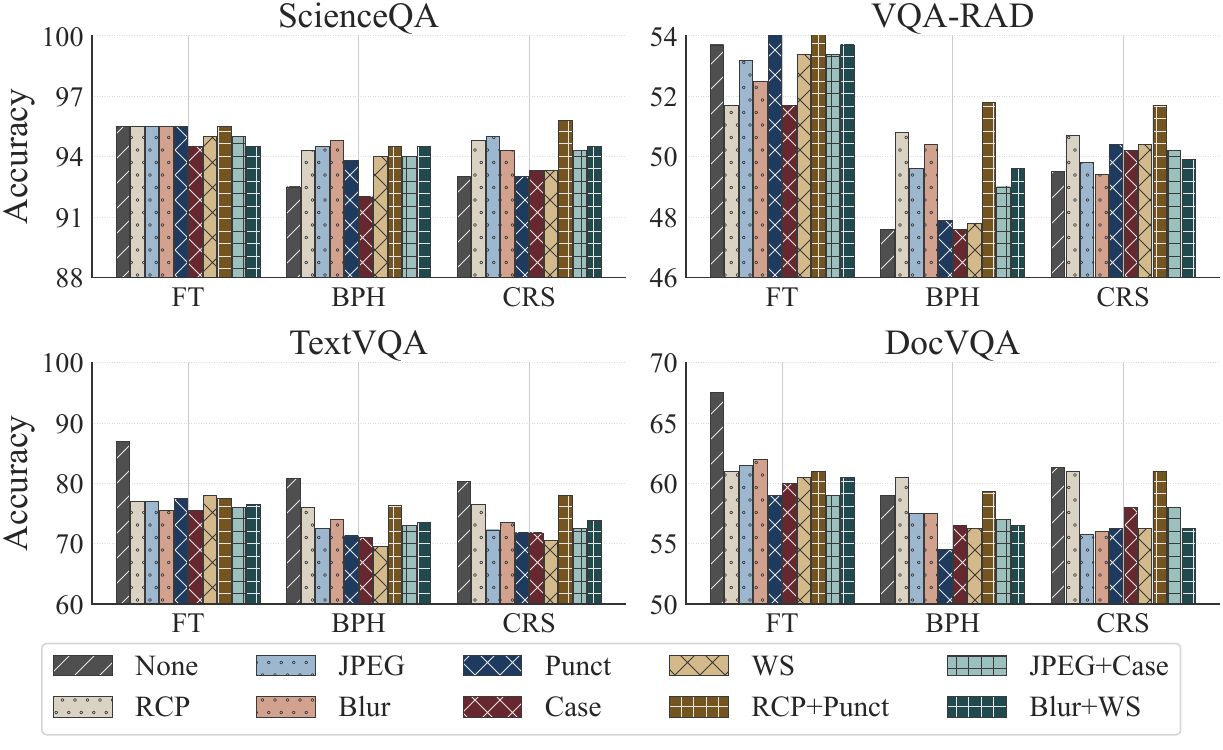}
\caption{Protection robustness of \MethodName{}-BPH and \MethodName{}-CRS under attacker-side data transformations. Lower bars indicate stronger protection.}
\label{fig:transformation}
\end{figure}
\subsection{Adaptive Attack Evaluation}
\label{sec:adaptive}
An attacker aware of~\MethodName{} may seek to invalidate the protection. We discuss and evaluate two adaptive strategies through (\emph{i}) data transformation and (\emph{ii}) data mixing.
\noindent\textbf{Robustness Against Data Transformations.}
A defense-aware attacker may preprocess the scraped data to weaken the protection signal. We evaluate nine common transformations, including image-side operations (RCP, JPEG compression, and blurring), text-side normalizations (punctuation removal, case normalization, and whitespace normalization), and their compositions. Fig.~\ref{fig:transformation} summarizes the results. Across all four datasets and all transformations, \MethodName{}-BPH and \MethodName{}-CRS consistently remain below the clean fine-tuning reference under the same transformation, indicating that \MethodName{} preserves its robustness under adaptive preprocessing. Among the evaluated transformations, the composition RCP+Punct is the most effective because it simultaneously perturbs the visual input and normalizes the textual input. Nevertheless, the protection effect persists: \MethodName{} induces a strong shortcut through cross-modal attention binding, so surface-level changes to only part of the signal cannot fully neutralize it. More generally, image-side transformations are stronger than text-side normalizations, which is expected because these tasks rely heavily on visual evidence, and image transformations can partially distort the perturbation. Text-side normalizers are less effective because the trigger consists of admissible vocabulary tokens whose form and semantics are largely preserved. Overall, \MethodName{} remains robust against standard attacker-side data transformations.

\begin{figure*}[t]
\centering
\includegraphics[width=\linewidth]{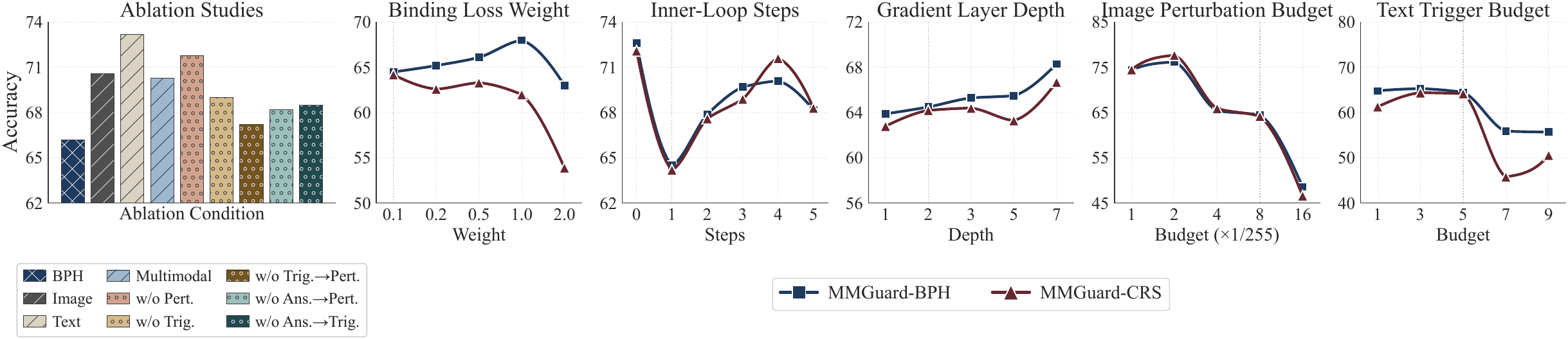}
\caption{Ablation study and parameter sensitivity of \MethodName{}-BPH. The leftmost panel reports nine ablation conditions on the full method, while the remaining five panels sweep the binding-loss weight $\lambda_{\mathrm{bind}}$, inner-loop steps $Q$, gradient layer depth $|\mathcal{K}|$, image perturbation budget $\epsilon_x$, and text trigger budget $\epsilon_t$. Lower values indicate stronger protection across all panels. The results are averages across six datasets.}
\label{fig:parameter}
\end{figure*}

\noindent\textbf{Robustness Against Data Mixing.}
A defense-aware attacker may attempt to dilute the protection signal by mixing protected samples with clean data collected from external sources. We evaluate this setting by varying the protection ratio from $0\%$ to $100\%$ and reporting clean-test accuracy after fine-tuning. Fig.~\ref{fig:ratio} summarizes the results. Across all six datasets, both \MethodName{}-BPH and \MethodName{}-CRS consistently remain below the Clean Fine-Tuning reference, showing that data mixing cannot fully eliminate the protection effect. The degradation is particularly clear on TextVQA and DocVQA, where increasing the protected portion further reduces clean-test accuracy, with full protection yielding the largest gap relative to clean fine-tuning. On RealworldQA, MMStar, ScienceQA, and VQA-RAD, the curves are less strictly monotonic, but the protected models still stay below the clean reference across nearly all ratios. This indicates that even partial protected coverage can inject a usable pattern into the fine-tuning process. As the protection ratio increases, this shortcut receives stronger gradient support, but the exact accuracy trend depends on dataset difficulty and fine-tuning variance. Overall, \MethodName{} remains robust against attacker-side data mixing: clean data dilution may reduce the dosage of the protection signal, but it does not restore clean fine-tuning performance. Details analyses for robustness are deferred to Appendix~\ref{app:adaptive-details}.

\subsection{Model Mechanism Analysis}
\label{sec:mechanism}
We analyze \emph{why} and \emph{how} \MethodName{} and its components work from three perspectives: (\emph{i}) an ablation study isolating the contribution of each component, (\emph{ii}) a sensitivity analysis of the five core hyperparameters, and (\emph{iii}) a visualization-based analysis of the cross-modal binding mechanism.

\noindent\textbf{Ablation Study.}
The leftmost panel of Fig.~\ref{fig:parameter} shows that the full \MethodName{}-BPH achieves the strongest protection among all ablation variants. Using only image perturbations, only text triggers, or a naive multimodal combination yields noticeably higher accuracy, indicating that neither modality alone nor a simple combination is sufficient. This isolates the binding-disruption objective as the key component: it does not merely add image and text perturbations, but actively couples them into a shortcut that interferes with clean multimodal learning. In the full design, removing the image perturbation results in greater degradation than removing the text trigger, suggesting that the perturbation provides the dominant optimization signal, while the trigger acts as a discrete activator. Among the binding paths in Eq.~\eqref{eq:bph}, ablating the answer-anchored paths is more harmful than ablating the trigger-to-perturbation path, consistent with Theorem~\ref{thm:effectivenss-BPH}: the answer-anchored routes are directly responsible for shifting the attention distribution of protected data away from that of a clean one.

\begin{figure}[t]
\centering
\includegraphics[width=\linewidth]{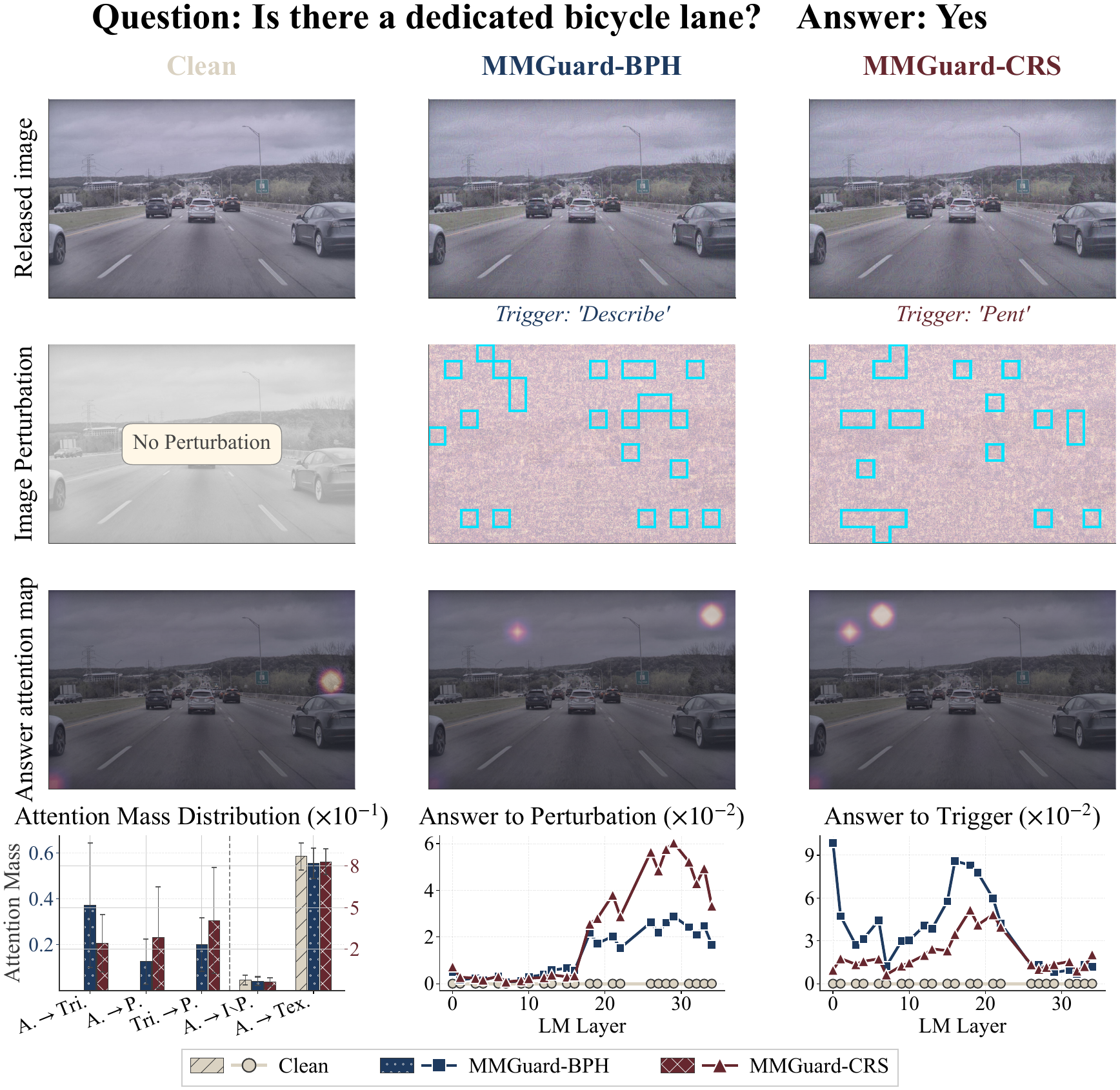}
\caption{Visualization of the cross-modal binding mechanism on a representative sample. Row~1: released images; Row~2: perturbation and perturbation tokens $\Omega_{\delta}$ (cyan boxes); Row~3: answer-token attention maps; Row~4: head-averaged attention mass distribution with corresponding layerwise curves.}
\label{fig:mechanism}
\end{figure}

\noindent\textbf{Parameter Analysis.}
The remaining panels of Fig.~\ref{fig:parameter} examine the sensitivity of \MethodName{} to its main hyperparameters. Increasing the binding-loss weight $\lambda_{\mathrm{bind}}$ strengthens \MethodName{}-CRS, while \MethodName{}-BPH exhibits a non-monotonic trend, reflecting the difference between CRS's route-agnostic attention shifting and BPH's more prescriptive binding objective. The inner-loop step $Q$ is crucial: setting $Q{=}0$ removes the bilevel adaptation effect and substantially weakens protection, whereas a small positive value already captures most of the gain. For the gradient layer set $\mathcal{K}$, shallower layers provide stronger protection, suggesting that the relevant cross-modal binding behavior emerges early in multimodal fusion. Finally, both image and text budgets show clear strength--stealthiness trade-offs. Larger image perturbation budgets improve protection after a threshold, with $\epsilon_x{=}8/255$ serving as a practical default, while increasing the text trigger budget further strengthens protection but may reduce textual naturalness. Overall, the ablation and sensitivity results support the design of \MethodName{}: effective protection requires both modalities, explicit cross-modal binding disruption, and a moderate perturbation budget.

\noindent\textbf{Cross-Modal Binding Mechanism Analysis.}
Fig.~\ref{fig:mechanism} visualizes the protection mechanism on a representative sample. The released image stays perceptually close to the clean input (Row~1), while the perturbation patches $\Omega_{\delta}$ (Row~2) form a structured area for the binding objective. Row~3 shows that attention is \emph{redirected} from the genuine visual evidence (e.g., bicycle lane) onto $\Omega_{\delta}$ in off-content regions (e.g., sky), and under \MethodName{}, answer-to-trigger, answer-to-perturbation, and trigger-to-perturbation attention all increase from the clean baseline, while attention to non-perturbed image tokens is suppressed. These patterns support the mechanism predicted by Theorem~\ref{thm:effectivenss-BPH}  because clean test inputs do not contain the learned trigger--perturbation binding, the model cannot reuse the shortcut acquired during protected fine-tuning. The layerwise curves also reveal different routing behaviors between the two variants. \MethodName{}-BPH concentrates answer-to-trigger attention in early-to-middle LM layers, consistent with the prescribed binding paths in Eq.~\eqref{eq:bph}; \MethodName{}-CRS induces stronger answer-to-perturbation attention in later layers, matching its route-agnostic objective in Eq.~\eqref{eq:crs}. Thus, both variants create a distributional shift between protected training and clean inference, but through different attention routes. This explains their similar overall effectiveness while allowing different trade-offs in transferability and robustness across architectures. Additional model design, parameter, and mechanism diagnostics are deferred to Appendix~\ref{app:mechanism-details}.

\section{Limitations and Discussion.}

\noindent\textbf{Pre-Publication Scope.}
\MethodName{} operates only on defender-controlled data before release. Content already scraped or published without protection is outside its scope; likewise, an adversary with access to an unprotected copy may bypass the protected signal. Thus, \MethodName{} should be viewed as one layer in a defense-in-depth strategy, complementing reactive mechanisms such as watermark-based provenance~\cite{zhu2018hidden,kirchenbauer2023watermark}, machine unlearning~\cite{bourtoule2021machine}, and legal recourse rather than replacing them. As a proactive defense, \MethodName{} blocks a specific abuse pathway and lets complementary mechanisms address the remaining attack surface.

\noindent\textbf{Pretraining-from-Scratch Threats.}
Our threat model addresses the dominant efficient setting in which commercial actors fine-tune publicly released LVLM checkpoints on scraped corpora. We do not evaluate \MethodName{} against adversaries who pre-train an LVLM from scratch on a corpus dominated by protected samples. Such a regime is economically unattractive for the long-tail data that we target; however, well-resourced adversaries (e.g., tech companies) with sufficiently large protected pools could, in principle, shift this trade-off. Quantifying the protection frontier against the LVLM pretraining settings is left for future work.

\noindent\textbf{Computational Overhead.}
Constructing unlearnable examples for LVLMs requires solving the multi-objective, constrained optimization problem in Eq.~\eqref{eq:joint-constrained-objective}, which jointly couples continuous image perturbations, discrete trigger search, and surrogate-side adaptation under modality-specific budgets. This optimization is inherent to data-centric protection: the defender must solve it per sample before release, so Algorithm~\ref{alg:mmguard-optimization} incurs a one-time pre-release cost that scales linearly with the dataset size. We leave more advanced techniques, such as universal perturbation generators, shared surrogate caches across samples, and other acceleration techniques, to future work as a path toward web-scale deployment.

\noindent\textbf{Stealthiness Cost on the Text Channel.}
The residual stealthiness cost of \MethodName{} is concentrated on the text channel. The inserted trigger raises perplexity and reduces text naturalness under our human and LLM-judge rubric, although BLEU and human answerability remain largely preserved (Table~\ref{tab:stealthiness-coherence}). This effect arises because the trigger search in Sec.~\ref{sec:text-protection} optimizes protection strength under a discrete admissibility constraint without an explicit fluency prior. Augmenting the screen-and-verify procedure with a language-model fluency term, or filtering the admissible vocabulary by fluency, offers a principled way to tighten this trade-off.

\noindent\textbf{Beyond LVLM.}
Our evaluation spans a diverse range of multimodal tasks achieved by LVLM, including visual question answering, domain-specific reasoning, document-grounded reasoning, and understanding (Sec.~\ref{sec:experimental-setup}), while newly emerging topics, such as multiagent systems and agentic tool use, in-context learning introduce distinct problem and optimization challenges. Extending \MethodName{} to such regimes is a promising direction for future work.

\section{Conclusion}
We presented \MethodName{}, the first proactive, data-centric protection framework against unauthorized fine-tuning of large vision-language models. By coupling bounded image perturbations with discrete text triggers and steering them through a cross-modal binding-disruption objective, \MethodName{} reshapes how an attacker's LVLM allocates attention among visual evidence, textual context, and target responses, so that fine-tuning minimizes loss along a protection-specific shortcut that does not transfer to clean evaluation. We formalized multimodal data protection under a practical threat model, derived a smoothness-based optimality bound for the discrete trigger search and a total-variation lower bound for the binding mechanism, and instantiated the framework as a constrained multi-objective optimization over an ensemble of surrogate LVLMs. Across six multimodal datasets and nine open-source LVLMs spanning white-box, gray-box, and black-box scenarios, \MethodName{} consistently degrades unauthorized fine-tuning performance while preserving perceptual fidelity, and remains effective under aggressive fine-tuning recipes, input transformations, and partial-coverage data mixing. We hope these results contribute a structural primitive for proactive multimodal data ownership and, in combination with reactive mechanisms such as watermarking and machine unlearning, help support a layered defense for public multimodal content.

\section*{Ethics Considerations}
We developed and evaluated \MethodName{} in accordance with the Menlo Report principles of respect for persons, beneficence, justice, and respect for law and public interest. \MethodName{} is a \emph{defensive} primitive whose explicit purpose is to return agency to data owners whose public image-text content is otherwise absorbed into commercial LVLM fine-tuning without consent. The principal beneficiaries are individual creators, news outlets, medical-image curators, and small dataset providers, whose interests are routinely at risk in current scraping practice; the principal cost falls on parties seeking to monetize unauthorized adaptation. We argue that this trade-off is well-aligned with the public interest, and we frame the considerations below accordingly.

\noindent\textbf{Datasets and Models.}
All six evaluation datasets (RealWorldQA, MMStar, ScienceQA, VQA-RAD, TextVQA, DocVQA) are publicly released research benchmarks, used here strictly under their original licenses and intended scientific use. We used them only as carriers of a generic image-text-response structure and did not extract, redistribute, or attempt to re-identify any depicted individuals. The medical subset (VQA-RAD) is the publicly released, de-identified Hugging Face version; we performed no patient-level analysis. The nine LVLMs we evaluate are all open-weight checkpoints used under their respective licenses, and protection generation does not require any non-public access to the models.

\noindent\textbf{Human Subjects and LLM-as-a-Judge Study.}
Our stealthiness study (Sec.~\ref{sec:practicality} and Appendix~\ref{app:human-evaluation}) is a non-interventional rating task on protected versions of public benchmark images. We recruited three independent human raters and complemented them with three contemporary LLM judges (GPT, Gemini, and Claude Opus) per sample, applying them to ten samples per dataset for a total of 60 image-text pairs. The rubric (Table~\ref{tab:human-evaluation-rubric}) elicits ordinal judgments about visual naturalness, textual fluency, cross-modal coherence, and answerability, and never asks raters about themselves or any third party. No personal data was collected from raters beyond what was needed to deliver the task; ratings were stored in de-identified form and used only in aggregate, and participation was voluntary with the right to withdraw at any time. The LLM judges were queried only with the protected sample under evaluation; no defender-side artifacts (surrogate gradients, optimization traces, or unreleased intermediate perturbations) were transmitted to third-party services.

\noindent\textbf{Dual-Use Considerations.}
Like other unlearnable-example research, \MethodName{} is dual-use in principle: the same optimization that produces protective shortcuts could in principle be retargeted to silently poison datasets that the perturber does not own. We took two structural steps to keep the contribution within its intended defensive envelope. First, the threat model in Sec.~\ref{sec:threat-model} restricts the defender's reach to data they themselves control before publication, which is also the only setting in which our claims are validated; the framework does not propose, evaluate, or recommend application to third-party content. Second, the protection signal is bounded by a perceptual budget and leaves the human-facing target response unchanged, so its observable effect on a recipient is degraded LVLM training utility rather than altered semantic content visible to humans. We do not view \MethodName{} as enabling a novel attack capability beyond what is already available through prior unlearnable-example and adversarial-poisoning work~\cite{huang2021unlearnable,fowl2021adversarial,liu2024multimodal}, and our experiments do not target any specific deployed system.

\noindent\textbf{Disclosure and Reproducibility Posture.}
Because \MethodName{} is a data-side defense rather than a vulnerability in a specific product, no coordinated vulnerability disclosure was warranted. We notified no individual model provider, since the protection acts on the defender's own data and does not exploit a flaw in any released checkpoint. To support replication and scrutiny by the security community, we plan to release the protection code, configuration files, and the ten-sample-per-dataset evaluation bundle used in the human/LLM-judge study, while withholding raw rater identifiers and any artifact that could re-identify participants.

\noindent\textbf{Residual Risks.}
Two residual concerns deserve naming. First, our defense raises the cost of unauthorized adaptation but does not eliminate it; data owners who rely solely on \MethodName{} could over-estimate their protection, and we therefore consistently frame it as one layer of a defense-in-depth strategy alongside watermarking, unlearning, and legal recourse (Sec.~\ref{sec:related}). Second, widespread deployment of unlearnable examples could, in principle, interact with legitimate downstream uses such as accessibility tooling that fine-tunes on user-supplied imagery; the perceptual budget we adopt keeps human-facing utility intact for the tasks we evaluate, but operators of such tools should be aware of the protection signal when consuming third-party data.

\bibliography{example_paper}
\bibliographystyle{icml2026}

\newpage
\appendix
\onecolumn

\makeatletter
\providecommand\@currentHref{}
\def\addcontentsline#1#2#3{%
  \addtocontents{#1}{%
    \protect\contentsline{#2}{#3}{\thepage}{\@currentHref}%
    \protected@file@percent
  }%
}
\makeatother

\etocsetnexttocdepth{subsubsection}
\etocsettocstyle{\subsection*{Contents of the Appendix}}{}
\localtableofcontents

\section{Notation}
\label{app:notation}

Table~\ref{tab:notation} summarizes the main notation used in the problem formulation and method.

\begin{table}[!h]
\centering
\caption{Summary of main notation.}
\label{tab:notation}
\small
\renewcommand{\arraystretch}{1.2}
\setlength{\tabcolsep}{4pt}
\begin{tabular}{@{}>{\raggedright\arraybackslash}p{0.23\textwidth}>{\raggedright\arraybackslash}p{0.20\textwidth}>{\raggedright\arraybackslash}p{0.53\textwidth}@{}}
\toprule
\textbf{Symbol} & \textbf{Name} & \textbf{Meaning} \\
\midrule
$\mathcal{D}=\{(x_i,t_i,y_i)\}_{i=1}^{n}$ &
Clean multimodal dataset &
Collection of image, text, and target-response samples. \\
$\widetilde{\mathcal{D}}$ &
Protected dataset &
Dataset released by the defender after applying image and text protections. \\
$x_i$, $t_i$, $y_i$ &
Clean sample components &
Image, textual input, and target output for sample $i$. \\
$\tilde{x}_i$, $\tilde{t}_i$ &
Protected inputs &
Perturbed image and trigger-inserted text released for sample $i$. \\
$\mathcal{X}$, $\mathcal{T}$, $\mathcal{Y}$ &
Data spaces &
Image space, natural-language input space, and target-output space. \\
\midrule
$f_{\theta}$ &
LVLM &
Model mapping an image-text input to an output response. \\
$\theta=(\theta_{\mathrm{frz}},\theta_{\mathrm{tr}})$ &
Model parameters &
Frozen pretrained parameters and trainable fine-tuning parameters. \\
$\Theta$ &
Parameter space &
Feasible model-parameter space used in the attacker's training objective. \\
$\tilde f^{(m)}$ &
Surrogate LVLM &
The $m$-th surrogate model available to the defender. \\
$M$, $\omega_m$ &
Ensemble size and weight &
Number of surrogate LVLMs and the non-negative weight of surrogate $m$. \\
\midrule
$\mathcal{G}_{\phi}$, $\phi$, $\Phi$ &
Protection map and parameters &
Defender-side transformation, its parameters, and feasible protection-parameter space. \\
$\mathcal{B}(\mathcal{D})$ &
Protection budget set &
Set of protected datasets satisfying modality-specific constraints. \\
$d_x$, $d_t$ &
Distance measures &
Visual and textual change measures. \\
$\epsilon_x$, $\epsilon_t$ &
Budgets &
Image perturbation budget and text trigger-length budget. \\
$\delta_i$, $\gamma_i$ &
Protection variables &
Image perturbation and inserted text trigger for sample $i$. \\
$\mathcal{C}^{\delta}_{i}$, $\mathcal{C}^{\gamma}_{i}$ &
Feasible sets &
Allowed image perturbations and text triggers for sample $i$. \\
$\mathcal{V}$, $\mathcal{V}_{\mathrm{adm}}$, $\mathcal{V}^{\mathrm{cand}}_{i,j}$ &
Token sets &
Tokenizer vocabulary, admissible trigger vocabulary, and screened candidate tokens. \\
\midrule
$g_x$, $\tilde g_x$ &
Image processors &
Practical image processor and differentiable surrogate processor. \\
$E_x$, $P_x$, $E_t$ &
Encoders/projector &
Visual encoder, modality projector, and text embedding map. \\
$H_i^x$, $Z_i^x$, $Z_i^t$, $S_i$ &
LVLM representations &
Visual features, projected visual tokens, text embeddings, and joint multimodal sequence. \\
\midrule
$\mathcal{L}_{\mathrm{train}}$, $\mathcal{L}_{\mathrm{eval}}$ &
Dataset-level losses &
Training and clean evaluation objectives. \\
$\ell_{\mathrm{train}}$, $\ell_{\mathrm{prot}}$ &
Optimization losses &
Sample-level training loss and defender protection loss. \\
$\ell_{\mathrm{bind}}$, $\ell_{\mathrm{joint}}$ &
Binding and joint losses &
Attention-binding disruption loss and weighted single-surrogate protection loss. \\
$\lambda_{\mathrm{train}}$, $\lambda_{\mathrm{bind}}$ &
Loss weights &
Weights for task-level fitting and binding-disruption terms. \\
\midrule
$\Omega_x$, $\Omega_t$, $\Omega_{\gamma}$, $\Omega_y$ &
Token subsets &
Image, original text, inserted trigger, and answer tokens in the LVLM sequence. \\
$\Omega_{\delta}$, $\tau_{\delta}$, $\rho_{i,b}$ &
Perturbation tokens &
Top-$\tau_{\delta}$ fraction of image tokens by perturbation magnitude; $\rho_{i,b}$ is the average $\ell_{1}$ magnitude for token $b$. \\
$A^{(k,h)}$, $\bar{A}_{k}(\mathcal{S})$ &
Attention distributions &
Attention matrix for layer $k$ and head $h$, and head-averaged attention mass from source set $\mathcal{S}$. \\
$\mathcal{H}$, $\mathcal{K}$ &
Head and layer sets &
Attention heads and layers used for binding-disruption objectives. \\
$\mathcal{S}$, $\mathcal{R}$, $U_{\mathcal{R}}$ &
Source and target sets &
Source token set, target token set, and uniform target distribution over $\mathcal{R}$. \\
$\mathcal{B}_{k}$, $z_k$ &
Attention support and logits &
Support of nonzero attention mass at layer $k$ and corresponding effective logits. \\
$\beta_1,\beta_2,\beta_3$ &
BPH weights &
Weights for the three Bridge Path Hijack attention paths. \\
\bottomrule
\end{tabular}
\end{table}

\section{Optimality of Trigger Updates}
\label{app:hotflip-screening-proof}

\begin{lemma}[Trigger Updates Optimality]
\label{lem:hotflip-screening}
Suppose $\ell_{\mathrm{prot}}$ is $L$-smooth as a function of the embedding at trigger position $j$, and let $R=\max_{v\in\mathcal{V}_{\mathrm{adm}}}\|e_v-e_{\gamma_{i,j}}\|$. Let $v^{\dagger}\in\arg\min_{v\in\mathcal{V}_{\mathrm{adm}}}\ell_{\mathrm{prot}}\!\bigl(\gamma_i^{[j=v]}\bigr)$ be the exact best admissible substitution, and let $v_{i,j}^{\star}$ be the candidate selected by Eq.~\eqref{eq:hotflip-verify} from the top-$c$ screen $\mathcal{V}^{\mathrm{cand}}_{i,j}$ ranked by $s_{i,j}(\cdot)$. Then
\begin{equation}
    \ell_{\mathrm{prot}}\!\bigl(\gamma_i^{[j=v_{i,j}^{\star}]}\bigr)
    -\ell_{\mathrm{prot}}\!\bigl(\gamma_i^{[j=v^{\dagger}]}\bigr)
    \;\le\; LR^{2}.
\label{eq:hotflip-optimality}
\end{equation}
\end{lemma}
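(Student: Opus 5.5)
The plan is to treat $\ell_{\mathrm{prot}}$ as a function $F(e)$ of the embedding placed at trigger position $j$, holding every other input fixed. We write $e_0=e_{\gamma_{i,j}}$ and $g=\nabla_{e_{\gamma_{i,j}}}\ell_{\mathrm{prot}}$. The HotFlip score of Eq.~\eqref{eq:hotflip-score} is then exactly the linear term $s_{i,j}(v)=(e_v-e_0)^{\top}g$.

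The first step is the standard two-sided bound implied by $L$-smoothness. For every admissible $v$,
\begin{equation*}
\bigl|F(e_v)-F(e_0)-s_{i,j}(v)\bigr|\le \tfrac{L}{2}\|e_v-e_0\|^{2}\le \tfrac{L}{2}R^{2}.
\end{equation*}
This uses the definition of $R$ as the largest admissible displacement. It yields a uniform upper model $F(e_0)+s_{i,j}(v)+\tfrac{L}{2}R^2$ and a uniform lower model $F(e_0)+s_{i,j}(v)-\tfrac{L}{2}R^2$.

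The second step compares the selected candidate with $v^{\dagger}$. Let $v_1\in\mathcal{V}^{\mathrm{cand}}_{i,j}$ be the top-ranked candidate, i.e.\ a minimizer of $s_{i,j}$ over $\mathcal{V}_{\mathrm{adm}}$. We take the screen to keep the $c\ge1$ smallest scores, since the loss is being minimized. Because the verification step in Eq.~\eqref{eq:hotflip-verify} returns the exact minimizer of $\ell_{\mathrm{prot}}$ over the shortlist, we have $F(e_{v^{\star}_{i,j}})\le F(e_{v_1})$. We then chain the inequalities as follows:
\begin{equation*}
F(e_{v^{\star}_{i,j}})\le F(e_{v_1})\le F(e_0)+s_{i,j}(v_1)+\tfrac{L}{2}R^2\le F(e_0)+s_{i,j}(v^{\dagger})+\tfrac{L}{2}R^2\le F(e_{v^{\dagger}})+LR^{2}.
\end{equation*}
The third inequality holds because $v_1$ minimizes the score over all admissible tokens, including $v^{\dagger}$. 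The last inequality is the lower model applied at $v^{\dagger}$. Rearranging gives Eq.~\eqref{eq:hotflip-optimality}.

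The main obstacle is interpretive rather than technical. The argument needs the loss evaluated in Eq.~\eqref{eq:hotflip-verify} to be the same function $F$ to which smoothness applies, i.e.\ a single-position embedding substitution. The BPE retokenization effects discussed in the text break this identification. I will state explicitly that the lemma is proved in the embedding-substitution model, where $\gamma_i^{[j=v]}$ means placing $e_v$ at position $j$. Under that reading, verification on the true string can only help if the true losses are also what is minimized, and the bound then holds with the constant $LR^2$. I will also note that the inner-problem parameters $\tilde\theta^\star$ are held fixed during the substitution, so that $F$ is a well-defined smooth function.
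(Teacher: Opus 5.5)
Your proof is correct and follows the paper's argument: the same two-sided $L$-smoothness bound $|F(e_v)-F(e_0)-s_{i,j}(v)|\le LR^2/2$, chained through a shortlisted token whose score is at most $s_{i,j}(v^{\dagger})$ and then through exact verification on the shortlist. Comparing against the top-ranked candidate $v_1$ removes the paper's case split on whether $v^{\dagger}\in\mathcal{V}^{\mathrm{cand}}_{i,j}$, and your caveat that the bound holds only in the single-position embedding-substitution model (ignoring BPE retokenization) states an assumption the paper's proof makes implicitly.
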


\noindent\emph{Proof.}
For position $j$ of the trigger $\gamma_i$, let $e=e_{\gamma_{i,j}}$ denote its current embedding and write $\ell(e)=\ell_{\mathrm{prot}}(\gamma_i)$ as a function of that embedding with all other variables fixed. Define the candidate-induced loss change
\begin{equation}
    \Delta_{i,j}(v)
    \;=\;
    \ell_{\mathrm{prot}}\!\bigl(\gamma_i^{[j=v]}\bigr)
    -\ell_{\mathrm{prot}}(\gamma_i),
\end{equation}
and recall the linear score
$s_{i,j}(v)=(e_v-e)^{\top}\nabla_{e}\ell$ from Eq.~\eqref{eq:hotflip-score}. By the standard descent-lemma form of $L$-smoothness applied to $\ell$ along the segment from $e$ to $e_v$,
\begin{equation}
    \bigl|\Delta_{i,j}(v)-s_{i,j}(v)\bigr|
    \;\le\;
    \tfrac{L}{2}\,\|e_v-e\|^{2}
    \;\le\;
    \tfrac{LR^{2}}{2},
\label{eq:appendix-smoothness}
\end{equation}
for every admissible $v\in\mathcal{V}_{\mathrm{adm}}$.

Consider two cases for the exact admissible optimum $v^{\dagger}$.

\emph{Case 1: $v^{\dagger}\in\mathcal{V}^{\mathrm{cand}}_{i,j}$.}
The verification step in Eq.~\eqref{eq:hotflip-verify} performs an exact minimization of $\ell_{\mathrm{prot}}$ over $\mathcal{V}^{\mathrm{cand}}_{i,j}$, so $\ell_{\mathrm{prot}}(\gamma_i^{[j=v_{i,j}^{\star}]})\le\ell_{\mathrm{prot}}(\gamma_i^{[j=v^{\dagger}]})$ and Eq.~\eqref{eq:hotflip-optimality} holds with gap zero.

\emph{Case 2: $v^{\dagger}\notin\mathcal{V}^{\mathrm{cand}}_{i,j}$.}
By the top-$c$ screening rule, every $w\in\mathcal{V}^{\mathrm{cand}}_{i,j}$ has $s_{i,j}(w)\le s_{i,j}(v^{\dagger})$. Pick any such $w$. Applying Eq.~\eqref{eq:appendix-smoothness} to both $w$ and $v^{\dagger}$,
\begin{align}
    \Delta_{i,j}(w)
    &\le s_{i,j}(w)+\tfrac{LR^{2}}{2}
    \nonumber\\
    &\le s_{i,j}(v^{\dagger})+\tfrac{LR^{2}}{2}
    \nonumber\\
    &\le \Delta_{i,j}(v^{\dagger})+LR^{2}.
\end{align}
The verification step then selects $v_{i,j}^{\star}\in\mathcal{V}^{\mathrm{cand}}_{i,j}$ as the exact minimizer of $\ell_{\mathrm{prot}}$ on the shortlist, so $\Delta_{i,j}(v_{i,j}^{\star})\le\Delta_{i,j}(w)\le\Delta_{i,j}(v^{\dagger})+LR^{2}$. Subtracting the common offset $\ell_{\mathrm{prot}}(\gamma_i)$ from both sides of the last inequality recovers Eq.~\eqref{eq:hotflip-optimality}. \hfill$\square$

\section{Contrastive Form of the Attention-Mass Loss}
\label{app:routing-contrastive-proof}

\begin{proposition}[Contrastive Learning for Attention Mass Alignment]
\label{prop:routing-contrastive}
For each $k\in\mathcal{K}$, let $\mathcal{B}_{k}=\{b\in\Omega:\bar{A}_{k,b}(\mathcal{S})>0\}$. If $\mathcal{R}\subseteq\mathcal{B}_{k}$ for all $k\in\mathcal{K}$, write each $\bar{A}_{k}(\mathcal{S})$ on its support as a softmax over effective logits $z_k$. Then the attention-mass loss in Eq.~\eqref{eq:routing-cost} satisfies
\begin{align}
    \ell_{\mathrm{mass}}(\mathcal{S},\mathcal{R})
    &=
    \frac{1}{|\mathcal{K}|}
    \sum_{k\in\mathcal{K}}
    \ell_{\mathrm{NCE}}^{(k)}(\mathcal{S},\mathcal{R})
    - \log|\mathcal{R}|,
    \label{eq:routing-contrastive}\\
    \ell_{\mathrm{NCE}}^{(k)}(\mathcal{S},\mathcal{R})
    &=
    \frac{1}{|\mathcal{R}|}
    \sum_{r\in\mathcal{R}}
    -\log
    \frac{\exp(z_{k,r})}
    {\sum_{b\in\mathcal{B}_{k}}\exp(z_{k,b})}.
    \label{eq:routing-nce}
\end{align}
\end{proposition}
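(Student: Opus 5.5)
The argument is a direct expansion of the KL divergence in Eq.~\eqref{eq:routing-cost}, carried out one layer at a time and then averaged over $\mathcal{K}$. Fix $k\in\mathcal{K}$. Since $U_{\mathcal{R}}$ is supported on $\mathcal{R}$ with mass $1/|\mathcal{R}|$ on each element, and $\mathcal{R}\subseteq\mathcal{B}_k$, the divergence is finite, and only terms with $b\in\mathcal{R}$ contribute. With the convention $0\log 0=0$ for $b\notin\mathcal{R}$, we get
\begin{equation*}
\mathrm{D}_{\mathrm{KL}}\bigl(U_{\mathcal{R}}\,\|\,\bar{A}_k(\mathcal{S})\bigr)
=\sum_{r\in\mathcal{R}}\frac{1}{|\mathcal{R}|}\log\frac{1/|\mathcal{R}|}{\bar{A}_{k,r}(\mathcal{S})}
=-\log|\mathcal{R}|-\frac{1}{|\mathcal{R}|}\sum_{r\in\mathcal{R}}\log\bar{A}_{k,r}(\mathcal{S}).
\end{equation*}
The containment $\mathcal{R}\subseteq\mathcal{B}_k$ is used exactly here. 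Without it, some $\log \bar{A}_{k,r}$ would equal $-\infty$ and the identity would fail.

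Next comes the softmax parametrization. On $\mathcal{B}_k$ every entry of $\bar{A}_k(\mathcal{S})$ is strictly positive, and these entries sum to one, because $\bar{A}_k(\mathcal{S})$ is an average of attention rows and hence lies in the simplex. We can therefore define effective logits $z_{k,b}=\log\bar{A}_{k,b}(\mathcal{S})+c$ for $b\in\mathcal{B}_k$, with an arbitrary constant $c$. This gives $\bar{A}_{k,b}(\mathcal{S})=\exp(z_{k,b})/\sum_{b'\in\mathcal{B}_k}\exp(z_{k,b'})$. The representation is unique only up to the additive constant $c$, but the softmax ratio is invariant to $c$, so nothing downstream depends on that choice. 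Substituting into the previous display turns $-\frac{1}{|\mathcal{R}|}\sum_{r}\log\bar{A}_{k,r}$ into exactly $\ell^{(k)}_{\mathrm{NCE}}$ as defined in Eq.~\eqref{eq:routing-nce}.

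Finally, average over $k\in\mathcal{K}$. The constant $-\log|\mathcal{R}|$ is independent of $k$, so it passes through the average unchanged, and this yields Eq.~\eqref{eq:routing-contrastive}.

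No step is a real obstacle; the only care needed is the support bookkeeping. First, zero-mass tokens outside $\mathcal{B}_k$ must be excluded from the softmax normalizer, which is why the denominator runs over $\mathcal{B}_k$ and not all of $\Omega$. Second, the per-layer supports $\mathcal{B}_k$ may differ across $k$, so each $z_k$ must be defined on its own support. As a remark, the contrastive reading follows from differentiating in $z_k$: the gradient raises the logits $z_{k,r}$ for $r\in\mathcal{R}$ and pushes down all logits in $\mathcal{B}_k$ in proportion to their softmax weight.
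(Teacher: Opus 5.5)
Your proof is correct and follows essentially the same route as the paper's: you expand the KL divergence layerwise using that $U_{\mathcal{R}}$ is uniform on $\mathcal{R}\subseteq\mathcal{B}_k$, substitute the softmax form on the support, and average over $\mathcal{K}$, with the $k$-independent constant $-\log|\mathcal{R}|$ passing through the average. Your explicit construction $z_{k,b}=\log\bar{A}_{k,b}(\mathcal{S})+c$ is a small extra justification of the softmax representation, which the paper simply posits.
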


\noindent\emph{Proof.}
Fix any $k\in\mathcal{K}$. If any $r\in\mathcal{R}$ has $\bar{A}_{k,r}(\mathcal{S})=0$, then $\mathrm{D}_{\mathrm{KL}}(U_{\mathcal{R}}\|\bar{A}_{k}(\mathcal{S}))=+\infty$, so the finite case requires $\mathcal{R}\subseteq\mathcal{B}_{k}$ for every averaged layer. In this case, since $U_{\mathcal{R}}$ is uniform on $\mathcal{R}$ and zero elsewhere,
\begin{align}
    \mathrm{D}_{\mathrm{KL}}\!\bigl(U_{\mathcal{R}}\|\bar{A}_{k}(\mathcal{S})\bigr)
    &=
    \sum_{b\in\Omega}
    U_{\mathcal{R}}(b)
    \log
    \frac{U_{\mathcal{R}}(b)}{\bar{A}_{k,b}(\mathcal{S})}
    \nonumber\\
    &=
    \frac{1}{|\mathcal{R}|}
    \sum_{r\in\mathcal{R}}
    \log
    \frac{1/|\mathcal{R}|}{\bar{A}_{k,r}(\mathcal{S})}
    \nonumber\\
    &=
    \frac{1}{|\mathcal{R}|}
    \sum_{r\in\mathcal{R}}
    -\log \bar{A}_{k,r}(\mathcal{S})
    -
    \log|\mathcal{R}|.
\end{align}
Substituting the softmax form
$\bar{A}_{k,r}(\mathcal{S})=\exp(z_{k,r})/\sum_{b\in\mathcal{B}_{k}}\exp(z_{k,b})$
into the last line gives Eq.~\eqref{eq:routing-nce} for layer $k$. Averaging the resulting identity over $k\in\mathcal{K}$ gives Eq.~\eqref{eq:routing-contrastive}. The term $\ell_{\mathrm{NCE}}^{(k)}$ is the average softmax loss obtained by selecting each $r\in\mathcal{R}$ as a positive token and normalizing against the attention support. Therefore, minimizing the finite attention-mass loss is equivalent, up to the constant $-\log|\mathcal{R}|$, to minimizing an averaged multi-positive contrastive objective whose negatives are the non-target tokens in each supported LVLM context. \hfill$\square$

\section{Proof of Bridge Path Hijack Effectiveness}
\label{app:shortcut-gap-proof}

\noindent\emph{Proof.}
Fix sample $i$, a protection-induced target set $\mathcal{R}_{i}\in\{\Omega_{\gamma,i},\,\Omega_{\delta,i}\}$, and any layer $k\in\mathcal{K}$. Write $P_{k}=\bar{A}^{\mathrm{p}}_{k,i}(\Omega_y)$ and $Q_{k}=\bar{A}^{\mathrm{c}}_{k,i}(\Omega_y)$, both viewed as probability distributions on the common, zero-padded token universe in which the positions in $\mathcal{R}_{i}$ receive mass $0$ on the clean side. The latter convention follows from the input-dependent definition of $\mathcal{R}_{i}$: trigger tokens are physically absent at clean evaluation, and the perturbation-magnitude ranking that selects $\Omega_{\delta,i}$ in Eq.~\eqref{eq:perturbation-token-set} is degenerate when $\delta_{i}=0$.

\emph{Step 1: Pinsker on the protected attention.}
By definition of $\ell_{\mathrm{mass}}$ in Eq.~\eqref{eq:routing-cost},
\begin{equation}
    \frac{1}{|\mathcal{K}|}\sum_{k\in\mathcal{K}}
    \mathrm{D}_{\mathrm{KL}}\!\bigl(U_{\mathcal{R}_{i}}\,\|\,P_{k}\bigr)
    \;=\;
    \ell_{\mathrm{mass}}(\Omega_{y,i},\mathcal{R}_{i})
    \;\le\;\eta.
\end{equation}
Pinsker's inequality applied per layer gives
$\mathrm{TV}(U_{\mathcal{R}_{i}},P_{k})\le\sqrt{\mathrm{D}_{\mathrm{KL}}(U_{\mathcal{R}_{i}}\|P_{k})/2}$, and Jensen's inequality applied to the concave map $u\mapsto\sqrt{u}$ yields
\begin{equation}
    \frac{1}{|\mathcal{K}|}\sum_{k\in\mathcal{K}}
    \mathrm{TV}\!\bigl(U_{\mathcal{R}_{i}},P_{k}\bigr)
    \;\le\;
    \sqrt{\eta/2}.
\label{eq:appendix-pinsker}
\end{equation}

\emph{Step 2: Target-mass concentration.}
For any probability distributions $P,U$ on a finite set, $\mathrm{TV}(P,U)\ge|P(B)-U(B)|$ for every event $B$. Choosing $B=\mathcal{R}_{i}$ and noting $U_{\mathcal{R}_{i}}(\mathcal{R}_{i})=1$,
\begin{equation}
    \mathrm{TV}\!\bigl(U_{\mathcal{R}_{i}},P_{k}\bigr)
    \;\ge\;
    1-P_{k}(\mathcal{R}_{i}),
\end{equation}
so combining with Eq.~\eqref{eq:appendix-pinsker} yields
\begin{equation}
    \frac{1}{|\mathcal{K}|}\sum_{k\in\mathcal{K}}
    P_{k}(\mathcal{R}_{i})
    \;\ge\;
    1-\sqrt{\eta/2}.
\label{eq:appendix-target-mass}
\end{equation}
That is, on average across layers, at least a $1-\sqrt{\eta/2}$ fraction of the answer-token attention mass on protected inputs is placed on positions in the target set $\mathcal{R}_{i}$.

\emph{Step 3: Forced redistribution at clean evaluation.}
Because $\mathcal{R}_{i}$ is empty under the clean evaluation input, the model assigns no attention mass to the positions in $\mathcal{R}_{i}$ on the zero-padded clean side: $Q_{k}(\mathcal{R}_{i})=0$ for every $k$. Applying the same event-based bound to the pair $(P_{k},Q_{k})$ with $B=\mathcal{R}_{i}$,
\begin{equation}
    \mathrm{TV}(P_{k},Q_{k})
    \;\ge\;
    \bigl|P_{k}(\mathcal{R}_{i})-Q_{k}(\mathcal{R}_{i})\bigr|
    \;=\;
    P_{k}(\mathcal{R}_{i}).
\end{equation}
Averaging over $k\in\mathcal{K}$ and using Eq.~\eqref{eq:appendix-target-mass} gives
\begin{equation}
    \frac{1}{|\mathcal{K}|}\sum_{k\in\mathcal{K}}
    \mathrm{TV}(P_{k},Q_{k})
    \;\ge\;
    \frac{1}{|\mathcal{K}|}\sum_{k\in\mathcal{K}}
    P_{k}(\mathcal{R}_{i})
    \;\ge\;
    1-\sqrt{\eta/2},
\end{equation}
which is Eq.~\eqref{eq:tv-shift-bound}. The argument is uniform in the choice of $\mathcal{R}_{i}\in\{\Omega_{\gamma,i},\,\Omega_{\delta,i}\}$, so the bound holds for both the $\beta_{2}$ and $\beta_{3}$ binding terms of Eq.~\eqref{eq:bph}. \hfill$\square$

\section{Dataset Details}
\label{app:dataset-details}

This section expands the compact dataset summary in Sec.~\ref{sec:experimental-setup} by providing the source-reported objectives, formats, and statistics for each of the six benchmark datasets used in our evaluation, followed by representative protected and clean image-text examples.

\subsection{Benchmark Descriptions}

RealWorldQA~\cite{realworldqa2024} is an image-text benchmark released on Hugging Face with 765 examples in a single test split. Each example contains an image, a natural-language question, and a free-form answer. The dataset card shows questions involving real-world visual attributes such as object counts, relative position, traffic lights, road layout, and spatial relations.

MMStar~\cite{chen2024we} is a vision-indispensable multimodal benchmark. The released dataset contains 1,500 multiple-choice offline evaluation samples selected from 22,401 initial samples through filtering and manual review. The benchmark is organized into six core capabilities and 18 detailed axes, with 250 samples per core capability.

ScienceQA~\cite{lu2022learn} is collected from elementary and high-school science curricula and contains 21,208 multimodal multiple-choice science questions. Among these questions, 10,332 have image context, 10,220 have text context, and 6,532 have both. The dataset spans natural, language, and social sciences and is organized into 26 topics, 127 categories, and 379 skills.

VQA-RAD~\cite{lau2018dataset} is a radiology visual question answering dataset. The cleaned Hugging Face version contains 2,244 image-question-answer triplets over 314 referenced images, after removing duplicate or overlapping triplets. The question set includes both open-ended free-form questions and close-ended yes/no questions; the supported evaluation distinguishes close-ended yes/no accuracy, open-ended accuracy, and overall accuracy.

TextVQA~\cite{singh2019towards} is a visual question answering dataset for natural images containing readable scene text. It contains 45,336 questions on 28,408 images, with free-form answers collected from annotators. The benchmark is designed for questions whose answers require reading the text in the image and reasoning about it in the context of the image and the question.

DocVQA~\cite{mathew2021docvqa} is a visual question answering dataset for document images. It contains 50,000 free-form questions over more than 12,000 document images. The dataset focuses on document-image understanding and includes questions for which document structure and layout can be important for answering.

\section{Training Details}
\label{app:training-details}

This section details the configurations used by both the defender's protection-generation pipeline and the attacker's downstream supervised fine-tuning (SFT). The complete hyperparameter tuning grid spanning the surrogate inner loop, the protection outer loop, and the attacker SFT is summarized in Table~\ref{tab:training-details}. We organize the description into the defender-side protection generation and the attacker-side fine-tuning and evaluation.

\subsection{Defender-Side Protection Generation}

\noindent\textbf{Surrogate Fine-Tuning (Inner Loop).}
Within each outer round of Algorithm~\ref{alg:mmguard-optimization}, the surrogate is adapted to the current protected samples by $Q$ supervised gradient steps. We use LoRA with rank~$8$ for Qwen3-VL-4B-Instruct and rank~$16$ for MiniCPM-V-4, with dropout $0.1$ and target modules \texttt{all} (attention and MLP projections). The inner learning rate is swept over $\{1{\times}10^{-5}, 1{\times}10^{-4}\}$ with per-device batch size~$2$, and the inner-step count $Q$ is swept over $\{0,1,2,3,4,5\}$. Surrogate adapter weights are discarded after each round and never released.

\noindent\textbf{Protection Optimization (Outer Loop).}
\MethodName{} runs for $R$ outer rounds, with $R$ swept over $\{1,3,5,7\}$. The image perturbation $\delta_i$ is updated by PGD under an $\ell_{\infty}$ budget swept over $\epsilon_x \in \{1,2,4,8,16\}/255$, with step size $\alpha{=}1/255$ and the number of PGD iterations per round swept over $\{0,1,2,5,8\}$, followed by projection onto the feasible set $\mathcal{C}^{\delta}_{i}$. The text trigger $\gamma_i$ is updated by gradient-guided screening followed by exact verification: at each round, we rank candidate tokens by the HotFlip score in Eq.~\eqref{eq:hotflip-score}, retain the top-$K$ candidates per slot with $K$ swept over $\{8, 16, 64, 128, 512\}$, and select the substitution that minimizes the joint loss in Eq.~\eqref{eq:hotflip-verify}. The admissible vocabulary $\mathcal{V}_{\mathrm{adm}}$ is restricted to ASCII alphabetic words with numeric and punctuation tokens forbidden and a leading whitespace prepended; the trigger length $\epsilon_t$ is swept over $\{1,3,5,7,9\}$ tokens. The joint loss in Eq.~\eqref{eq:joint-constrained-objective} combines a task-side term ($\lambda_{\mathrm{train}}{=}1.0$) and a binding-disruption term with weight $\lambda_{\mathrm{bind}}$ swept over $\{0.1,0.2,0.5,1.0,2.0\}$. For \MethodName{}-BPH, the answer-to-trigger, trigger-to-image, and answer-to-image paths each receive unit weight; for \MethodName{}-CRS, the routing-shift loss replaces the three structured paths. The binding gradient is restricted to the first $|\mathcal{K}|$ attention layers, with $|\mathcal{K}|$ swept over $\{1,2,3,4,5,6,7\}$. We instantiate the outer objective as either the adversarial variant in Eq.~\eqref{eq:adversarial-loss} or the min-min variant in Eq.~\eqref{eq:binding-loss}. For the black-box scenario, the ensemble surrogates are Qwen3-VL-4B-Instruct and MiniCPM-V-4 with uniform weights $\omega_m{=}1/M$.

\subsection{Attacker-Side Fine-Tuning and Evaluation}

\noindent\textbf{Attacker Fine-Tuning (Downstream SFT).}
After receiving the released dataset, the attacker fine-tunes each target LVLM in a supervised manner. We evaluate six fine-tuning recipes that span the threat-model spectrum: LoRA, QLoRA, DoRA, projector-only, projector\,+\,LLM LoRA, and full fine-tuning. For LoRA-style adapters, we further sweep the rank over $\{8, 16, 64\}$. We additionally sweep the learning rate over $\{1{\times}10^{-5}, 4{\times}10^{-5}, 1{\times}10^{-4}, 3{\times}10^{-4}\}$, the number of training epochs over $\{1, 2, 4, 8, 10\}$, and the per-device batch size over $\{1, 2, 4\}$, with cosine schedule and warmup ratio~$0.1$ held fixed. The training set is the protected version of each dataset's standard split. Additional aggressive-attacker variants---input-modality sanitization (RCP, JPEG, Blur, Punct/Case/WS) and clean-data mixing at ratios $\{0,5,10,20,40,60,80,100\}\%$---are described in Sec.~\ref{sec:practicality} and Appendix~\ref{app:transferability-details}.

\noindent\textbf{Evaluation Protocol.}
Predictions are scored under each benchmark's official answer convention. For multiple-choice datasets (MMStar, ScienceQA), a prediction is correct if the generated answer matches the ground-truth option letter after answer normalization (case folding, whitespace stripping, and prefix removal). For short-answer and free-form datasets (RealWorldQA, VQA-RAD, TextVQA, DocVQA), we use normalized exact match; when a dataset provides answer aliases, any alias is accepted as correct. The per-dataset answer formats are summarized in Appendix~\ref{app:dataset-details}.

\begin{table}[th]
\centering
\caption{Hyperparameter tuning grid of \MethodName{} and the attacker SFT.}
\label{tab:training-details}
\small
\renewcommand{\arraystretch}{1.15}
\setlength{\tabcolsep}{4pt}
\begin{tabular}{@{}>{\raggedright\arraybackslash}p{0.32\linewidth}>{\raggedright\arraybackslash}p{0.60\linewidth}@{}}
\toprule
\textbf{Group / Parameter} & \textbf{Values} \\
\midrule
\multicolumn{2}{@{}l}{\emph{Surrogate fine-tuning (inner loop)}} \\
\midrule
LoRA rank & $8$ / $16$ \\
LoRA dropout / target modules & $0.1$ / \texttt{all} \\
Learning rate / per-device batch & $\{1{\times}10^{-5}, 1{\times}10^{-4}\}$ / $2$ \\
Inner steps $Q$ & $\{0, 1, 2, 3, 4, 5\}$ \\
\midrule
\multicolumn{2}{@{}l}{\emph{Protection optimization (outer loop)}} \\
\midrule
Outer rounds $R$ & $\{1, 3, 5, 7\}$ \\
Image budget $\epsilon_x$ ($\times 1/255$) & $\{1, 2, 4, 8, 16\}$ \\
PGD step $\alpha$ / iterations & $1/255$ / $\{0, 1, 2, 5, 8\}$ \\
Trigger length $\epsilon_t$ (tokens) & $\{1, 3, 5, 7, 9\}$ \\
HotFlip top-$K$ candidates & $\{8, 16, 64, 128, 512\}$ \\
$\lambda_{\mathrm{train}}$ & $1.0$ \\
$\lambda_{\mathrm{bind}}$ & $\{0.1, 0.2, 0.5, 1.0, 2.0\}$ \\
Binding-layer count $|\mathcal{K}|$ & $\{1, 2, 3, 4, 5, 6, 7\}$ \\
Black-box ensemble surrogates & Qwen3-VL-4B-Instruct, MiniCPM-V-4 \\
\midrule
\multicolumn{2}{@{}l}{\emph{Attacker SFT}} \\
\midrule
Fine-tuning recipe & \{LoRA, QLoRA, DoRA, Projector, Projector+LLM LoRA, Full FT\} \\
LoRA rank & $\{8, 16, 64\}$ \\
Learning rate & $\{1{\times}10^{-5}, 4{\times}10^{-5}, 1{\times}10^{-4}, 3{\times}10^{-4}\}$ \\
Scheduler / warmup ratio & cosine / $0.1$ \\
Epochs & $\{1, 2, 4, 8, 10\}$ \\
Batch size (per device) & $\{1, 2, 4\}$ \\
\bottomrule
\end{tabular}
\end{table}

\section{Human Evaluation Rubric}
\label{app:human-evaluation}
To assess whether protected multimodal examples remain suitable for human interpretation and downstream use, we evaluate each image-text pair using a shared rubric across four dimensions: image naturalness, text naturalness, image-text coherence, and human answerability. Each dimension is rated on a three-point ordinal scale, where 1 denotes severe degradation, 2 denotes mild but tolerable degradation, and 3 denotes no noticeable degradation. Together, these dimensions capture perceptual image quality, linguistic fluency, cross-modal semantic alignment, and task-level usability.

\noindent\textbf{Study Setup.}
We conduct the study on $60$ protected image-text pairs in total, comprising $10$ randomly selected samples per dataset across the six benchmarks. Each pair is independently rated by three human experts and three commercial LLM judges: GPT-5.4, Gemini-3.1-Pro-Preview, and Claude-Sonnet-4-6 under the same four-dimensional rubric described below. The values reported in the \emph{Human Study / LLM-as-a-Judge} block of Table~\ref{tab:stealthiness-coherence} are the mean per-dimension score across these six raters per sample. The rater-facing rubric and the LLM-judge prompt are described in the two subsubsections below.

\subsection{Rubric for Human Raters}

Table~\ref{tab:human-evaluation-rubric} formalizes the four dimensions for human raters, listing the precise definition and the three-point criteria for each.

\begin{table*}[th]
\centering
\caption{Human evaluation rubric for protected multimodal examples.}
\label{tab:human-evaluation-rubric}
\small
\renewcommand{\arraystretch}{1.18}
\begin{tabular}{>{\raggedright\arraybackslash}p{0.16\textwidth}
                >{\raggedright\arraybackslash}p{0.38\textwidth}
                >{\raggedright\arraybackslash}p{0.38\textwidth}}
\toprule
\textbf{Dimension} & \textbf{Definition} & \textbf{Criteria} \\
\midrule
\textbf{Image Naturalness} &
The degree to which the protected image preserves perceptual quality and visual plausibility, without noticeable perturbation-induced noise, distortion, color artifacts, texture irregularities, or other abnormal patterns that would make the image appear manipulated or low-quality to a human observer. &
\textbf{1:} The image contains clearly visible artifacts, corruption, or unnatural patterns that interfere with normal perception.\par
\textbf{2:} The image is generally recognizable and usable, but contains mild visual artifacts, slight distortion, or subtle abnormal patterns.\par
\textbf{3:} The image appears visually natural, clean, and indistinguishable from a normal, unmodified image. \\
\midrule
\textbf{Text Naturalness} &
The degree to which the protected text remains fluent, grammatical, readable, and contextually appropriate. Inserted or substituted tokens are penalized according to how much they disrupt readability and plausibility: clearly random, code-like, or semantically incoherent strings indicate low naturalness, while isolated awkward words or recognizable real-word/proper-noun fragments may still be understandable but stylistically unusual. &
\textbf{1:} The text is grammatically flawed, difficult to read, or contains clearly random, code-like, gibberish, or strongly suspicious tokens that noticeably disrupt natural reading.\par
\textbf{2:} The text remains understandable but includes minor awkwardness, unusual wording, single-word substitutions, or inserted recognizable words/proper nouns/phrase-like fragments that are stylistically abnormal but do not prevent comprehension.\par
\textbf{3:} The text is fluent, coherent, and appears naturally written without noticeable suspicious, irrelevant, or stylistically abnormal content. \\
\midrule
\textbf{Image-Text Coherence} &
The degree to which the protected image and protected text remain semantically aligned as a multimodal pair, such that the text refers to visual content present in the image and the image provides sufficient evidence for the textual input or question. &
\textbf{1:} The text is largely inconsistent with the image, refers to absent or contradictory visual content, or forms an incoherent multimodal pair.\par
\textbf{2:} The text is partially related to the image, but some visual references are vague, incomplete, or only weakly supported by the image.\par
\textbf{3:} The text is clearly and semantically consistent with the image, and the image provides appropriate visual evidence for the text or question. \\
\midrule
\textbf{Human Answerability} &
The degree to which a human observer can understand and respond to the intended task from the protected image-text pair, such as visual question answering, document understanding, or multimodal reasoning, without being hindered by perturbation-induced ambiguity or degradation. &
\textbf{1:} The question or task cannot be answered reliably from the protected pair due to visual degradation, textual ambiguity, or image-text mismatch.\par
\textbf{2:} The question or task is answerable, but with noticeable uncertainty caused by mild ambiguity, reduced clarity, or incomplete visual/textual evidence.\par
\textbf{3:} The question or task is clearly answerable from the protected pair, with sufficient visual and textual information for a confident human response.
\\
\bottomrule
\end{tabular}
\end{table*}

\subsection{LLM-as-a-Judge Prompt}

For the automated evaluation, we instantiate the following LLM-as-a-judge prompt for each protected image-text pair. The model is instructed to rate each dimension independently according to the same ordinal rubric and to return only a JSON object for consistent parsing and aggregation. The prompt includes criterion-specific placeholders for the \emph{Definition} and \emph{Criteria} fields in Table~\ref{tab:human-evaluation-rubric}, ensuring that every judge receives the same rubric content.

\begin{figure*}[t]
\begin{tcolorbox}[promptstyle, title={Prompt for LLM-as-a-Judge Evaluation}]
\small
\ttfamily
\raggedright
You are evaluating whether a protected multimodal image-text pair remains natural, semantically coherent, and usable for human interpretation.\par
\medskip
Inputs:\par
Image: \textless image\textgreater\par
Text: \textless text\textgreater\par
\medskip
Please rate the sample on the following four criteria using a 1-3 ordinal scale. Score each criterion independently, respect the ordered meaning of the rubric levels, and do not invent intermediate scores. If the evidence is ambiguous between two adjacent levels, choose the lower-supported score.\par
\medskip
1. Image Naturalness\par
Definition: \textless image\_naturalness\_definition\textgreater\par
Criteria: \textless image\_naturalness\_criteria\textgreater\par
\medskip
2. Text Naturalness\par
Definition: \textless text\_naturalness\_definition\textgreater\par
Criteria: \textless text\_naturalness\_criteria\textgreater\par
\medskip
3. Image-Text Coherence\par
Definition: \textless image\_text\_coherence\_definition\textgreater\par
Criteria: \textless image\_text\_coherence\_criteria\textgreater\par
\medskip
4. Human Answerability\par
Definition: \textless human\_answerability\_definition\textgreater\par
Criteria: \textless human\_answerability\_criteria\textgreater\par
\medskip
\noindent\begin{tabular}[t]{@{}l@{}}
Return JSON only:\\[0.4ex]
\{\\
\ \ {\char34}image\_naturalness{\char34}: 1,\\
\ \ {\char34}text\_naturalness{\char34}: 1,\\
\ \ {\char34}image\_text\_coherence{\char34}: 1,\\
\ \ {\char34}human\_answerability{\char34}: 1,\\
\ \ {\char34}brief\_reason{\char34}: {\char34}...{\char34}\\
\}
\end{tabular}
\end{tcolorbox}
\end{figure*}

\section{Detailed Effectiveness Analysis}
\label{app:effectiveness-details}

This section extends Sec.~\ref{sec:effectiveness} with per-dataset, per-backbone, and per-variant observations. We organize the analysis into two parts: the post-fine-tuning accuracy panel in Fig.~\ref{fig:effectiveness}, examined along the dataset, target, and variant axes; and the training-loss dynamics in Fig.~\ref{fig:train}, examined along the optimization axis.

\subsection{Accuracy Analysis on Protected Data}

\noindent\textbf{Dataset Sensitivity.}
The magnitude of the protection drop varies systematically with the role of the visual modality across datasets. Vision-heavy and OCR-driven benchmarks are most strongly protected on the white-box surrogate Qwen3-VL-4B: TextVQA falls by up to $\sim$$30$ points under the -Max variants (Clean FT $\sim$$87\%\!\to\!\sim$$58\%$), DocVQA by up to $\sim$$15$ points (Clean FT $\sim$$67\%\!\to\!\sim$$52\%$), and MMStar by up to $\sim$$17$ points (Clean FT $\sim$$67\%\!\to\!\sim$$50\%$). These tasks rely on dense visual evidence---scene text, document layout, vision-indispensable reasoning---so disrupting the cross-modal binding directly removes the dominant supervision signal. RealWorldQA exhibits a moderate drop ($9$--$18$ points white-box) on the same surrogate. In contrast, ScienceQA and VQA-RAD, where the language prior alone explains a substantial fraction of the answers, show the smallest absolute drops ($3$--$5$ points), although the protection still lowers accuracy below Clean FT in every cell. This pattern is consistent with the cross-modal binding hypothesis in Sec.~\ref{sec:binding-disruption}: the protection's leverage scales with how much the model must route generation through the perturbed image and inserted trigger.

\noindent\textbf{White-Box vs. Gray-Box Targets.}
Across all six datasets, the surrogate Qwen3-VL-4B receives the largest accuracy reduction, while the gray-box targets retain a smaller but consistent gap to Clean FT. Two effects are noticeable. First, the gap shrinks as the gray-box target diverges from the surrogate. Qwen3-VL-2B and Qwen3-VL-8B (same generation, different scale) preserve $4$--$10$ point drops on the vision-heavy MMStar, TextVQA, and DocVQA, while Qwen2.5-VL-7B (different generation, with a distinct visual encoder, projector, and chat template) drops by only $0$--$5$ points, reflecting a transfer cost when the upstream stack differs from the surrogate. Second, on cells where Clean FT itself approaches the benchmark ceiling (e.g., $\sim$$88\%$ on TextVQA Q3-8B/Q2.5-7B and $\sim$$96\%$ on ScienceQA Q3-8B), the residual headroom is small, and the protection drop saturates near zero; this is a property of the upper reference rather than a transfer failure of the protection. Even on the most distant gray-box target, the four \MethodName{} variants still remain at or below Clean FT across datasets, indicating that the protection survives moderate architecture mismatch within the same model family without invoking the ensemble strategy of Sec.~\ref{sec:ensemble-protection}.

\noindent\textbf{Variant-Level Comparison.}
Three patterns hold consistently across datasets. (\emph{i})~Under the white-box setting, the adversarial -Max variants produce the deepest drops, particularly on OCR-heavy tasks where direct training disruption is hardest for the attacker to absorb (TextVQA white-box: BPH-Max and CRS-Max reach $\sim$$58$ and $\sim$$62$ versus min-min BPH and CRS at $\sim$$78$ and $\sim$$73$). (\emph{ii})~Under the gray-box setting, the min-min variants are marginally more stable, especially \MethodName{}-CRS, whose route-agnostic objective only requires the protected attention pattern to differ from the clean one rather than committing to a fixed bridge, and is therefore less sensitive to architectural differences. (\emph{iii})~BPH and CRS produce comparable protection on average, but BPH benefits from the structural guarantee in Theorem~\ref{thm:effectivenss-BPH} when the surrogate matches the attacker's fusion behavior, while CRS is the safer default when fusion behavior is unknown.

\subsection{Training Dynamics on Protected Data}

\noindent\textbf{Optimization Dynamics.}
The training-loss curves in Fig.~\ref{fig:train} provide a direct mechanistic view of the two regimes. On every dataset, the min-min variants reach a final loss $1$--$2$ orders of magnitude below Clean (e.g., on MMStar, BPH and CRS converge below $10^{-3}$ while Clean settles around $10^{-2}$), confirming that the planted shortcut is not only learnable but \emph{easier} to fit than the genuine image-text-answer association. The adversarial-Max variants instead form a high-loss plateau that never approaches the Clean curve (e.g., on DocVQA, BPH-Max and CRS-Max plateau near $10^{0}$--$10^{1}$, while Clean falls to $\sim$$0.14$). The plateaus are remarkably stable across training steps, which means the attacker cannot escape the disruption by extending the fine-tuning budget; this is a desirable property for a data-side defense, since it removes ``train longer'' as a trivial counter-strategy. The two failure modes, therefore, offer the defender a principled choice: when the attacker is expected to monitor training loss and discard high-loss samples, the min-min variant is preferable because the protected data appears trainable; when the attacker performs unmonitored fine-tuning at scale, the -Max variant is preferable because it directly inflates training cost without relying on shortcut adoption.

\section{Detailed Transferability Analysis}
\label{app:transferability-details}

This section expands Sec.~\ref{sec:transferability} along two axes of the aggressive-attacker model: cross-model transferability against unseen target LVLMs (Table~\ref{tab:blackbox-transfer}) and cross-recipe transferability against varying attacker fine-tuning strategies (Fig.~\ref{fig:finetuning}).

\subsection{Cross-Model Transferability}

\noindent\textbf{Why Single-Modality Baselines Fail at Black-Box Transfer.}
The Image-only and Text-only UE rows of Table~\ref{tab:blackbox-transfer} contain several non-positive drops: e.g., Image protection \emph{improves} attacker accuracy on RealWorldQA/Llama by $4.8$ points, and Text protection improves it by $0.9$ points; on ScienceQA/GLM, Image and Text similarly produce $-1.0$ and $-0.5$ point ``drops.'' These cases are not statistical noise but the predicted consequence of perturbing a single channel: when the attacker's model relies on the unperturbed modality for that dataset, the perturbed modality contributes additional input regularization rather than protection. The naive Multimodal baseline, which simultaneously perturbs image and inserts a text trigger but does \emph{not} disrupt cross-modal binding, mostly closes these negative drops but still produces only $1$--$5$ point reductions, well within the noise floor of pretraining variability. This empirically supports the challenge in the autoregressive LVLM setting, perturbing both modalities is necessary but not sufficient---attention can still route generation through whichever evidence remains semantically informative, so the defender must additionally constrain \emph{how} the model binds the two modalities.

\noindent\textbf{Per-Dataset Transfer Behavior.}
The protection magnitude tracks how strongly each dataset depends on visual evidence. DocVQA produces the largest cross-model drop, reaching $-8.5$ points on GLM and LLaVA, because its document-image inputs carry dense layout and OCR signals that the binding disruption directly removes. MMStar (vision-indispensable reasoning) and VQA-RAD (medical imaging) follow with $4$--$8$ point drops across most targets. ScienceQA exhibits the smallest transfer drops ($1.5$--$7.5$ points), consistent with its strong language prior; the InternVL/ScienceQA cell, where Clean FT already reaches $99.0\%$, leaves little headroom for protection-induced degradation. RealWorldQA and TextVQA fall in between, with TextVQA particularly sensitive when the target has a different vision-encoder family (e.g., $-7.5$ points on Gemma).

\noindent\textbf{Per-Target Transfer Behavior.}
The five targets span maximally different LVLM stacks, yet all four \MethodName{} variants transfer with positive drops in every cell. Two patterns emerge. (\emph{i})~The protection is largest on Gemma (mean drop $\sim$$5.7$ points), GLM ($\sim$$5.5$), Llama ($\sim$$5.1$), and LLaVA ($\sim$$5.0$), which are architecturally distant from both surrogates and offer moderate Clean FT performance and therefore sufficient headroom for attention-route disruption. (\emph{ii})~Drops on InternVL are smaller in absolute terms ($\sim$$3.5$ points on average) because its Clean FT accuracy is already saturated on multiple cells (e.g., $99.0\%$ on ScienceQA, $76.5\%$ on TextVQA). When normalized by the clean-versus-zero-shot gain (i.e., the maximum protection budget), InternVL drop-offs are comparable to those of the other targets, indicating that protection consumes a similar fraction of the attacker's fine-tuning gain across the panel.

\noindent\textbf{Variant Selection at Black-Box.}
No single \MethodName{} variant dominates across the table, but two trends inform variant choice. \MethodName{}-CRS is the most uniformly positive: it produces non-trivial drops in every cell, including the saturated InternVL columns. \MethodName{}-BPH-Max achieves the deepest individual drops on OCR-heavy tasks ($-7.5$ on LLaVA/DocVQA, $-7.6$ on GLM/VQA-RAD), consistent with the training-disruption mechanism being most damaging when the attacker relies most on visual fitting. The two min-min variants (BPH and CRS) are safer when the target architecture is unknown, since they do not rely on the surrogate-attacker fusion match required by the structural BPH bound (Theorem~\ref{thm:effectivenss-BPH}); the two adversarial variants are preferable when the attacker is expected to use full fine-tuning, where shortcut absorption is harder to enforce but training-loss inflation still applies.

\subsection{Cross-Recipe Transferability}

\noindent\textbf{Transferability Across Fine-Tuning Recipes.}
Fig.~\ref{fig:finetuning} stresses the recipe axis of the aggressive-attacker model. Across LoRA-8 to full fine-tuning, both \MethodName{}-BPH and \MethodName{}-CRS maintain lower accuracy than \textsc{Clean FT} across all datasets. Three observations are worth noting. First, the protection gap is largest under low-capacity adapters (projector and projector\,+\,LLM LoRA), where the attacker has limited parameters to absorb the planted shortcut and therefore commits to it during fitting; on TextVQA, projector-only fine-tuning drops from $\sim$$74\%$ (Clean FT) to $\sim$$65\%$ for BPH and $\sim$$69\%$ for CRS, the largest within-dataset gap in the panel. Second, full fine-tuning---the most resource-intensive recipe---narrows but does not close the gap: TextVQA and DocVQA still incur $5$--$15$ point drops, indicating that even with all parameters trainable, the attacker cannot fully separate the planted shortcut from the genuine image-text-answer association in the protected data. Third, QLoRA and DoRA behave similarly to LoRA-8/16/64, suggesting that the protection is robust to the choice of low-rank adapter family rather than tuned to a specific PEFT method. Together with the cross-model results, these observations show that \MethodName{} survives joint variation in the attacker model and attacker recipe, which captures the realistic black-box threat surface.

\section{Detailed Practicality Analysis}
\label{app:practicality-details}

This section expands Sec.~\ref{sec:practicality} with a detailed analysis of the deployment costs incurred by the defender, covering perceptual stealthiness (Table~\ref{tab:stealthiness-coherence}) and computational overhead.

\subsection{Stealthiness and Computational Cost}

\noindent\textbf{Stealthiness Trade-offs.}
Table~\ref{tab:stealthiness-coherence} reports per-modality stealthiness. On the image side, all variants that include image perturbation produce PSNR ${\approx}34.5$~dB, SSIM ${\approx}0.88$, LPIPS ${\approx}0.11$, and image-naturalness $2.52/3$, with no measurable cost for binding-disruption (BPH and CRS match Image-only UE). On the text side, the inserted trigger raises perplexity ($89$--$92$ vs. clean $39$) and lowers text-naturalness ($1.62$--$1.64$ vs. clean $2.86$), reflecting the visible token insertion; however, image-text coherence stays at $2.14/3$ and human answerability at $2.33/3$, both well above the failure threshold of the rubric. The dominant stealthiness cost therefore comes from text insertion rather than from image perturbation or binding-disruption, suggesting that future work on lower-perplexity trigger forms (e.g., paraphrastic insertion) is the most promising direction for tightening the stealthiness-protection trade-off.

\noindent\textbf{Computational Cost Breakdown.}
\label{app:algorithmic-efficiency}
We analyze the computational and memory overhead of \MethodName{} in detail. Let $n$ be the number of protected samples, $M$ the number of surrogate LVLMs, $R$ the number of outer protection rounds, and $Q$ the number of inner adaptation steps used to approximate the attacker-side fine-tuning process. Let $L_x$, $L_t$, $L_\gamma$, and $L_y$ denote the numbers of image tokens, original text tokens, inserted trigger tokens, and response tokens, respectively, and define the total multimodal sequence length as $L=L_x+L_t+L_\gamma+L_y$. For surrogate model $m$, we denote the cost of one forward pass and one backward pass by $C_{\mathrm{fw}}^{(m)}$ and $C_{\mathrm{bw}}^{(m)}$, respectively.

In each outer round, \MethodName{} first performs $Q$ inner gradient steps on every surrogate model to approximate the model state obtained after unauthorized fine-tuning. This stage costs $O\!\left(nQ\sum_{m=1}^{M}(C_{\mathrm{fw}}^{(m)}+C_{\mathrm{bw}}^{(m)})\right)$. The subsequent outer update differentiates the protection objective with respect to the image perturbations and text triggers, incurring one additional forward/backward pass per surrogate, with cost $O\!\left(n\sum_{m=1}^{M}(C_{\mathrm{fw}}^{(m)}+C_{\mathrm{bw}}^{(m)})\right)$. Projection of the image perturbation onto the $\ell_\infty$ budget is linear in the number of pixels and is negligible compared with LVLM backpropagation.

The discrete trigger optimization adds a smaller candidate-search overhead. For each trigger position, the HotFlip-style screening step computes first-order scores over the admissible vocabulary $\mathcal{V}_{\mathrm{adm}}$. With embedding dimension $d$, this step costs $O(L_\gamma|\mathcal{V}_{\mathrm{adm}}|d)$ per sample. \MethodName{} then verifies only the top-$c$ candidates by exact loss evaluation, which costs $O\!\left(L_\gamma c\sum_{m=1}^{M}C_{\mathrm{fw}}^{(m)}\right)$ per sample per outer round. Since both $L_\gamma$ and $c$ are small fixed hyperparameters, this overhead is typically dominated by surrogate LVLM backpropagation.

The cross-modal binding loss reuses attention matrices already produced by the LVLM forward pass. For selected layers $\mathcal{K}$ and heads $\mathcal{H}$, aggregating the attention mass over a multimodal sequence of length $L$ costs $O(|\mathcal{K}||\mathcal{H}|L^2)$ per forward pass. This does not change the asymptotic order of transformer attention, which is already quadratic in $L$. Combining these terms, the overall time complexity is
\begin{equation}
O\Bigg(
Rn
\sum_{m=1}^{M}
\Big[(Q+1)\big(C_{\mathrm{fw}}^{(m)}+C_{\mathrm{bw}}^{(m)}\big)
+L_\gamma c C_{\mathrm{fw}}^{(m)} +|\mathcal{K}||\mathcal{H}|L^2
\Big]
+RnL_\gamma|\mathcal{V}_{\mathrm{adm}}|d
\Bigg).
\label{eq:app-time-complexity}
\end{equation}
When surrogate models have comparable computational cost, i.e., $C_{\mathrm{fw}}^{(m)}+C_{\mathrm{bw}}^{(m)}\approx C_{\mathrm{LVLM}}$, and $L_\gamma$, $c$, $|\mathcal{K}|$, and $|\mathcal{H}|$ are treated as small constants, the dominant term simplifies to
\begin{equation}
    O\!\left(RnM(Q+1)C_{\mathrm{LVLM}}\right).
\label{eq:app-time-simplified}
\end{equation}
Therefore, the runtime scales linearly with the dataset size, the number of surrogate models, the number of outer rounds, and the number of inner adaptation steps.

We next analyze memory usage. The dominant memory cost comes from standard surrogate LVLM training, including model parameters, gradients, optimizer states, and backpropagation activations. Beyond this standard cost, \MethodName{} stores image perturbations, text triggers, and lightweight attention statistics. For images of resolution $H\times W$ with $C$ channels, perturbation storage costs $O(nHWC)$, while trigger storage costs $O(nL_\gamma)$. The HotFlip update requires temporary token-gradient and candidate-score buffers, at most $O(L_\gamma d+L_\gamma|\mathcal{V}_{\mathrm{adm}}|)$ per processed sample. If the binding loss aggregates attention online, it only stores layer-wise attention-mass distributions, requiring $O(|\mathcal{K}|L)$ additional memory per sample rather than retaining all attention tensors. Thus, excluding standard LVLM training memory, the additional protection-specific storage is
\begin{equation}
    O\!\left(nHWC+nL_\gamma+|\mathcal{K}|L\right),
\label{eq:app-memory-complexity}
\end{equation}
up to small temporary buffers for token screening and verification. This overhead is modest compared with the memory required for surrogate LVLM optimization. Finally, \MethodName{} is an offline defender-side preprocessing procedure: once the protected dataset is generated, it requires no auxiliary model, detector, or runtime optimization, and therefore introduces no inference-time overhead for legitimate users or downstream consumers.

\section{Detailed Adaptive Attack Evaluation}
\label{app:adaptive-details}

This section expands Sec.~\ref{sec:adaptive} with a detailed analysis of \MethodName{}'s robustness against adaptive attackers along two axes: input-modality sanitization (Fig.~\ref{fig:transformation}) and partial-coverage data mixing (Fig.~\ref{fig:ratio}).

\subsection{Robustness Against Aggressive Attackers}

\noindent\textbf{Per-Operator Robustness.}
Fig.~\ref{fig:transformation} sweeps ten attacker-side data-preparation choices: no defense (None), three image-side operators (RCP, JPEG, Blur), three text-side operators (punctuation removal, case normalization, whitespace normalization, denoted Punct/Case/WS), and three image-text combinations (RCP+Punct, JPEG+Case, Blur+WS). Three patterns are visible. (\emph{i})~Image-side operators applied to the Clean baseline already cause moderate accuracy loss (e.g., on TextVQA, Clean drops from $\sim$$87$ with no defense to $\sim$$77$ under RCP/JPEG), confirming that these operators are not benign even for clean training data; the relevant comparison is therefore against \emph{Clean under the same operator}, not against unprotected Clean. (\emph{ii})~Under this fair comparison, BPH and CRS still produce $4$--$10$ point drops on TextVQA and DocVQA across every image-side operator, indicating that the protection survives bounded image purification. (\emph{iii})~Text-side operators (Punct, Case, WS) are largely no-ops because the inserted trigger consists of admissible vocabulary tokens whose surface form is preserved by these normalizers; combos are correspondingly close to their image-side parent. The empirical message is that a defense-aware attacker cannot wash out \MethodName{} by stacking standard sanitizers, because the protection lives in the attention route between perturbation and trigger, not in pixel-space or surface-string features that purification removes.

\noindent\textbf{Dosage Curves.}
Fig.~\ref{fig:ratio} sweeps the protection ratio over $\{0,5,10,20,40,60,80,100\}\%$. Three observations are worth noting. (\emph{i})~Vision-heavy datasets (TextVQA, DocVQA) show the steepest dosage response: TextVQA accuracy under BPH falls from $\sim$$75\%$ at the lowest ratio to $\sim$$71\%$ at full coverage, and DocVQA from $\sim$$61\%$ to $\sim$$56\%$, while the Clean Fine-Tuning reference remains flat by construction. (\emph{ii})~Language-prior-heavy datasets (ScienceQA, MMStar, RealWorldQA) show a smaller but consistent decline ($2$--$4$ point gap to Clean Fine-Tuning), with most of the protection effect delivered already at moderate dosages ($20$--$40\%$). (\emph{iii})~The curves are smooth rather than thresholded, indicating that the planted shortcut acts as a graded contamination signal rather than an all-or-nothing backdoor. In practice, a defender who controls only a portion of the attacker's training corpus still obtains useful protection, and a defender who saturates the corpus achieves the largest gap to clean fine-tuning.

\section{Detailed Mechanism Analysis}
\label{app:mechanism-details}

This section expands Sec.~\ref{sec:mechanism} with two complementary mechanism analyses: a quantitative study of component ablations and parameter sensitivity (Fig.~\ref{fig:parameter}), followed by a qualitative visualization of the cross-modal binding routes induced on a representative sample (Fig.~\ref{fig:mechanism}).

\subsection{Ablation and Parameter Sensitivity}

\noindent\textbf{Component Decomposition (Ablation).}
The leftmost panel of Fig.~\ref{fig:parameter} reports clean-test accuracy after attacker fine-tuning under nine conditions; lower numbers indicate stronger protection. The full \MethodName{}-BPH ($\sim$$66$) is the strongest, and removing components produces the following pattern: Image-only UE ($\sim$$71$), Text-only UE ($\sim$$73$), and Multimodal without binding ($\sim$$70$) confirm that the binding-disruption objective is responsible for the $4$--$7$ point gap to BPH that input-space perturbation alone cannot deliver. Within BPH, ablating the image perturbation (\emph{w/o Pert.}, $\sim$$72$) is more harmful than ablating the trigger (\emph{w/o Trig.}, $\sim$$69$), reflecting that perturbation-carrying tokens are the main attention sink that anchors the planted shortcut, whereas the trigger is the discrete switch that activates it. Among the three binding-path ablations, removing the answer-to-trigger and answer-to-perturbation paths costs the most ($\sim$$+2$ points each), while removing the trigger-to-perturbation coupling costs the least ($\sim$$+1$). This ranking matches the structural argument in Theorem~\ref{thm:effectivenss-BPH}: the answer-anchored paths are the ones whose target sets are empty on clean inputs and therefore force the protection-time-versus-clean-time TV shift, while the trigger-to-perturbation coupling stitches the two modalities together but does not directly produce that shift.

\noindent\textbf{Parameter Sensitivity.}
We discuss each of the five parameters in turn.
\emph{(i) Binding-loss weight $\lambda_{\mathrm{bind}}$.} BPH peaks at $\lambda_{\mathrm{bind}}{=}2.0$ ($\sim$$63$) but is non-monotonic earlier (approximately $\{65, 65, 66, 68, 63\}$ over $\lambda_{\mathrm{bind}}{\in}\{0.1, 0.2, 0.5, 1.0, 2.0\}$), reflecting interaction with the training-loss term: at moderate weights, the bridge competes with $\ell_{\mathrm{train}}$, and only at $\lambda_{\mathrm{bind}}{\ge}2.0$ does the bridge dominate. CRS, in contrast, is monotonically improving and reaches $\sim$$54$ at $\lambda_{\mathrm{bind}}{=}2.0$, because its route-agnostic objective never over-prescribes a target and therefore tolerates a heavier weight without sacrificing fitting capacity.
\emph{(ii) Inner-loop steps $Q$.} $Q{=}0$ removes the inner adaptation entirely and gives the weakest protection ($\sim$$73$ for BPH and $\sim$$72$ for CRS), confirming the necessity of the bilevel structure in Eq.~\eqref{eq:joint-constrained-objective}. $Q{=}1$ is the elbow ($\sim$$65$ and $\sim$$64$); larger $Q$ slightly degrades protection because the surrogate over-fits the protected sample during inner adaptation, leaving less signal for the outer perturbation/trigger update.
\emph{(iii) Gradient layer depth $|\mathcal{K}|$.} The protection is strongest at depth~$1$ ($\sim$$64$ for BPH and $\sim$$63$ for CRS) and weakest at depth~$7$ ($\sim$$68$ and $\sim$$67$). This is consistent with the observation that cross-modal binding occurs early in the language stack: deeper layers remix tokens, and constraining attention there does not effectively redirect the answer route. In practice, restricting $\mathcal{K}$ to the first one or two attention layers is strictly preferable along both effectiveness and efficiency axes.
\emph{(iv) Image budget $\epsilon_x$.} Accuracy curves are roughly piecewise: $\epsilon_x{\in}\{1,2\}/255$ delivers little protection ($\sim$$74$--$78$), $\epsilon_x{\in}\{4,8\}/255$ delivers strong protection ($\sim$$64$--$66$), and $\epsilon_x{=}16/255$ delivers the deepest drop but at substantial perceptual cost ($\sim$$46$--$48$ accuracy with visible artifacts). The setting $\epsilon_x{=}8/255$ lies at the elbow of the effectiveness-versus-PSNR/LPIPS curve.
\emph{(v) Text budget $\epsilon_t$.} A single inserted token gives only weak protection ($\sim$$65$ for BPH and $\sim$$61$ for CRS); three to five tokens form a stable plateau ($\sim$$64$--$65$); and seven tokens produce a sharp gain ($\sim$$56$ and $\sim$$46$). The plateau corresponds to the trigger reaching enough capacity to encode the discrete switch reliably, while the seven-token jump corresponds to the trigger acquiring enough redundancy to survive different surrogate tokenizations. The setting $\epsilon_t{=}5$ keeps text-naturalness and image-text-coherence ratings competitive (Table~\ref{tab:stealthiness-coherence}); a defender willing to accept higher perplexity can push $\epsilon_t$ to $7$ for an additional $\sim$$10$ point drop.

\subsection{Cross-Modal Binding Visualization}

\noindent\textbf{Row-by-Row Mechanism Diagnostics.}
Fig.~\ref{fig:mechanism} dissects the protection on the representative sample (\emph{``Is there a dedicated bicycle lane?''}, answer \emph{Yes}) along four rows.

\emph{Row 1 (released image).} The released images for \MethodName{}-BPH and \MethodName{}-CRS are visually indistinguishable from the clean reference and differ only by a short inserted trigger (\emph{`Describe'} for BPH, \emph{`Pent'} for CRS). This is consistent with the perceptual budget $\epsilon_x{=}8/255$ and trigger-length budget $\epsilon_t{=}5$ used by default, and supports the input-side stealthiness numbers in Table~\ref{tab:stealthiness-coherence} on a per-sample basis. Different surrogates also pick different trigger tokens, reflecting that the HotFlip-style search in Eq.~\eqref{eq:hotflip-score}--\eqref{eq:hotflip-verify} optimizes against the local protection objective rather than relying on a fixed trigger string.

\emph{Row 2 (image perturbation $\Omega_{\delta}$).} The cyan boxes mark the perturbation-heavy patches selected by Eq.~\eqref{eq:perturbation-token-set} at perturbation-token ratio $\tau_{\delta}$. The protection signal is concentrated on a structured subset of patches rather than spread uniformly across the image, which is what makes $\Omega_{\delta}$ informative as a binding target: a uniform perturbation would not induce a localized attention sink, whereas the structured pattern provides discrete keys that the binding objective can route attention onto. The two variants produce different $\Omega_{\delta}$ layouts under the same budget, indicating that the location of the perturbation is selected by the joint optimization rather than fixed in advance.

\emph{Row 3 (answer attention map).} On the clean input, answer-token attention concentrates on the lower road and lane region, the genuine visual evidence for the bicycle-lane question, which corresponds to $\Omega_x\!\setminus\!\Omega_{\delta}$ in the notation of Eq.~\eqref{eq:token-union}. Under both \MethodName{}-BPH and \MethodName{}-CRS, the same attention is pulled away from this region and onto off-content positions in the upper sky/streetlight area that coincide exactly with $\Omega_{\delta}$. This is a per-sample visualization of the attention reallocation that the binding objective in Sec.~\ref{sec:binding-disruption} is designed to induce, and it confirms that the protection not only suppresses generalization in aggregate metrics but also rewires the route taken by individual answer tokens.

\emph{Row 4 (quantitative routing).} The leftmost bar chart compares head-averaged attention mass on five token-pair categories: answer-to-trigger, answer-to-perturbation, trigger-to-perturbation, answer-to-non-perturbation-image, and answer-to-text. On clean inputs, $\Omega_{\gamma}$ and $\Omega_{\delta}$ are empty by construction (no insertion, no perturbation-magnitude ranking), so the first three categories are identically zero. Under protection, all three rise to nonzero values, while attention to non-perturbation image tokens is suppressed, and attention to the original text is largely preserved. This decomposition is exactly the reallocation Theorem~\ref{thm:effectivenss-BPH} requires for the clean-time TV shift: protection mass is gained on $\Omega_{\gamma}$ and $\Omega_{\delta}$, while the clean-side semantic route through $\Omega_x\!\setminus\!\Omega_{\delta}$ loses mass, leaving the model with an attention pattern that cannot be reused on clean inputs, where $\Omega_{\gamma}$ and $\Omega_{\delta}$ are absent.

The two-layerwise curves further reveal a complementary specialization between the variants. \MethodName{}-BPH concentrates its answer-to-trigger mass in early-to-mid language-model layers (peaking near layer~$15$), which is consistent with the prescribed answer-trigger bridge in Eq.~\eqref{eq:bph}: the $\beta_2$ term explicitly anchors the answer onto the trigger, and the trigger onto the perturbation, and trigger anchoring is most natural in early layers where text tokens are syntactically grounded. \MethodName{}-CRS, in contrast, produces a substantially larger answer-to-perturbation mass in late vision-fusion layers (peaking near layer~$30$), which is consistent with its route-agnostic objective in Eq.~\eqref{eq:crs}: rather than prescribing a specific path, CRS only requires the protected attention pattern to differ from the clean one, and the model finds the path of least resistance, which in modern LVLMs is the late vision-fusion route through $\Omega_{\delta}$. The two designs therefore induce a comparable distribution-level KL/TV shift through structurally different routes, which is why their effectiveness is similar on average yet they trade off differently against architecturally distant attackers (Sec.~\ref{sec:transferability}).

\subsection{Protection Examples}
\label{app:protection-examples}

Figures~\ref{fig:protection-examples-realworldqa}--\ref{fig:protection-examples-docvqa} provide per-sample qualitative views of \MethodName{} protection across the six benchmark datasets used in our evaluation, complementing the aggregate stealthiness metrics in Table~\ref{tab:stealthiness-coherence} with direct visual evidence at the released-image level.

\begin{figure*}[!htbp]
\centering
\includegraphics[width=\linewidth]{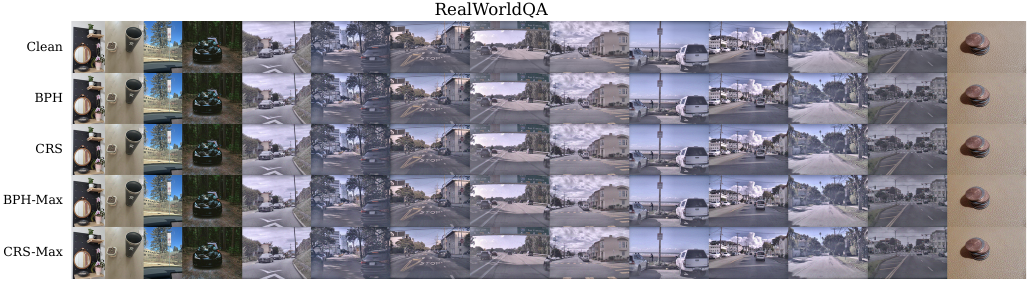}
\caption{Per-sample protection examples on RealWorldQA. Rows (top to bottom): Clean, \MethodName{}-BPH, \MethodName{}-CRS, \MethodName{}-BPH-Max, and \MethodName{}-CRS-Max. Columns: samples selected randomly to span the dataset's aspect-ratio range. Within each row, cells share a fixed pixel height and are concatenated edge-to-edge at native aspect; within each column, the same source image appears under all five protection variants, so perturbation-induced differences can be directly compared with the clean reference. All protected images are generated at the default perceptual budget $\epsilon_x{=}8/255$. Driving and street-scene cues (e.g., lane geometry, vehicles, traffic lights, and signage) are preserved under all four protected variants.}
\label{fig:protection-examples-realworldqa}
\end{figure*}

\begin{figure*}[!htbp]
\centering
\includegraphics[width=\linewidth]{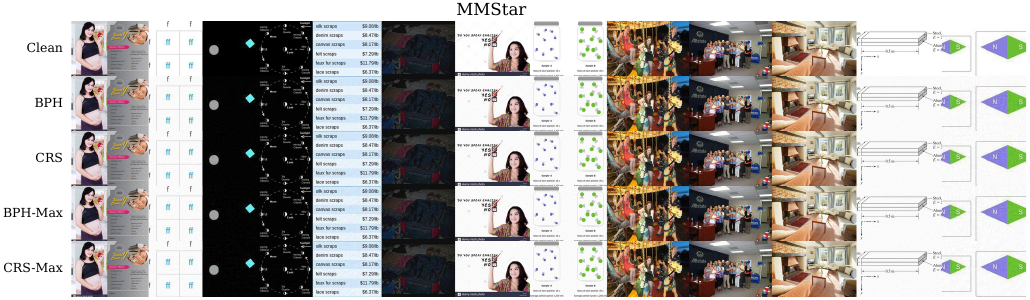}
\caption{Per-sample protection examples on MMStar. Layout follows Fig.~\ref{fig:protection-examples-realworldqa}. The discriminative content required by vision-indispensable reasoning, including chart axes, color coding, and fine geometric structure, remains legible across all four protected variants.}
\label{fig:protection-examples-mmstar}
\end{figure*}

\begin{figure*}[!htbp]
\centering
\includegraphics[width=\linewidth]{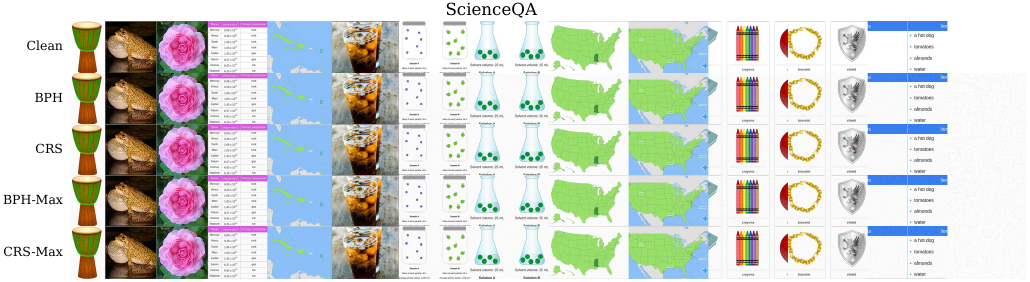}
\caption{Per-sample protection examples on ScienceQA. Layout follows Fig.~\ref{fig:protection-examples-realworldqa}. Diagrammatic primitives that carry the question-relevant signal in scientific figures, such as lines, arrows, axes, and regional shading, are preserved across all four protected variants.}
\label{fig:protection-examples-scienceqa}
\end{figure*}

\begin{figure*}[!htbp]
\centering
\includegraphics[width=\linewidth]{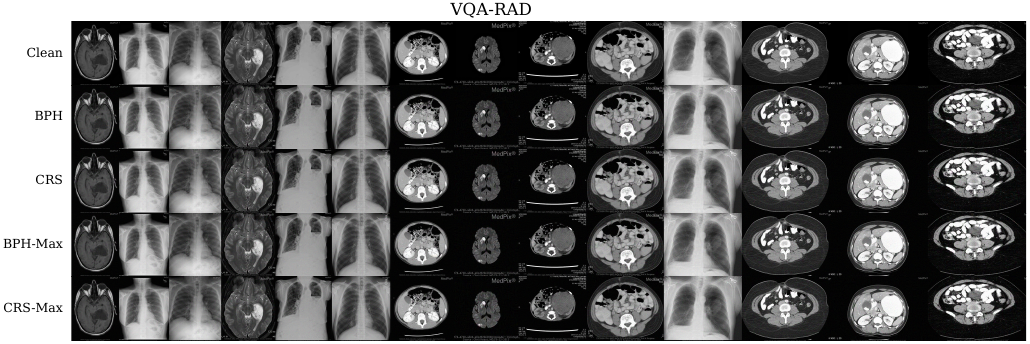}
\caption{Per-sample protection examples on VQA-RAD. Layout follows Fig.~\ref{fig:protection-examples-realworldqa}. On grayscale clinical radiographs, where diagnostic information is concentrated in low-contrast tissue boundaries, the protected images retain the global anatomical layout and the salient contrast structure of the originals.}
\label{fig:protection-examples-vqa-rad}
\end{figure*}

\begin{figure*}[!htbp]
\centering
\includegraphics[width=\linewidth]{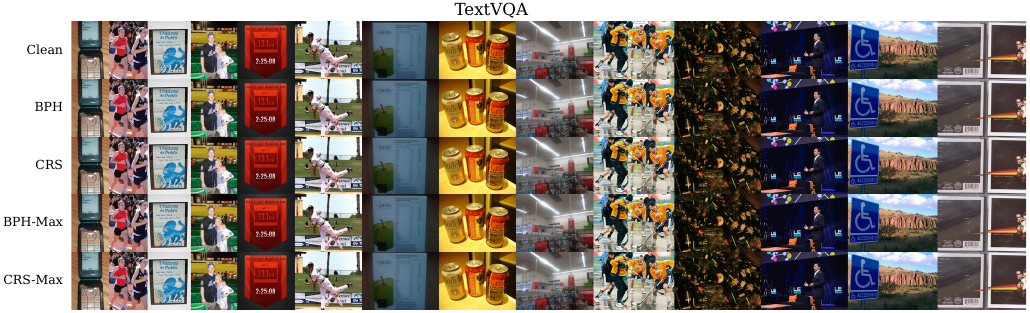}
\caption{Per-sample protection examples on TextVQA. Layout follows Fig.~\ref{fig:protection-examples-realworldqa}. Embedded scene text and signage---the question-relevant signal for OCR reasoning---remain legible across all four protected variants.}
\label{fig:protection-examples-textvqa}
\end{figure*}

\begin{figure*}[!htbp]
\centering
\includegraphics[width=\linewidth]{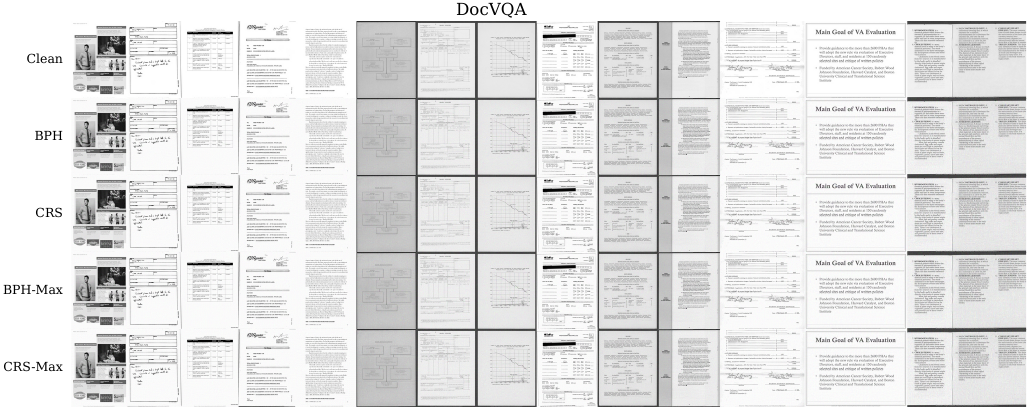}
\caption{Per-sample protection examples on DocVQA. Layout follows Fig.~\ref{fig:protection-examples-realworldqa}. Document layout, including paragraph structure, tables, and form fields, is preserved across all four protected variants.}
\label{fig:protection-examples-docvqa}
\end{figure*}

\noindent\textbf{Visual Faithfulness at the Default Budget.}
Across all six datasets, the four protected rows are difficult to distinguish from the clean reference at the default budget $\epsilon_x{=}8/255$. Residual differences manifest as low-amplitude high-frequency texture rather than as content loss, and the dominant semantic content of each cell is preserved row by row. This per-sample observation is consistent with the aggregate input-side stealthiness numbers reported in Table~\ref{tab:stealthiness-coherence}---PSNR ${\approx}34.5$~dB, SSIM ${\approx}0.88$, LPIPS ${\approx}0.11$, and image-naturalness $2.52/3$---and supports the interpretation in Sec.~\ref{sec:binding-disruption} that the protection signal is encoded as a structured pixel-space residual that is read out only at the attention-binding level of the LVLM, rather than as a visible artifact in the released image.

\noindent\textbf{Domain-Specific Observations.}
The six datasets stress different visual statistics, yet the same $\epsilon_x{=}8/255$ budget remains visually acceptable across all of them. RealWorldQA preserves the driving and street-scene cues that the dataset's spatial-relation questions depend on, including lane geometry, vehicles, traffic lights, and signage. MMStar retains the chart axes, color coding, and fine geometric structure required by its vision-indispensable reasoning items. ScienceQA preserves the diagrammatic primitives such as lines, arrows, axes, and regional shading that carry the question-relevant signal in scientific figures. VQA-RAD is the most demanding domain because the diagnostic content of grayscale radiographs is concentrated in low-contrast tissue boundaries; nevertheless, the protected images retain the global anatomical layout and the salient contrast structure of the originals. TextVQA keeps embedded scene text and signage legible, which matters because the dataset requires reading text in the image. DocVQA preserves document structure, including paragraph layout, tables, and form fields, which carry the question-relevant signal for document understanding. Together, these per-dataset views indicate that a single domain-agnostic perceptual budget suffices for both consumer-grade web imagery and specialized modalities such as clinical radiology and high-resolution document scans, removing the need for dataset-specific tuning of $\epsilon_x$ at deployment.

\noindent\textbf{Variant Comparison.}
Within each dataset, the four protected rows (\MethodName{}-BPH, \MethodName{}-CRS, \MethodName{}-BPH-Max, and \MethodName{}-CRS-Max) are visually comparable to one another. This is expected: all four variants share the same $\ell_{\infty}$ image budget $\epsilon_x{=}8/255$ and the same admissible trigger vocabulary, so the only difference between rows is the surrogate-side objective that drove the optimization: route-prescribed binding (BPH; Sec.~\ref{sec:bph}), route-agnostic divergence (CRS; Sec.~\ref{sec:crs}), or their adversarial -Max counterparts (Sec.~\ref{sec:adversarial-objective}). The choice between min-min and -Max regimes, therefore, reflects an operational decision about the assumed attacker behavior rather than a perceptual trade-off, and the choice between BPH and CRS reflects a structural design decision about whether to commit to a specific cross-modal route. The fact that all four variants produce comparably faithful images under a uniform budget supports the use of \MethodName{} as a single-knob defense at deployment, where the defender selects a variant based on the threat model without retuning the perceptual budget.

\section{Use of Generative AI}
We employed the GPT-5.5 and Opus-4.7 models solely as language-polishing tools to improve clarity and readability. Their role was limited to proofreading, grammatical correction, and stylistic refinement---functions analogous to those provided by traditional grammar checkers and reference dictionaries. These tools did not generate new scientific content or ideas, and their use is consistent with standard practices for manuscript preparation.

\end{document}